\newtheorem{theorem}{Theorem}
\newtheorem{lemma}{Lemma}
\newtheorem{corollary}{Corollary}
\theoremstyle{definition}
\newtheorem{definition}{Definition}
\theoremstyle{remark}
\newtheorem{clm}{Claim}
\newcommand{\PSPACE}{{\sf PSPACE}}
\newcommand{\NPclass}{{\sf NP}}
\newcommand{\Ocomp}{\ensuremath{\mathcal{O}}}
\newcommand{\poly}{\text{\sf poly}}
\newcommand{\sol}[1]{\ensuremath{S(#1)}}
\newcommand{\solution}{\ensuremath{S}}
\newcommand{\algofont}[1]{\textnormal{\selectfont\sffamily#1}}
\newcommand{\CPref}[1]{(\hyperref[cr #1]{CP#1})}
\newcommand{\CCref}[1]{(\hyperref[cc #1]{CC#1})}
\newcommand{\CFLref}[1]{(\hyperref[cfl #1]{CFL#1})}
\newcommand{\rownanie}[1][]{\ensuremath{u_{#1} = v_{#1}}}
\newcommand{\rownaniep}[1][]{{\textnormal{`\rownanie[#1]'}}}
\newcommand{\algtreechildcomp}{\algofont{TreeLeafComp}}
\newcommand{\algchildcompncr}{\algofont{LeafComp}}
\newcommand{\algtreecomp}{\algofont{TreeComp}}
\newcommand{\algtreepaircomp}{\algofont{TreePartitionComp}}
\newcommand{\algpaircompncr}{\algofont{PartitionComp}}
\newcommand{\algtreechaincomp}{\algofont{TreeChainComp}}
\newcommand{\algchaincompncr}{\algofont{ChainComp}}
\newcommand{\algpop}{\algofont{Pop}}
\newcommand{\alggenpop}{\algofont{GenPop}}
\newcommand{\algprefsuff}{\algofont{CutPrefSuff}}
\newcommand{\algsolveeq}{\algofont{ContextEqSat}}
\newcommand{\twodots}{\mathinner{\ldotp\ldotp}}
\DeclareMathOperator{\ar}{{ar}}
\newcommand{\variables}{\ensuremath{\mathcal X }}
\newcommand{\contextvar}{\ensuremath{\mathcal V }}
\title{Context unification is in PSPACE}
\author[A.\ Je\.z]{Artur Je\.z}
\thanks{Research supported by Humboldt Foundation Research Fellowship for Postdoctoral Researchers, 2013--14}
\address{
Max Planck Institute f\"ur Informatik,\\
Campus E1 4,  DE-66123 Saarbr\"ucken, Germany\\
\and
Institute of Computer Science, University of Wroc{\l}aw \\
ul.\ Joliot-Curie~15, 50-383 Wroc{\l}aw, Poland\\
\texttt{aje@cs.uni.wroc.pl}}
\begin{document}

\begin{abstract}
Contexts are terms with one `hole', i.e.\ a place in which we can substitute an argument.
In context unification we are given an equation over terms with variables representing contexts
and ask about the satisfiability of this equation.
Context unification is a natural subvariant of second-order unification, which is undecidable,
and a generalization of word equations, which are decidable, at the same time.
It is the unique problem between those two whose decidability is uncertain
(for already almost two decades).
In this paper we show that the context unification is in \PSPACE.
The result holds under a (usual) assumption that the first-order signature is finite.

This result is obtained by an extension of the recompression technique, recently developed by the author and used in particular
to obtain a new \PSPACE{} algorithm for satisfiability of word equations, to context unification.
The recompression is based on performing simple compression rules (replacing pairs of neighbouring function symbols),
which are (conceptually) applied on the solution of the context equation and modifying the equation in a way so that such
compression steps can be in fact performed directly on the equation, without the knowledge of the actual solution.
\end{abstract}
\keywords{context unification, second order unification, term rewriting}
\maketitle

\section{Introduction}
\subsection{Context unification}

Solving equations, whether they are over groups, fields, semigroups, terms or any other objects,
was always a central point in mathematics and the corresponding decision problems
received a lot of attention in the theoretical computer science community.
Solving equations can be equally seen as unification problem, as we are to unify two objects (with some variables).

Context unification is one of prominent problems of this kind, let us first introduce the objects we will work on.
A ground context is a ground term with exactly one occurrence of a special constant that
represents a missing argument.
Ground contexts can be applied to ground terms, which results in a replacement of
the special constant by the given ground term;
similarly we can define a~composition of two ground contexts, which is again a ground context.
Hence we can built terms using ground contexts, treating them as function symbols of arity $1$.

In context unification we are given a finite signature,
a set of variables (which shall denote ground terms) and a set of so-called context variables (which shall denote ground contexts).
Using those variables we can built terms: we simply treat each context variable as a function symbol of arity one
and each variable as a constant.
A context equation is an equation between two such terms
and a solution of a context equation assigns to each context variable a ground context (over the given input signature)
and to each variable a ground term (over the same signature)
such that both sides of the equation evaluate to the same (ground) term.
The context unification is the decision problem, whether a context equation has a solution
(as in some sense we unify the two contexts on the sides of the equation).

Context unification was introduced by Comon~\cite{Comon98,Comon98a} (who also coined the name)
and independently by Schmidt-Schau\ss~\cite{definition2}.
It found usage in analysis of rewrite systems with membership constraints~\cite{Comon98,Comon98a},
analysis of natural language~\cite{NiehrenPR97Cade,NiehrenPR97},
distributive unification~\cite{Schmidt-Schauss98}, bi-rewriting systems~\cite{birewrite}.

In a broader sense, context unification is a special case of second-order unification,
in which the argument of the second-order variable $X$ can be used unbounded number of times in the substitution term for $X$
(also, there may be many parameters for a second order variable, this is however not an essential difference).
On the other hand, when the underlying signature is restricted to the case when only unary function symbols and constants are allowed,
the context equation is in fact a word equation
(in this well-known problem we are given an equation $u = v$, where $u$ and $v$ are strings of letters and variables and we are
to substitute the variables with strings so that this formal equation is turned into a true equality of strings).
The second order unification is known to be undecidable~\cite{Goldfarb81},
(even in very restricted cases~\cite{Farmer91,Levy96,LevyV00})
however, the proofs do not apply to the case of context unification as they essentially use the fact that the argument may be used many times
in the substitution term.
On the other hand, the satisfiability of word equations is known to be decidable (in \PSPACE~\cite{PlandowskiSTOC})
and up to recently there were essentially only three different algorithms for this problem~\cite{Makanin,PlandowskiICALP,PlandowskiSTOC};
whether these algorithms generalise to context unification remains an open question.
Hence context unification is both upper and lower-bounded by two well-studied problems.

The problem gained considerable attention in the term rewriting community~\cite{RTAproblem90}, mainly for two reasons:
on one hand it is the only known natural problem which is subsumed by second order unification (which is undecidable)
and subsumes word equations (which are decidable) and on the other hand it has several ties to other problems, see Section~
\ref{subsec: connections}.

There was a large body of work focused on context unification and several partial results were obtained:
\begin{itemize}
	\item a fragment in which any occurrence of the same context variable is always applied to the same term is decidable~\cite{Comon98a};
	\item stratified context unification, in which for any occurrence of a fixed second-order variable $X$
the string of second-order variables from this occurrence to the root of the containing term is the same is decidable~\cite{stratifiedcontext}
(this problem is even known to be \NPclass-complete~\cite{stratifiedcontextcomplete} and in fact the result holds even for infinite signatures);
	\item a fragment in which every variable and context variable occurs at most twice
	(such equations are usually called \emph{quadratic}) is decidable~\cite{Levy96};
	\item a fragment when there are only two context variables is decidable~\cite{contexttwovar};
	\item the notion of exponent of periodicity, which is crucial in algorithms for solving word equations,
	is generalised to context unification and so is the exponential bound on it~\cite{contextexponent};
	\item context unification reduces to the fragment in which the signature contains only one binary symbol
	and constants~\cite{reduction};
	\item context unification with one context variable is known to be in \NPclass~\cite{onecontextvariable}.
\end{itemize}
Note that in most cases the corresponding variants of the general second order unification remain undecidable,
which gave hope that context unification is indeed decidable.

In this paper we show that context unification can be nondeterministically decided in space polynomial in $k$ and $n$,
where $n$ is the size of the context equation and $k$ is the maximal arity of function symbols in the signature
(this means that we can consider infinite signatures as long as the maximal arity is bounded;
however, this case in general reduces to the case of finite signatures).

\subsection{Extensions and connections to other problems}
\label{subsec: connections}
The context unification was shown to be equivalent to `equality up to constraint' problem~\cite{NiehrenPR97Cade}
(which is a common generalisation of equality constraints, subtree constraints and one-step rewriting constraints).
In fact one-step rewriting constraints, which is a problem extensively studied on its own,
are equivalent to stratified context unification~\cite{NiehrenPR97}.
It is known that the existential theory of one-step rewriting constraints is undecidable~\cite{Reinen98,Marcinkowski97,Vorobyov97}.
The case of general context unification was improved by Vorobyov, who showed 
that its $\forall$ $\exists^8$-equational theory is $\Pi_1^0$-hard~\cite{Vorobyov98}.

Some fragments of second order unification are known to reduce to context unification:
the \emph{bounded second order unification} we assume that 
the number of appearances of the argument of the second-order variable
in the substitution term is bounded by a constant
(note that it \emph{can be zero} and this is the crucial difference with context unification).
This fragment on one hand easily reduces to context unification and on the other hand it is known to be decidable~\cite{boundedsecondorder}
(in fact its generalisation to higher-order unification is decidable as well~\cite{Schmidt-SchaussS05}
and it is known that bounded second order unification is \NPclass-complete~\cite{stratifiedcontextcomplete}).
In particular, the work presented here imply the results on bounded second order unification.

The context unification can be also extended by allowing some additional constraints on the (context) variables,
a natural one allows the usage of the tree-regular constraints
(i.e.\ we assume that the substitution for the (context) variable comes from a certain regular set of trees).
It is known that such an extension is equivalent to the linear second order unification~\cite{LevyVill00},
defined by Levy~\cite{Levy96}:
in essence, the linear second order unification allows bounding variables on different levels of the function,
which makes direct translations to context unification infeasible,
however, usage of regular constraints gives enough additional power to actually encode such more complicated bounding.

Notice the usage of regular constraints is very popular in case of word equations,
in particular it is used in generalisations of the algorithm for word equation to the group case
and both Makanin's and Plandowski's algorithms can be generalised to word equations with regular constraints~\cite{Schulz90,Diekertfreegroups}.

\subsection{Recompression}
The connection between compression and word equations was first observed and used by Plandowski and Rytter~\cite{PlandowskiICALP},
who showed that each length-minimal solution (of size $N$) of the word equation (of size $n$) has $\poly(n,\log N)$ description (in terms of LZ77).
This connection was exploited more efficiently by Plandowski,
whose \PSPACE{} algorithm works on compressed representation of the word equation
(and uses some finely tuned word factorisations to process this equation).

The recompression method, introduced recently by the author, is based solely of compression:
it performs two simple compression steps
(replacing pairs of letters $ab$ by a new letter $c$, replacing maximal blocks $a^\ell$ by a new letter $a_\ell$)
on a word represented in some implicit way, say as a grammar, compression scheme or even as a solution of a word equation.
In order to make such compression steps applicable, the implicit representation is modified a bit,
for instance in case of word equations a variable $x$ is replaced with $ax$ or $xb$ where $a, b$ are letters.
The intuition behind such a modification is apparent: when we want to replace each $ab$ by $c$ then some of those substrings
appear explicitly in the equation (which are hence easy to replace), some in the substitution for the variables
(which are thus replaced `implicitly' by changing the solution) and some are `crossing' between variables and letters in the equation
(or two variables).
The last case is the only problematic one, and it occurs (for a pair $ab$) when $ax$ occurs in the equation
and the solution for $x$ begins with $b$ (there is also the symmetric case).
In such a case replacing $ab$ is impossible.
To fix this problem, we modify the equation, by replacing $x$ with $bx$, thus `left-popping' the letter out of the variable.
It is easy to show that after left-popping $b$ and `right-popping' $a$ the pair $ab$ no longer has the problematic crossing
occurrences and so it can be replaced with $c$ in the equation.
The crucial observation is that when the compression are done in a proper way,
we can bound the size of the equation, as the number of letters popped into the equation is linear in the number of variables,
while the compression steps guarantee shortening of the equation by a constant factor.
Those two effects cancel each other out and so the equation has linear size.

This method turned out to be applicable to several problems
for implicit representations of words~\cite{fullyNFA,FCPM,wordequations,grammar,onevarlinear}
in particular it is applicable to the word equations problems in which case it yields a much simpler \PSPACE{}
algorithm that checks the satisfiability of the word equation (and returns a finite representation of all solutions)~\cite{wordequations}.
Retracing the compression steps yields an SLP (so a context free grammar generating a unique string)
for the size-minimal solution of the word equation and a simple analysis show that the size of this SLP is $\poly(n,\log N)$,
which yields an alternative proof of the result by Plandowski and Rytter~\cite{PlandowskiICALP} (with slightly better bounds).
Quite surprisingly, this algorithm, when restricted to the case of word equations with only one variable
yields a linear time algorithm~\cite{onevarlinear}, improving the previously known algorithms
(of course some further analysis and usage of tailored data structures is needed).

Applications of compression to fragments of context unification are known~\cite{Schmidt-SchaussS05}
and this paper extends the recompression method to terms in full generality.
In this way solving word equations using recompression~\cite{wordequations} generalises to solving context unification,
which in some sense fulfils the plan of extending the algorithms for word equations to context unification.
In particular, it provides a tree-grammar of size $\poly(n,\log N)$ generating a solution of a context equation.

A word can be seen as a term over signature containing only unary symbols (plus some constant at the bottom) and vice versa.
Thus the two compression operations for word equations generalise naturally to subterms containing only unary function symbols.
Hence the recompression for terms uses the two already mentioned operations
(which are applicable only to function symbols of arity one
\footnote{Note that by work of Levy~\cite{reduction} it is enough to consider context unification with constants and a single binary symbol.
However, our algorithm will transforms the input instance and it can introduce unary symbols.
So even if the input satisfies such a condition,
we cannot guarantee that the current context equation stored by the algorithm also satisfies it.})
but it also introduces another local compression rule, designed specifically for terms:
we replace a term $f(t_1,\ldots,t_{i-1},c,t_{i+1},\ldots,t_m)$ (where $c$ is a constant)
with $f'(t_1,\ldots,t_{i-1},t_{i+1},\ldots,t_m)$, where $f'$ is a fresh function symbol
(i.e.\ not used the context equation, it can however be in $\Sigma$).
While such a compression introduces new function symbols, it does not increase the maximal arity of functions in the signature,
which proves to be important (as the space consumption depends on this maximal arity).
This new rule requires also a generalisation of the variable replacements ($x$ by $ax$ or $xb$):
when $X$ denotes a context, we sometimes replace it with $a(X)$, where $a$ is a unary letter,
or $X(f(x_1,x_2,\ldots,x_{i-1},\Omega,x_i,\ldots,x_m))$, where $x_1,x_2,\ldots,x_m$ are new variables denoting full terms
and $\Omega$ denotes the place in which we apply the argument.

As in the case of word equations, 
the key observation is that while the variable replacements increase the size of the context equation
(proportionally to the number of occurrences of variables in the context equation),
the replacement rules guarantee that the size of the context equation is decreased by a constant factor
(for proper nondeterministic choices).
Those two effects cancel each out and the size of the context equation remains linear.

Note that the generalisation of the recompression to terms is independently considered also
in the joint work of the author and with Lohrey~\cite{treegrammar} on tree grammars
and the presentation of the recompression for terms there is similar to the one given here.

\subsection{Outline}
First, in Section~\ref{sec:trees}, we explain in detail how to generalise the recompression from strings to trees.
Then, in Section~\ref{sec:context unification}, we define formally the context unification and state some of its basic properties.
As a next step we identify the easy cases (so called \emph{non-crossing}),
in which the compression rules can be applied directly to the context equations,
see Section~\ref{sec: noncrossing compression}.
The main technical part of this paper is the explanation how to modify the context equations
so that each context equation is reduced to such an easy case mentioned above,
so that the compression schema can be applied directly to context equations, which is done in Section~\ref{sec:uncrossing};
this technique is called \emph{uncrossing} (and, despite its significance, is very easy).
The last Section~\ref{sec: main algorithm} wraps everything up, by presenting the full statement of the algorithm for context unification
as well as a (relatively simple) analysis of it as well as the polynomial bound on the space consumption.

\section{Compression of trees}
	\label{sec:trees}
In this section we generalise the technique of local compression of trees.
It is independently used in work of Je\.z and Lohrey on tree-grammar compressions
and the presentation there is similar~\cite{treegrammar}.

	\subsection{Labelled trees and their compression}
	\label{subsec: compressing trees}
We deal with rooted, ordered trees, usually denoted with letters $t$ or $s$.
Nodes are labelled with elements from a ranked alphabet $\Gamma$, i.e.\ each letter $a \in \Gamma$ has a fixed arity $\ar(f)$.
A tree is \emph{well-formed} if a node labelled with $f$ has exactly $\ar(f)$ children.\
Unless explicitly written, we consider only well-formed trees,
which can be equivalently seen as \emph{ground terms} over $\Gamma$.

In the following we usually consider a set of labels $\Sigma$ which is finite
but growing during the run of our algorithm.
We call the labels from $\Sigma$ \emph{letters} and
pay particular attention to letters of arity~$1$ (\emph{unary letters})
and to letters of arity $0$ (\emph{constants}).
On the other hand, $\Gamma$ (perhaps with some subscripts) is used for some subalphabet of $\Sigma$,
say, letters used in some particular tree or letters of arity at least $1$.

We want to replace fragments of a tree with new letters. Those fragments are not necessarily well-formed.
Thus we define a \emph{subtree} in a natural way, in general not necessarily well-formed,
but in such a case we explicitly mention it.
A \emph{pattern} is a tree (perhaps not well-formed) in which a node labelled with $f$ has \emph{at most} $\ar(f)$ children;
since we imagine a pattern as a part of a term with some of the subterm removed,
the $0 \leq m \leq \ar(f)$ children of $f$ in the pattern are numbered $1 \leq i_1 < i_2 < \dots < i_m \leq \ar(f)$
to denote which children of $f$ are those in a `real term'.
A subpattern of a tree $t$ is any subtree which is a pattern;
we often consider individual \emph{occurrences} of subpatterns of a tree $t$.
In this terminology, our algorithm will replace occurrences of subpatterns of $t$ in $t$
(the subtree rooted in children which are omitted in the subpattern are attached in the same order,
details are given later).

A \emph{chain} is a pattern that consists only of unary letters.
We consider $2$-chains, so consisting only of two unary letters (usually different)
and $a$-chains, which consists solely of letters $a$.
We treat chains as strings and write them in the string notation and `concatenate' them,
i.e.\ for two chains $s$ and $s'$ the $ss'$ denotes the chain obtained by attaching the top-most
node in $s$ to the bottom node in $s$.
A chain $t'$ that is a subpattern of $t$ is a \emph{chain subpattern} of $t$,
an occurrence of a chain subpattern $a^\ell$ is \emph{$a$-maximal} if it cannot be extended by $a$ nor up nor down.

\subsection{Local compression of trees}
We perform three types of compressions on a tree $t$, all of them replace subpatterns by a single letter:
\begin{description}
	\item[$a$-maximal chain compression]
	For a unary letter $a$ we replace each $a$-maximal chain subpattern $a^\ell$ by a new unary letter $a_\ell$
	(making the father of $a^\ell$ the father of $a_\ell$ and the unique child of $a^\ell$ the unique child of $a_\ell$).

	\item[$a,b$ pair compression]
	For two unary letters $a$ and $b$ we replace each $2$-chain subpattern $ab$ with a new unary letter $c$.

	\item[$(f,i,c)$ leaf compression] For a constant $c$ and letter $f$ of arity $\ar(f) = m \geq i \geq 1$,
	we replace each subtree $f(t_1,t_2,\ldots,t_{i-1},c,t_{i+1},\ldots,t_m)$ with
	$f'(t_1,t_2,\ldots,t_{i-1},t_{i+1},\ldots,t_m)$ where $f'$ is a fresh letter of arity $m-1$ added to $\Sigma$
	(intuitively: the constant $c$ is `absorbed' by its father).
\end{description}

Those operations are applied (in some specific order) on a tree $t$ until it is reduced to a single leaf.

Observe that the $a$-maximal chain compression and $a,b$ pair compression are direct translations
of the operations used in the recompression-based algorithm for word equations~\cite{wordequations}.
To be more precise, both those compressions affect only chains, return chains as well, and when a chain is treated as a string
the result of those compressions corresponds to the result of the corresponding operation on strings.
On the other hand, the leaf compression is a new operation that is designed specifically to deal with trees.

\subsubsection{Parallel compressions}
To make the compression more effective, we apply several compression steps in parallel: consider the $a$-maximal chain compression.
As $a$-maximal and $b$-maximal chain subpatterns do not overlap (it does not matter whether $a = b$ or not),
we can perform $a$-maximal chain compression for all $a \in \Gamma_1$ in parallel
(as long as the letters that are used to replace the chains are not taken from $\Gamma_1$).
We call the resulting procedure \algtreechaincomp$(\Gamma_1,t)$ or simply \emph{chain compression},
when $\Gamma_1$ and $t$ are clear from the context.

\begin{algorithm}[H]
	\caption{\algtreechaincomp$(\Gamma_1,t)$: Compression of chains of letters from $\Gamma_1$ in a tree $t$}
	\label{alg:treechaincomp}
	\begin{algorithmic}[1]
	\Require $\Gamma_1$ contains only unary letters
    \For{each $a \in \Gamma_1$} \Comment{Chain compression}
			\For{each $\ell \in \mathbb N$}
    		\State replace each $a$-maximal occurrence of chain subpattern $a^\ell$ in $t$ by $a_\ell$
			\EndFor
    \EndFor
	\end{algorithmic}
\end{algorithm}
An important property of the chain compression is that afterwards a father and son cannot be labelled with the same unary letter.
\begin{lemma}
	\label{lem: father and son}
After the chain compression performed for all unary letters in the obtained tree
there is no node labelled with the same unary letter as its father.
\end{lemma}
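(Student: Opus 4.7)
The plan is to reason about which unary labels can possibly appear in the compressed tree, and then to argue that any father--child pair of such labels must come from two chains whose base letters differ.

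First I would observe that every node of the original tree whose label lies in $\Gamma_1$ belongs to a unique $a$-maximal chain subpattern for some $a \in \Gamma_1$ (namely the longest chain of identically-labelled unary nodes containing it, which is always well defined and has length at least~$1$). Thus after \algtreechaincomp$(\Gamma_1, t)$ every such node has been absorbed into some freshly introduced letter $a_\ell$. Since the new labels $a_\ell$ are indexed injectively by the pair $(a,\ell) \in \Gamma_1 \times \mathbb{N}_{\geq 1}$, every unary letter appearing in the compressed tree is either one of these fresh $a_\ell$'s or a unary letter that was not in $\Gamma_1$; by hypothesis chain compression is performed for \emph{all} unary letters of the tree, so in fact only labels of the form $a_\ell$ occur as unary labels after compression.

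Next I would pick any node $v$ in the compressed tree bearing a unary label, so that $v$'s label is $a_\ell$ for some $a \in \Gamma_1$ and some $\ell \geq 1$, and $v$ is the replacement node for some $a$-maximal chain $C = a^\ell$ in the original tree. Let $u$ be the parent of the top of $C$ in the original tree (if none exists then $v$ is the root and there is nothing to prove). By $a$-maximality of $C$, the label of $u$ is not $a$. I then split into two cases according to the arity of $u$'s label. If $u$'s label is not unary, then $u$ itself is unaffected by chain compression and becomes the parent of $v$ in the compressed tree; in particular the parent of $v$ carries a non-unary label and so cannot share $v$'s unary label. If instead $u$'s label is some unary $b \in \Gamma_1$ with $b \neq a$, then $u$ lies in a $b$-maximal chain that is compressed to a single node with label $b_{\ell'}$, and this node is the parent of $v$; since the fresh letters are distinguished by their indexing pair and $a \neq b$, we have $b_{\ell'} \neq a_\ell$.

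The argument is essentially a careful bookkeeping, and I do not anticipate a genuine obstacle. The only point requiring a moment's care is ensuring that the fresh letters $a_\ell$ are indeed distinct for different pairs $(a,\ell)$ (which is part of the setup of the compression rule), and that $a$-maximality rules out the case where the original parent of the top of $C$ is also labelled $a$: this is exactly what prevents the undesirable coincidence $a_\ell = a_{\ell'}$ between $v$'s label and its father's label.
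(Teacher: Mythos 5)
Your proof is correct and rests on the same essential observation as the paper's: the maximality of the replaced chains forces any two vertically adjacent chains to have distinct base letters, so the fresh letters $a_\ell$ labelling a node and its father cannot coincide. The paper phrases this as a two-case contradiction (the repeated letter is either original, whence an uncompressed $a$-chain of length $2$ remains, or fresh, whence neither replaced chain was maximal), while you argue directly by classifying the possible labels of the father of the top of a maximal chain; the content is the same.
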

\begin{proof}
Suppose that in $t'$ there is a father and son labelled with the same unary letter $a$.
If $a$ was not introduced by the chain compression, then we arrive at a contradiction, as
those two letters should have been replaced with one unary letter.
If $a$ replaced some chain $b^\ell$ then we again obtain a contradiction,
as $aa$ represents a chain $b^{2\ell}$, so none of the two replaced $b^\ell$ was maximal.
\qedhere.
\end{proof}

We would like to perform also several $(f,i,a)$ compressions, for $f \in \Gamma_{\geq 1}$ and $a \in \Gamma_0$ in parallel.
Clearly for a fixed node labelled with $f \in \Gamma_{\geq 1}$ we can perform several different leaf compressions with its children
(that are from $\Gamma_0$) at the same time,
in this way we could define $(f,i_1,a_1,i_2,a_2,\ldots,i_\ell,a_\ell)$ leaf compression,
which replaces a subpattern $f$ with children $a_j$ at position $i_j$ for $j=1 \ldots, \ell$ with a new letter $f'$ with those children removed.
However, in this way two different $(f,i_1,a_1,i_2,a_2,\ldots,i_\ell,a_\ell)$ and
$(f,i_1',a_1',i_2',a_2',\ldots,i_{\ell'}',a_{\ell'}')$ leaf compressions could be applied to the same node labelled with $f$
and the result depends on the order of those two compressions, in particular they cannot be applied in parallel.
To remedy this, we apply the $(f,i_1,a_1,i_2,a_2,\ldots,i_\ell,a_\ell)$ leaf compression only to nodes that do not have children
labelled with letters from $\Gamma_0$ on positions other than $\{i_1,\ldots,i_\ell\}$.
More formally, the $(f,i_1,a_1,i_2,a_2,\ldots,i_\ell,a_\ell)$ leaf compression replaces each subtree
$f(t_1,t_2,\ldots, t_{i_1-1},a_1,t_{i_1+1},\ldots, t_{i_\ell-1},a_\ell,t_{i_\ell+1},\ldots,t_k)$ with $f'(t_1,\ldots,t_k)$
when $t_i \notin \Gamma_0$ for each $i \notin \{i_1,\ldots,i_\ell\}$.
Clearly such a compression can be also performed for different labels $f$ and different tuples $(i_1,a_1,i_2,a_2,\ldots,i_\ell,a_\ell)$
in parallel, as long as we do not try to compress also the letters introduced during the compression.
This is formalised in the following algorithm \algtreechildcomp$(\Gamma_{\geq 1},\Gamma_0,t)$,
when $\Gamma_{\geq 1}$, $\Gamma_0$, and $t$ are clear form the context, we simply talk about \emph{leaf compression}.

\begin{algorithm}[H]
	\caption{\algtreechildcomp$(\Gamma_{\geq 1},\Gamma_0,t)$: leaf compression for parent nodes in $\Gamma_{\geq 1}$,
	and leaf-children in $\Gamma$ for a tree $t$}
	\label{alg:treechildcomp}
	\begin{algorithmic}[1]
	\Require $\Gamma_{\geq 1}$ contains no constant, $\Gamma_0$ contains only constants
		\For{$f \in \Gamma_{\geq 1}, 0 < i_1 <i_2 < \cdots < i_\ell \leq \ar(f) =: m, (a_1, a_2, \ldots, a_\ell) \in \Gamma_0^\ell$}
						\State replace each subtree $f(t_1,\ldots,t_m)$ s.t.~$t_{i_j} = a_j$ for $1 \leq j \leq \ell$
						and $t_i \notin \Gamma_0$ for $i \not\in \{i_1, \ldots, i_\ell\}$ 
						\par by $f'(t_1,\ldots,t_{i_1-1},t_{i_1+1},\ldots,t_{i_\ell-1},t_{i_\ell+1},\ldots,t_m)$
						\Comment{$f' \notin \Gamma_{\geq 1} \cup \Gamma_0$}
		\EndFor
	 \end{algorithmic}
\end{algorithm}

In case of the pair compression the situation is a bit more difficult:
observe that in a chain subpattern $abc$ we can compress $ab$ or $bc$ but we cannot do both
in parallel (and the outcome depends on the order of the operations).
However, as in the case of word equations~\cite{wordequations},
parallel $a,b$ pair compressions are possible when we take $a$ and $b$ from disjoint subalphabets
$\Gamma_{1}$ and $\Gamma_2$, respectively.
Those subalphabets are usually a partition of letters present in some tree and so we call them a partition,
even if we do not explicitly say of what.
In this case for each unary letter we can tell whether it should be the parent node
or the child node in the compression step and the result does not depend on the order of the considered pairs,
as long as new letters are outside $\Gamma_1 \cup \Gamma_2$.
This is formalised in the below algorithm \algtreepaircomp$(\Gamma_1,\Gamma_2,t)$,
when $t$ is clear form the context (or unimportant) we refer to it simply as $\Gamma_1,\Gamma_2$ compression
(we list the $\Gamma_1$ and $\Gamma_2$ to stress the dependency of the procedure on them).

\begin{algorithm}[H]
	\caption{\algtreepaircomp$(\Gamma_1,\Gamma_2,t)$: $\Gamma_1, \Gamma_2)$-compression for a tree $t$}	\label{alg:treepaircomp}
	\begin{algorithmic}[1]
	\Require $\Gamma_1, \Gamma_2$ contains only unary letters and are disjoint
		\For{$a \in \Gamma_1$ and $b \in \Gamma_2$}
			\State replace each occurrence of a chain subpattern $ab$ with a fresh letter $c$ \Comment{$c \notin \Gamma_1 \cup \Gamma_2$}
		\EndFor
	 \end{algorithmic}
\end{algorithm}

The main property of the the listed procedures is that they shrink the tree by a constant factor;
to be more precise: the chain compression followed by a $\Gamma_1,\Gamma_2$ child compression
(for a proper choice of partition $(\Gamma_1,\Gamma_2)$) followed by a leaf compression 
applied to a tree $t$ results in a tree $t'$ which is smaller by a constant factor than $t$.
This should be intuitively clear: the leaf compression removes all leaves;
this would halve the size of the the tree if there were no nodes with unary labels.
But such nodes are compressed on their own: first by chain compression and then by the $\Gamma_1$, $\Gamma_2$ compression
and it is known from the earlier work on word equations that for appropriate partition of unary letters
a word is shorten by a constant factor by those two operations~\cite{wordequations}.
The details are presented in the algorithm below and the following theorem.

\begin{algorithm}[H]
	\caption{\algtreecomp$(t)$: Compression of a tree $t$}
	\label{alg:treecomp}	
	\begin{algorithmic}[1]
			\State $\Gamma \gets$ unary letters in $t$
			\State $t \gets \algtreechaincomp(\Gamma,t)$
			\State $\Gamma \gets$ unary letters in $t$
			\State guess partition of $\Gamma$ into $\Gamma_1$ and $\Gamma_2$ \label{partition letters}
			\State $t \gets \algtreepaircomp(\Gamma_1,\Gamma_2,t)$
			\State $\Gamma_0 \gets $ constants in $t$, $\Gamma_{\geq 1} \gets$ other letters in $t$
			\State $t \gets \algtreechildcomp(\Gamma_{\geq 1},\Gamma_0,t)$			
	 \end{algorithmic}
\end{algorithm}

\begin{theorem}
\label{thm: tree size drop}
Consider tree $t$ on which we run $\algtreecomp$ and the obtained tree $t'$.
For some partition $\Gamma_1,\Gamma_2$ it holds that $|t'| < \frac{3|t|}{4}$.
\end{theorem}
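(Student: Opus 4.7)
The plan is to bound $|t'|$ by tracking the three compression phases separately and closing the argument with a slight slack coming from the branching structure of $t$. Let $U := n_1(t)$ be the number of unary nodes, $L := n_0(t)$ the number of constants, and $B := n_{\geq 2}(t)$ the number of nodes of arity at least two, so $|t| = U+L+B$. A standard edge-count $U + \sum_{v\in B}\ar(v) = |t|-1$ gives $\sum_{v\in B}\ar(v) = L+B-1$, and in particular $L \geq B+1$. Write $t_1$ for the result of chain compression and $t_2$ for the result of the $(\Gamma_1,\Gamma_2)$ pair compression. Neither chain nor pair compression touches constants or nodes of arity $\geq 2$, so $n_0(t_2) = L$ and $n_{\geq 2}(t_2) = B$. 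In the non-degenerate case $|t|\geq 2$, the root of $t_2$ is not a constant, hence the leaf compression absorbs every constant into its parent, giving $|t'| = n_1(t_2) + B$. Thus the whole problem reduces to bounding $n_1(t_2)$.

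To bound $n_1(t_2)$ I would use a random-partition argument of the kind standard in recompression for word equations. Let $k$ be the number of maximal unary chains in $t_1$ (equivalently, in $t$). The topmost node of each such chain is either the root (if unary) or a unary child of a branching node, so $k \leq 1 + \sum_{v\in B}\ar(v) = L+B$. By Lemma~\ref{lem: father and son}, consecutive labels along any maximal unary chain of $t_1$ differ. Putting each unary letter of $t_1$ independently into $\Gamma_1$ with probability $1/2$ and into $\Gamma_2$ otherwise, an adjacent pair $(a_i,a_{i+1})$ along such a chain (with $a_i\neq a_{i+1}$) is \emph{good}, i.e.\ satisfies $a_i\in\Gamma_1$ and $a_{i+1}\in\Gamma_2$, with probability exactly $1/4$. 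Two good pairs cannot share a node, since that would force some $a_{i+1}$ to lie simultaneously in $\Gamma_1$ and in $\Gamma_2$; hence \algtreepaircomp{} compresses all good pairs in parallel, each removing one unary node. By linearity of expectation, the expected total number of compressions is $\sum_j (L_j-1)/4 = (n_1(t_1) - k)/4$ (summing over chain lengths $L_j$), so some partition $(\Gamma_1,\Gamma_2)$ achieves $n_1(t_2) \leq n_1(t_1) - (n_1(t_1)-k)/4 = (3n_1(t_1)+k)/4 \leq (3U+k)/4$.

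Putting everything together gives $|t'| = n_1(t_2) + B \leq (3U + k + 4B)/4$, while $\tfrac{3}{4}|t| = (3U+3L+3B)/4$, so
\[
\tfrac{3}{4}|t| - |t'| \ \geq\ \frac{3L - B - k}{4}\ \geq\ \frac{2(L-B)}{4}\ \geq\ \frac{1}{2},
\]
using $k \leq L+B$ and $L \geq B+1$; this yields $|t'| < \tfrac{3}{4}|t|$ for the chosen partition. The step I expect to be the main obstacle is the pair-compression bound: I have to argue carefully that good pairs never conflict in the tree (so that the algorithm indeed performs them all in a single pass) and that linearity of expectation applies across different chains even though their colourings are coupled through the shared alphabet, which it does, since linearity does not require independence.
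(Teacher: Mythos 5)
Your proof is correct and follows essentially the same route as the paper: a random-partition argument showing some $\Gamma_1,\Gamma_2$ compresses a quarter of the adjacent (distinct-label, by Lemma~\ref{lem: father and son}) unary pairs, combined with counting nodes by arity and the inequality $n_0 \geq n_{\geq 2}+1$ to absorb the chain-count term. Your bound $k\leq L+B$ on the number of maximal chains is slightly weaker than the paper's $c\leq n_{\geq 2}+\tfrac{n_0}{2}+\tfrac12$, but as you verify it still closes the computation, and your explicit handling of the degenerate single-node case is if anything more careful than the original.
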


As a first technical step we show that when no node is labelled with the same unary letter as its father,
the claim of the theorem holds (note that then the chain compression does not change the tree).
Then it is enough to see that the chain compression cannot increase the size of the tree and that after it
there are no two such nodes, by Lemma~\ref{lem: father and son}.

\begin{lemma}
	\label{lem: tree size drop technical}
Consider a tree $t$ in which no node is labelled with the same unary letter as its father
and a run of \algtreecomp$(t)$. 
Let $t'$ the tree obtained after the $\Gamma_1,\Gamma_2$ compression and $t''$ the tree obtained after the leaf compression.
For some partition $\Gamma_1,\Gamma_2$ it holds that $|t''| < \frac{3|t|}{4}$.
\end{lemma}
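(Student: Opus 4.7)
My plan is to count nodes of $t$ by arity and combine an averaging argument over random partitions with two elementary tree identities. Let $L$, $U$ and $B$ denote the numbers of constants, unary nodes and nodes of arity $\geq 2$ in $t$, so $|t| = L + U + B$. From the handshake $\sum_v \ar(v) = |t| - 1$ together with $\sum_v \ar(v) \geq U + 2B$, I obtain $B \leq L - 1$, a bound I shall reuse.

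I would next track how each of the two subsequent compressions affects the size. The $\Gamma_1, \Gamma_2$ pair compression only touches unary nodes, so if $N$ denotes the number of replaced pairs then $|t'| = |t| - N$, while the constants of $t'$ are exactly those of $t$. Assuming $|t| \geq 2$ (the trivial case is immediate), the root of $t'$ is non-constant, so the leaf compression applies to every non-root constant: its parent has positive arity (hence lies in $\Gamma_{\geq 1}$) and the corresponding tuple absorbs the constant in a single step. Thus $|t''| = |t'| - L = |t| - N - L$, and the desired inequality $|t''| < 3|t|/4$ rewrites as $N > (U + B - 3L)/4$.

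To secure this lower bound on $N$, I would choose $\Gamma_1, \Gamma_2$ by sending each unary letter of $t$ independently to $\Gamma_1$ or $\Gamma_2$ with probability $1/2$. A father-son pair of unary nodes labelled $a, b$ is compressed precisely when $a \in \Gamma_1$ and $b \in \Gamma_2$; by the standing hypothesis $a \neq b$, so the two events are independent and the pair is compressed with probability $1/4$. Writing $C$ for the number of maximal unary chains of $t$, the total number of unary father-son pairs equals $\sum_i (k_i - 1) = U - C$, so the expected value of $N$ is $(U - C)/4$.

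Finally I would bound $C$. Each chain top is either the root of $t$ or a child of a non-unary node; since constants have no children, only branching nodes can be such parents, and their total number of children equals $|t| - 1 - U = L + B - 1$. Hence $C \leq L + B$. Combined with $B \leq L - 1$, this gives $U - C \geq U - L - B > U + B - 3L$ (equivalent to $L > B$). Thus the expectation of $N$ strictly exceeds $(U + B - 3L)/4$, and by averaging some partition realises $N > (U + B - 3L)/4$, yielding $|t''| < 3|t|/4$. The main obstacle is identifying the combinatorial bound $C \leq L + B$; the hypothesis that no node shares its father's label is used precisely to ensure that consecutive chain labels differ, so the per-pair compression probability is $1/4$ rather than $0$.
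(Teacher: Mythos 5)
Your proof is correct and follows essentially the same route as the paper's: count nodes by arity, use $B \leq L-1$, apply the expected-value argument over random partitions (probability $1/4$ per adjacent unary pair since father and son carry distinct labels), bound the number of maximal unary chains, and observe that the leaf compression removes every constant. The one noteworthy difference is your chain bound $C \leq L + B$ obtained by counting chain tops among children of branching nodes, which is actually cleaner than the paper's double-counting estimate $c \leq n_{\geq 2} + n_0/2 + 1/2$ (the latter, as written, can fail when a branching node of arity $\geq 4$ has many unary children, though the paper's argument is easily repaired); beyond that, your inequality $N > (U+B-3L)/4$ is just the paper's final arithmetic repackaged, and note only that the excluded case $|t|=1$ is not merely "immediate" but is where the strict inequality genuinely fails, harmlessly, in both proofs.
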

\begin{proof}
Note that by the assumption that no node is labelled with the same unary label as its father,
the \algtreechaincomp{} returns the same tree $t$.

Let $n_0$, $n_1$ and $n_{\geq 2}$ denote, respectively, the number of leaves, nodes with only one child and other nodes in $t$,
$n_0'$, $n_1'$ and $n_{\geq 2}'$ the number of such nodes in $t'$,
$n_0''$, $n_1''$ and $n_{\geq 2}''$ in $t''$.
We show that $n_0'' + n_1'' + n_{\geq 2}'' < \frac{3}{4}(n_0 + n_1 + n_{\geq 2})$, which shows the claim.

Clearly
\begin{equation}
	\label{eq: leaves and others}
n_{\geq 2} \leq n_0 - 1 \enspace.
\end{equation}
This is easy to show: except for a root, each vertex has a father, i.e.\ there are $n_0+n_1+n_{\geq 2}-1$ sons,
on the other hand, we can estimate the number of sons by calculating the number of children, which yields that
there are at least $n_1 + 2n_{\geq 2}$ sons, hence $2n_{\geq 2} + n_1 \leq n_0 + n_1 + n_{\geq 2} - 1$, yielding~\eqref{eq: leaves and others}.

Concerning the $\Gamma_1,\Gamma_2$ compression, we first need some notions:
we say that an occurrence of a chain subpattern $ab$ in $t$ is \emph{covered} by $\Gamma_1,\Gamma_2$
if $a \in \Gamma_1$ and $b \in \Gamma_2$.
We claim that there is a partition of unary letters in $t$ (i.e.\ $\Gamma$) into $\Gamma_1,\Gamma_2$
such that at least $\frac{n_1 - c}{4}$ occurrences of two letter chain subpatterns in $t$ are covered by $\Gamma_1,\Gamma_2$,
where $c$ is the number of maximal chains in $t$.
\begin{clm}
	\label{clm: random partition}
Let $\Gamma$ be the set of unary letters in $t'$. There is a partition $\Gamma_1,\Gamma_2$ of $\Gamma$
such that at least $\frac{n_1' - c}{4}$ occurrences of $2$-chains subpatterns in $t$ are covered by $\Gamma_1,\Gamma_2$,
where $c$ is the number of maximal chains in $t$.
\end{clm}
\begin{proof}
Consider a random partition of $\Gamma$ into $\Gamma_1$ and $\Gamma_2$,
which assigns each letter from $\Gamma$ to $\Gamma_1$ or $\Gamma_2$ with equal probability.
Then for a fixed occurrence of a two letter chain subpattern $ab$ in $t$ (note that by the assumption $a \neq b$)
with probability $1/4$ this occurrence is covered by the partition: the probability that $a \in \Gamma_1$ is $1/2$,
probability that $b \in \Gamma_2$ is $1/2$ as well and as $a \neq b$ those two events are independent.
Since there are $n_1 - c$ occurrences of $2$-chain subpatterns in total, expected number of occurrences covered by a partition is
$\frac{n_1 - c}{4}$, so for some partition at least $\frac{n_1 - c}{4}$ occurrences are covered.
\qedhere
\end{proof}

We should now estimate $c$---the number of maximal chains in $t$, it is at most 
\begin{equation}
	\label{eq: bounding number of chains}
	c \leq n_{\geq 2} + \frac{n_0}{2} + \frac{1}{2} \enspace.
\end{equation}
Indeed, consider an occurrence of a maximal chain subpattern in $t$.
Then the node above has a label of arity at least $2$ (unless the occurrence of a maximal chain subpattern includes the root)
while the node below has a label of arity other than $1$.
Summing this up by all chains we get $2c \leq 2 n_{\geq 2} + n_0 + 1$ (the `$+1$' is for the possibility that the root has a unary label),
which yields~\eqref{eq: bounding number of chains}.

Thus for the choice of $\Gamma_1,\Gamma_2$ from Claim~\ref{clm: random partition} there are at least $\frac{n_1 - c}{4}$ $2$-chains compressed,
so the tree is smaller by at least $\frac{n_1 - c}{4}$ nodes, hence
the value of $n_0' + n_1' + n_{\geq 2}'$ is at most
\begin{align}
	\notag
	n_0' + n_1' + n_{\geq 2}'
		&\leq 
	n_0 + n_1 + n_{\geq 2} - \frac{n_1 - c}{4} &\text{from Claim~\ref{clm: random partition}}\\
	\notag
		&=
	n_0 + \frac{3n_1}{4} + n_{\geq 2} + \frac{c}{4}&\text{simplification}\\
	\notag
		&\leq
	n_0 + \frac{3n_1}{4} + n_{\geq 2} + \underbrace{\frac{n_{\geq 2}}{4} + \frac{n_0}{8} + \frac{1}{8}}_{c/4} &\text{from~\eqref{eq: bounding number of chains}}\\
	\label{eq: n'' estimation}
		&=
	\frac{9n_0}{8} + \frac{3n_1}{4} + \frac{5n_{\geq 2}}{4} + \frac{1}{8} &\text{simplification} \enspace .
\end{align}

Lastly, the leaf compression simply removes all $n_{\geq 2}'$ leaves, which is exactly $n_{\geq 2}$
(no leaves are created or removed during the previous compression steps).
Hence
\begin{align*}
	n_0'' + n_1'' + n_{\geq 2}''
		&\leq
	n_0' + n_1' + n_{\geq 2}' - n_0 &\text{leaf compression}\\
		&\leq
	\frac{9n_0}{8} + \frac{3n_1}{4} + \frac{5n_{\geq 2}}{4} + \frac{1}{8}  - n_0 &\text{from~\eqref{eq: n'' estimation}}\\
		&\leq
	\frac{n_0}{8} + \frac{3n_1}{4} + \frac{5n_{\geq 2}}{4} + \frac{1}{8} &\text{simplification}\\
		&=
	\frac{3}{4} \Big(n_0 + n_1 + n_{\geq 2}\Big) + \Big(-\frac{5n_0}{8} + \frac{n_{\geq 2}}{2} + \frac{1}{8}\Big) &\text{desimplification}\\
		&<
	\frac{3}{4} \Big(n_0 + n_1 + n_{\geq 2}\Big) &\text{from~\eqref{eq: leaves and others}}
	\enspace .
\end{align*}
\qedhere
\end{proof}

Now the proof of Theorem~\ref{thm: tree size drop} follows.

\begin{proof}[proof of Theorem~\ref{thm: tree size drop}]
The tree obtained from $t$ after the chain compression is clearly at most as large as before,
so it is enough to show that the application of $\Gamma_1$, $\Gamma_2$ compression followed by the leaf compression
reduces the size of tree by at least one fourth.
This is shown in Lemma~\ref{lem: tree size drop technical}, ut with the additional assumption that there is no node
labelled with the same unary letter as its father.
But by Lemma~\ref{lem: father and son} we know that this assumption holds after the chain compression.
Which ends the proof.
\qedhere
\end{proof}

The essential part of the paper is showing, how to modify the equation so that the
compression steps can be performed directly on the context equation.

In the next sections the following observation, which bounds the maximal arity of the letters introduced during the compression steps,
proves useful

\begin{lemma}
	\label{lem: keeping arity low}
If the maximal degree of nodes in $t$ is $k$ then in $t' = \algtreecomp(t)$ the maximal degree of a node is also at most $k$.
\end{lemma}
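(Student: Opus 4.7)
The plan is to inspect each of the three subprocedures invoked by \algtreecomp{} and verify that none of them can introduce a letter whose arity exceeds the maximal arity already present in the tree. Since $t'$ is obtained from $t$ by successively applying these three procedures, and the maximal degree of a node in a well-formed tree equals the arity of its label, this will be enough.

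First I would look at \algtreechaincomp: each new letter $a_\ell$ is declared unary and replaces an $a$-maximal chain subpattern, so no letter of arity greater than $1$ enters the tree; in particular the maximum arity cannot increase. Next I would treat \algtreepaircomp{} identically, since the fresh letter replacing a $2$-chain $ab$ is also unary by construction. Both steps leave the maximal degree bounded by $\max(k,1)=k$ (assuming $t$ is not a single node, the case $k=0$ being trivial).

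The interesting step is \algtreechildcomp: here each fresh letter $f'$ replaces a pattern rooted at some $f\in\Gamma_{\geq 1}$ with $\ell\geq 1$ of its children stripped away, and $f'$ is declared to have arity $\ar(f)-\ell$. Thus $\ar(f') < \ar(f) \leq k$, so the arity of every newly introduced letter is strictly smaller than the arity of some letter already present in the tree; in particular it is at most $k$.

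Combining the three observations, after a complete run of \algtreecomp{} every label in $t'$ either occurred already in $t$ (hence has arity at most $k$) or is one of the freshly introduced letters bounded above, again by $k$. Since $t'$ is well-formed, the degree of each node equals the arity of its label, so the maximal degree in $t'$ is at most $k$. I do not anticipate a serious obstacle: the whole statement is essentially a direct inspection of the three compression rules, and the only thing to be careful about is to confirm that the leaf compression produces a letter of arity $\ar(f)-\ell$ rather than something larger, which is explicit in the definition.
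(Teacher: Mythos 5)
Your proposal is correct and follows essentially the same route as the paper: a case-by-case inspection showing that chain compression and $\Gamma_1,\Gamma_2$ compression only introduce unary letters, while leaf compression only decreases (or preserves) the arity of the parent node. No gaps.
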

\begin{proof}
Observe that the chain compression replaces chain of unary nodes with a single unary node.
Similarly, $\Gamma_1,\Gamma_2$ compression replaces chains of length two with single unary letters.
Lastly, leaf compression can only reduce the arity of a node (or keep it the same).
\end{proof}

\section{Context unification}
\label{sec:context unification}
In this section we (more formally) define the problem of context unification and the notions necessary to state the problem.
The presentation here is based on~\cite{contexttwovar}.

Recall that $\Sigma$ is the set of letters used as labels for nodes in trees.
By $\Omega$ we denote a special constant outside $\Sigma$ (and no letter added to $\Sigma$ may be equal to $\Omega$).
\contextvar{} denotes an infinite set of context variables $X$, $Y$, $Z$, \ldots.
We also use individual variables $x$, $y$, $z$, \ldots taken from \variables.

\begin{definition}[cf.~{\cite[Definition~2.1]{contexttwovar}}]
A \emph{ground context} is a ground ($\Sigma \cup \{ \Omega\}$)-term $t$, where $\ar(\Omega) = 0$,
that has exactly one occurrence of the constant $\Omega$.
The ground context $\Omega$ is the \emph{empty ground context}.
\end{definition}

The intuition of the symbol $\Omega$ is that it is a `hole' and that one should replace this hole
with a ground term to obtain a proper ground term.

Given a ground context $s$ and a ground term/context $t$ we write $st$ for the ground
term/context that is obtained from $s$ when we replace the occurrence of $\Omega$ in $s$ by $t$.
(This form of composition is associative.)
In the same spirit, when $a$ is a unary letter, we usually write $at$ to denote $a(t)$.

\begin{definition}[cf.~{\cite[Definition~2.2]{contexttwovar}}]
The \emph{terms} over $\Sigma$, $\variables$, $\contextvar$ are ground terms with alphabet $\Sigma \cup \variables \cup \contextvar$
in which we extend $\ar$ to $\variables \cup \contextvar$ by $\ar(X) = 1$ and $\ar(x) = 0$ for each $x \in \variables$ and $X \in \contextvar$.

The \emph{context terms} are ground context terms over $\Sigma \cup \variables \cup \contextvar \cup \{ \Omega\}$
with exactly one occurrence of $\Omega$ and $\ar$ extended to $\variables \cup \contextvar$ as above and $\ar(\Omega) = 0$.

A \emph{context equation} is an equation of the form $u = v$ where both $u$ and $v$ are terms.
\end{definition}

We call the letters from $\Sigma$ that occur in a context equation the \emph{explicit letters}
and talk about explicit occurrences of letters in a context equation.

\subsection{Solutions}
We are interested in the solutions of the context equations, i.e.\ substitutions that replace variables
with ground terms and context variables with ground contexts, such that a formal equality $u = v$
is turned into a true equality of ground terms.
More formally:

\begin{definition}[cf.~{\cite[Definition~2.3]{contexttwovar}}]
\label{def: solution}
A \emph{substitution} is a mapping \solution{} that assigns a ground context \sol X to
each context variable $X \in \contextvar$ and a ground term \sol x to each variable $x \in \variables$.
The mapping \solution{} is naturally extended to arbitrary terms as follows:
\begin{itemize}
	\item $\sol a := a$ for each constant $a \in \Sigma$;
	\item $\sol {f(t_1, \ldots, t_n)} := f(\sol{t_1}, \ldots, \sol{t_m})$ for an $m$-ary $f \in \Sigma$;
	\item $\sol {Xt} := \sol X \sol t$ for $X \in \variables$.
\end{itemize}
A substitution \solution{} is a \emph{solution} of the context equation $u = v$ if $\sol u = \sol v$.
A solution \solution{} is \emph{size-minimal}, if for every other solution $\solution'$ it holds that $|\sol u| \leq |\solution'(u)|$.
A solution \solution{} is non-empty if $\sol X \neq \Omega$ for each $X \in \variables$ from the context equation $u = v$.
\end{definition}

In the following, we are interested only in non-empty solutions.
Notice that this is not restricting, as for the input instance we can guess, which context variables have empty substituion in the solution
and remove them.

For a ground term \sol u and an occurrence of a letter $a$ in it we say that this occurrence \emph{comes from} $u$ it is was obtained as \sol a
in Definition~\ref{def: solution} and that it comes from $X$ (or $x$) if it was obtained from \sol X (or \sol x, respectively)
in Definition~\ref{def: solution}.

Let the maximal arity of letters in $\Sigma$ be $k$.
We claim that without loss of generality we may assume that for each $k' \leq k$
the $\Sigma$ contains a letter of arity $k'$; this is formalised in the following lemma.

\begin{lemma}
\label{lem: all arities}
Let $\Sigma$ be a signature (that contains a constant) and $f \in \Sigma$ be the letter of maximal arity ($k$) in $\Sigma$ and $u = v$
be a context equation.
If $u = v$ has a solution $\solution'$ over a signature $\Sigma'$ such that each letter in $\Sigma' \setminus \Sigma$
is not in $u = v$ and has arity at most $k$ then $u = v$ also has a solution, which is at most $k$ times larger.
\end{lemma}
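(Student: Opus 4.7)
The plan is to convert the $\Sigma'$-solution $\solution'$ into a $\Sigma$-solution by a ``padding'' homomorphism that replaces every symbol in $\Sigma' \setminus \Sigma$ by a short $\Sigma$-template of comparable size. Concretely, I would fix a constant $c \in \Sigma$ (which exists by assumption) and the arity-$k$ letter $f$, and define a map $\phi$ on ground $(\Sigma' \cup \{\Omega\})$-terms that is the identity on $\Omega$ and on all symbols in $\Sigma$, that sends each constant $a \in \Sigma' \setminus \Sigma$ to $c$, and that sends $g(t_1,\ldots,t_m)$ for $g \in \Sigma' \setminus \Sigma$ of arity $m \geq 1$ to
\[
f(\phi(t_1),\ldots,\phi(t_m),\underbrace{c,\ldots,c}_{k-m}),
\]
padding with $k-m$ copies of $c$ so that the new root has arity exactly $k$. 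Since $\phi(\Omega) = \Omega$, an easy structural induction gives $\phi(st) = \phi(s)\phi(t)$, so $\phi$ respects context composition and maps ground contexts to ground contexts.

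Next I would define the new substitution by $\solution(X) := \phi(\solution'(X))$ for each $X \in \contextvar$ and $\solution(x) := \phi(\solution'(x))$ for each $x \in \variables$. By hypothesis, the explicit letters in $u$ and $v$ all lie in $\Sigma$, hence $\phi$ acts as the identity on them; combined with the compositional property just noted, a straightforward structural induction on any term $t$ over $\Sigma \cup \variables \cup \contextvar$ yields $\solution(t) = \phi(\solution'(t))$. Applying this to both sides of the equation gives
\[
\solution(u) \;=\; \phi(\solution'(u)) \;=\; \phi(\solution'(v)) \;=\; \solution(v),
\]
so $\solution$ is indeed a solution of $u = v$ over $\Sigma$. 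Non-emptiness transfers automatically: if $\solution'(X) \neq \Omega$, then $\solution'(X)$ has some non-$\Omega$ root symbol, which $\phi$ turns into either the same symbol (if it is in $\Sigma$) or into $f$, but never into $\Omega$.

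Finally, for the size bound, I would count nodes per node of the original tree. A node labelled in $\Sigma$ contributes exactly one node to $\phi(\solution'(u))$; a constant $a \in \Sigma' \setminus \Sigma$ is replaced by the single constant $c$, also one node; and a node labelled by $g \in \Sigma' \setminus \Sigma$ of arity $m \geq 1$ contributes the new $f$-node together with its $k-m$ padding constants, that is $1 + (k-m) \leq k$ nodes (the original children are accounted for by the recursion). Summing over all nodes gives $|\solution(u)| \leq k\cdot|\solution'(u)|$, which is exactly the stated bound.

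I do not expect any serious obstacle: the only subtle point is verifying that $\phi$ genuinely commutes with context composition and preserves the fact that a ground context contains exactly one occurrence of $\Omega$, but both reduce to the conventions $\phi(\Omega) = \Omega$ and to the observation that the padding constants contain no $\Omega$, and to the fact that the explicit letters of $u = v$ are disjoint from $\Sigma' \setminus \Sigma$ (so $\phi$ commutes with evaluating $u$ and $v$ under the two substitutions).
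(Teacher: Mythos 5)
Your proposal is correct and follows essentially the same route as the paper: the paper also replaces each occurrence of a foreign symbol $h$ of arity $m$ by $f(\ldots,a,\ldots,a)$ padded with $k-m$ copies of a constant from $\Sigma$, iterating over the letters $h$ rather than packaging the replacement as a single homomorphism. Your explicit treatment of foreign constants (mapping them to a single constant $c$ rather than to $f(a,\ldots,a)$) is a minor refinement that makes the stated factor-$k$ size bound come out cleanly, but the underlying argument is the same.
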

Note that if a signature does not contain a constant then no terms can be build using it.
\begin{proof}
Let $f$ be a letter of arity $k$ in $\Sigma$ and $a$ any constant in $\Sigma$.
Let $h$ be any letter used by $\solution'$ which is neither in $u = v$ nor in $\Sigma$, let $m = \ar(h)$.
We change $\solution'$ by replacing each $h(t_1,t_2,\ldots,t_m)$ by $f(t_1,t_2,\ldots,t_m,\underbrace{a, a, \ldots, a}_{k - m \text{ times}})$.
It is easy to see that the new substitution is also a solution: as $h$ is not used in the equation,
each of its occurrences in \sol u or \sol v comes from \sol X or \sol x and we replace each such occurrence with
$f(\ldots \underbrace{a, a, \ldots, a}_{k - m \text{ times}})$.
Iterating over all $h$ yields the claim.
\qedhere
\end{proof}
In the following we use the Lemma~\ref{lem: all arities} implicitly --- we always assume that $\Sigma$ contains
letters of arity $0$, $1$, \ldots, $k$.
Note that this assumption may influence the size of the size-minimal solution (as it may decrease sizes of some solutions),
we disregard this small effect and assume that the signature preprocessing is done before any algorithm is run.

\subsection{Properties of solutions}
In case of word equations, size-minimal solutions are considered mainly because one can bound the \emph{exponent of periodicity} for them,
below we recall a known similar fact for context equations.

\begin{lemma}[Exponent of periodicity bound~\cite{contextexponent}]
	\label{lem: exponent of periodicity}
Let \solution{} be a size-minimal solution of a context equation $u = v$. Suppose that \sol X (or \sol x) can be written as $t s^m t'$,
where $t, s, t'$ are ground context terms (or $t'$ is a ground term, respectively).
Then $m = 2^{\Ocomp(|u|+|v|)}$.
\end{lemma}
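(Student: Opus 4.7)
The plan is to adapt the classical Ko\'{s}cielski--Pacholski exponent-of-periodicity argument from word equations to the tree setting; this is essentially the route taken in~\cite{contextexponent}. I would fix a size-minimal solution $\solution$ and suppose, for contradiction, that some $\sol X$ (or $\sol x$) contains a subcontext of the form $s^m$ with $s$ a primitive ground context of arity $1$ and $m$ much larger than $2^{c(|u|+|v|)}$ for a suitable absolute constant $c$. Since $s$ is a context, the composition $s^m$ lies along a single root-to-leaf path of $\sol u = \sol v$, which makes the ambient structure one-dimensional and allows the word-equation argument to transfer.

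Next I would identify in the common ground term $\sol u = \sol v$ all maximal \emph{periodic chain regions}, i.e.\ maximal occurrences of the form $s_i^{m_i}$ with $s_i$ a primitive ground context. Using Fine--Wilf one checks that the number $k$ of such regions whose endpoints are pinned by boundaries between explicit pieces and occurrences of variables in the equation is $\Ocomp(|u|+|v|)$. Reading the equality $\sol u = \sol v$ position by position, each matching of two chain regions imposes a linear constraint on the $m_i$'s: either the bases $s_i$ and $s_j$ coincide, forcing an integer-linear relation with small coefficients among the exponents, or they differ, in which case Fine--Wilf bounds their overlap polynomially in $|u|+|v|$. In total I obtain a system $A\mathbf m = \mathbf b$ over $\mathbb Z$ with $k = \Ocomp(|u|+|v|)$ unknowns and every entry of $A$ and $\mathbf b$ of magnitude $\Ocomp(|u|+|v|)$.

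By size-minimality the tuple $(m_1,\ldots,m_k)$ must be a componentwise-minimal positive integer solution of this system: given any other non-negative integer solution $\mathbf m' \leq \mathbf m$ with $\mathbf m' \neq \mathbf m$, shrinking every chain $s_i^{m_i}$ to $s_i^{m_i'}$ inside the substitutions for the affected (context) variables yields a substitution $\solution'$ that still solves $u = v$ (because the linear constraints remain satisfied) but has $|\solution'(u)| < |\sol u|$, contradicting minimality. A standard Cramer-type bound on minimal non-negative integer solutions of such linear systems then gives $m_i \leq 2^{\Ocomp(k + \log(|u|+|v|))} = 2^{\Ocomp(|u|+|v|)}$, as claimed.

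The main obstacle is the second paragraph: in a tree, a maximal chain along one root-to-leaf path may be crossed by branching nodes belonging to contexts \emph{outside} the period, so the bookkeeping that identifies $\Ocomp(|u|+|v|)$ genuinely independent unknowns and extracts the correct linear system is more delicate than in the word case. Once the system has the claimed shape, however, the remainder of the argument is purely linear-algebraic and mechanical.
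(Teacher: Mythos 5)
First, a point of orientation: the paper does not prove this lemma at all --- it is imported verbatim from Schmidt-Schau\ss{} and Schulz~\cite{contextexponent}, and the surrounding text explicitly notes that only the special case where $s$ is a single unary letter is ever used, in which case the argument degenerates to the word-equation bound. So your reconstruction is competing with a citation, not with an in-paper argument; judged against the cited source, your outline does follow the right strategy (periodic regions along context spines, a small linear Diophantine system, minimality of the exponent vector, a bound on minimal solutions of the system).

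That said, two genuine gaps remain. The first you name yourself: the entire content of~\cite{contextexponent} is precisely the ``delicate bookkeeping'' of your second paragraph --- showing that in a term the relevant periodic spines can be cut into $\Ocomp(|u|+|v|)$ regions pinned by occurrences of variables and explicit letters, and that matching them really does yield a linear system all of whose non-negative solutions correspond to solutions of the context equation (so that shrinking the exponents preserves solvability). Declaring this step ``more delicate than in the word case'' and deferring it is not a proof; it \emph{is} the theorem. The second gap is quantitative: a generic Cramer/Pottier-type bound on minimal non-negative solutions of $A\mathbf m=\mathbf b$ with $k=\Ocomp(n)$ unknowns and entries of magnitude $\Ocomp(n)$ (where $n = |u|+|v|$) gives $2^{\Ocomp(k\log(kn))}=2^{\Ocomp(n\log n)}$, not the $2^{\Ocomp(k+\log n)}=2^{\Ocomp(n)}$ you assert. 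Obtaining the single-exponential bound is exactly the refinement of Ko\'scielski and Pacholski~\cite{Koscielski}, which exploits the special structure of the system rather than a black-box determinant estimate. Since this paper only ever invokes the lemma for $s$ a single unary letter, a cleaner route for present purposes is to observe that the periodic chain then lives in a purely unary portion of the solution and to reduce directly to the word-equation statement, citing~\cite{Koscielski} or~\cite{wordequations} for the bound.
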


We use Lemma~\ref{lem: exponent of periodicity} only for the case when $s$ is a unary letter, for which the proof simplifies significantly
and is essentially the same as in the case of word equations~\cite{wordequations} (which is a simplification of the general
bound on the exponent of periodicity by Ko\'scielski and Pacholski~\cite{Koscielski}).

Furthermore in case of word equation the minimality of solution is used also for another purpose:
whenever a letter $a$ occurs in the minimal solution, it needs to occur also in the equation~\cite{PlandowskiICALP}.
\begin{lemma}[\cite{PlandowskiICALP}]
\label{lem: over a cut}
Let $\solution$ be a length minimal solution of a word equation $u = v$.
If a letter $a$ occurs in \sol u then it occurs also in $u$ or $v$.
\end{lemma}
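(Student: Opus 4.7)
The plan is to argue by contradiction: assume that some letter $a$ occurs in $\sol u$ but does \emph{not} occur explicitly in $u$ nor in $v$, and then produce a strictly shorter solution, contradicting size-minimality.

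First I would classify where the occurrences of $a$ in $\sol u$ can come from. By inspecting Definition~\ref{def: solution} specialised to the word case, each letter of $\sol u$ is either one that comes from $u$ (i.e.\ an explicit letter of $u$) or one that comes from the image $\sol x$ of some variable $x$ occurring in $u$. Under the assumption that $a$ is not among the explicit letters of $u$, every occurrence of $a$ in $\sol u$ must originate from some $\sol x$; in particular $a$ appears in at least one $\sol x$.

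Next I would define a new substitution $\solution'$ by letting $\solution'(x)$ be the string obtained from $\sol x$ by deleting every occurrence of $a$ (and leaving all other variables untouched, which amounts to the same recipe since $a$ may or may not occur in their images). I then need to check that $\solution'$ is still a solution: since $a$ is absent from the explicit letters of both $u$ and $v$, applying $\solution'$ to $u$ (respectively $v$) yields exactly the string obtained from $\sol u$ (respectively $\sol v$) by deleting every $a$. Because $\sol u = \sol v$, deleting all $a$'s from both sides preserves the equality, so $\solution'(u) = \solution'(v)$.

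Finally, because at least one $a$ appears in $\sol u$, the deletion strictly shortens the right-hand side, giving $|\solution'(u)| < |\sol u|$, which contradicts the assumption that $\solution$ is size-minimal. The only delicate point is the bookkeeping for the origin of each letter in $\sol u$, but this is immediate from the definition of the substitution as a homomorphism, so the argument really is just a one-line size drop once that observation is in place.
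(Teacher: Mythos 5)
Your proof is correct and is the standard deletion argument: since the paper only cites this lemma from Plandowski--Rytter, the closest in-paper comparison is the second claim of Lemma~\ref{lem: almost over a cut}, whose proof uses exactly your idea of erasing every occurrence of the letter $a$ from the variable images and observing that, because $a$ is absent from the explicit letters of $u$ and $v$, both sides are transformed by the same erasing homomorphism. The strict length drop then contradicts size-minimality, just as you say.
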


The property from Lemma~\ref{lem: over a cut}
is very useful, as we can still deduce the set of letters used
by minimal solutions simply by looking at the equation,
in particular we can restrict ourselves to $\Sigma$ such that $|\Sigma| \leq |u| + |v|$.

However, this is \emph{not} the case of the context-equations: an equation $X(a)=Y(b)$ over a signature $\{f,a,b \}$,
with $f$ being binary and $a$, $b$ being constants, has a solution (which is easily seen to be size-minimal) $\sol X = f(\Omega,b)$
and $\sol Y = f(a,\Omega)$ and in fact each solution needs to use $f$, which does not occur in the context equation.
Still, a very similar property holds for context equations:
We say that for solutions \solution{} and $\solution'$ of \rownanie{} the $\solution'$  is a \emph{simpler equivalent} of \solution{}
if $\solution'$ is obtained by applying a letter homomorphism on top of \solution{}
(i.e.\ we exchange each letter $a$ in \sol X and \sol x by some fixed $h(a)$ of the same arity).
Then following simple lemma gives the a relatively close approximation of Lemma~\ref{lem: over a cut} for the case of context equations:
\begin{lemma}
	\label{lem: almost over a cut}
Consider a context equation $u = v$ over a signature $\Sigma$, such that the maximal arity of letters in $\Sigma$ is $k$.
Then for every solution \solution{} there is a simpler equivalent $\solution'$
such that for each $k' = 0, 1, 2, \ldots, k$ it uses at most one letter of arity $k'$ that is not used in \rownanie.

Furthermore, for each solution \solution{} there exists a solution $\solution'$ that uses only unary letters that are present in $u = v$
and $|\solution'(u)| \leq |\sol u|$.
\end{lemma}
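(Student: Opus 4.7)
The plan is to prove each of the two assertions by exhibiting an explicit transformation of \solution{} and checking that it commutes with substitution, so that the resulting map is again a solution of \rownanie.

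For the first assertion I would construct a letter-to-letter, arity-preserving homomorphism $h \colon \Sigma \to \Sigma$ that is the identity on every letter appearing in $u=v$. Using Lemma~\ref{lem: all arities}, for each $k' \in \{0,1,\ldots,k\}$ the signature $\Sigma$ contains a letter of arity $k'$; pick a representative $g_{k'}$ of arity $k'$, preferring one that already appears in $u = v$ whenever possible. Set $h(a) = a$ for every letter of $u = v$ and $h(b) = g_{k'}$ for every other letter $b$ of arity $k'$. Extend $h$ to ground (context) terms component-wise and put $\solution'(X) := h(\sol X)$ and $\solution'(x) := h(\sol x)$. Because $h$ is arity-preserving and fixes all letters of $u$ and $v$, a direct induction on the structure of $u$ shows $\solution'(u) = h(\sol u)$, and likewise for $v$, so $\solution'$ is a solution; by construction, for each $k'$ at most the single letter $g_{k'}$ of arity $k'$ outside of $u = v$ is used.

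For the second assertion, let $\Sigma_1$ be the set of unary letters explicitly occurring in $u = v$. Define a deletion map $h$ on ground (context) terms by $h(a) = a$ for constants, $h(\Omega) = \Omega$, $h(f(t_1,\ldots,t_m)) = f(h(t_1),\ldots,h(t_m))$ for $f$ of arity $m \geq 2$, $h(a(t)) = a(h(t))$ for a unary $a \in \Sigma_1$, and $h(a(t)) = h(t)$ for a unary $a \notin \Sigma_1$. Set $\solution'(X) := h(\sol X)$ and $\solution'(x) := h(\sol x)$. The key auxiliary step is to verify that $h$ commutes with ground-context composition, i.e.\ $h(st) = h(s)h(t)$ for a ground context $s$ and a ground term (or context) $t$; this follows by a straightforward induction on $s$, the only interesting case being $s = a(s')$ with $a \notin \Sigma_1$, where both sides reduce to $h(s')h(t)$. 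Granting this, another induction on the structure of $u$ yields $\solution'(u) = h(\sol u)$ — precisely the case $u = X(u')$ invokes the composition identity — hence $\solution'$ is a solution of $u = v$; the bound $|\solution'(u)| \leq |\sol u|$ is immediate because $h$ only erases nodes.

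The main technical point is simply bookkeeping: one must keep the term-level homomorphism and the induced transformation of the solution cleanly separated, and in the second part verify the composition identity for $h$ before running the induction on the equation side. Neither part requires any deeper combinatorics (such as the exponent-of-periodicity bound of Lemma~\ref{lem: exponent of periodicity}), since we never need to exhibit a solution that is \emph{smaller} — only one that reuses fewer symbols.
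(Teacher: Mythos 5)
Your proposal is correct and follows essentially the same route as the paper: for the first claim, a letter-to-letter arity-preserving homomorphism that fixes all letters of $u=v$ and collapses the remaining letters of each arity onto a single representative; for the second, a deletion map that erases unary letters absent from the equation, with the observation that the map commutes with substitution so the equality $\solution'(u)=\solution'(v)$ is inherited from $\sol u = \sol v$. The only differences are presentational (you build one global homomorphism and delete all offending unary letters at once, where the paper iterates arity by arity and letter by letter), and your explicit verification that the deletion map commutes with ground-context composition is a welcome bit of rigor the paper leaves implicit.
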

Note that the usage of Lemma~\ref{lem: almost over a cut} is similar to the one of Lemma~\ref{lem: over a cut}:
whenever we have an equation $u = v$ over some large signature $\Sigma$ (with maximal arity $k$)
we can remove from $\Sigma$ all (except for $k+1$) letters that do not occur in the equation $u = v$
while preserving satisfiability.

Note also that Lemma~\ref{lem: almost over a cut} that among the size-minimal solutions there exist one
that does not any unary letters not present in the equation $u = v$.
\begin{proof}
Let \solution{} be a solution of \rownanie.
Fix some $0 \leq k' \leq k$ and all letters $f,g,h,\ldots$ that occur in \sol u but not in $u = v$.
Replace each occurrence of $g,h\ldots$ in \sol X and \sol x (for each context variable $X$ and each variable $x$) with $f$.
We claim that the new substitution $\solution'$ is also a solution:
since the letters $g,h,\ldots$ did not occur in the equation $u = v$ then the only way that they could occur in \sol u and \sol v
is from \sol X or \sol x.
But in all of those variables and context variables we uniformly replaced $g,h,\ldots$ with $f$.
So the equality $\solution'(u) = \solution'(v)$ still holds.
Iterating this argument for $k' = 0, 1, 2, \ldots, k$ yields the first claim.

To show the second claim consider a solution \solution{} and a unary letter $a$ it uses which is not in \rownanie.
Consider a new solution $\solution'$ which is obtained from \solution{} by deleting each $a$, i.e.\ replacing each subterm $a^\ell t$
where $a^\ell$ is $a$-maximal with $t$ (since $a$ is a unary letter, this is a valid operation on terms).
Since all $a$ in \sol u and \sol v came from the \solution,
the $\solution'(u)$ is obtained from \sol u by deleting all $a$s, similarly
$\solution'(v)$ is obtained from \sol v by deleting all $a$s.
Hence $\solution'(u) = \solution'(v)$, which shows that indeed $\solution'$ is a solution of \rownanie.
Iterating over all unary letters $a$ in \solution{} that are not in \rownanie{} yields the second claim.
\qedhere
\end{proof}

The usage of Lemma~\ref{lem: almost over a cut} is as follows:
when we want to perform the compression steps we need to know what are the letters used by (some) solution.
Lemma~\ref{lem: almost over a cut} states that without loss of generality we can restrict ourselves
to letters present in the equation and one letter for each arity.
Furthermore, when we are concerned only with unary letters,
we can consider only letters that are present in the equation.

\section{Compression of non-crossing subpatterns}
\label{sec: noncrossing compression}
In this section we adapt the compressions from Section~\ref{sec:trees} to the case when the terms are given
implicitly, i.e.\ as a solution of a context equation.
To this end we identify cases, in which performing such a compression is easy and those in which it is hard
and show how to make the compression in the easy cases.
In the next section we present how to transform the difficult cases to the easy ones.
Finally, in Section~\ref{sec: main algorithm} we wrap everything up and present the algorithm for context-unification together
with the space-usage analysis.

However, before stating the procedures that transform the equations,
we formalise the notions about their correctness.
This might be a little non-obvious, as our procedures in general make non-deterministic choices.

\subsection{Soundness and Completeness}
The intuition of the correctness of a non-deterministic procedure is clear:
if the context equation is satisfiable then for some non-deterministic choices
we should transform it to a (simpler) satisfiable instance.
If it is unsatisfiable, we can transform it only to a non-satisfiable instance, regardless of the non-deterministic choices.

\begin{definition}
A (nondeterministic) procedure is \emph{sound},
when given a unsatisfiable word equation $u = v$
it cannot transform it to a satisfiable one, regardless of the nondeterministic choices;
such a procedure is \emph{complete},
if given a satisfiable equation $u = v$ 
for some nondeterministic choices it returns a satisfiable equation $u' = v'$.
\end{definition}
Observe, that a composition of sound (complete) procedures is also sound (complete, respectively)

A very general class of operations is sound:
\begin{lemma}
\label{lem:preserving unsatisfiability}
The following operations are sound:
\begin{enumerate}
	\item \label{unsat 1} Replacing all occurrences of a context variable $X$ (variable $x$) with $tX$ ($tx$, respectively) throughout the $u = v$,
	where $t$ is a context term.
	\item \label{unsat 2} Replacing all occurrences of a context variable $X$  with $Xt$ throughout the $u = v$ where $t$ is a context term. \label{second case}
	\item \label{unsat 25} Replacing all occurrences of a variable $x$ with a ground term $t$.
	\item \label{unsat 3} $(f,i,a)$ leaf compression performed on $u = v$.
	\item \label{unsat 4} $a,b$ pair compression performed on $u = v$.
	\item \label{unsat 5} $a$-maximal chain compression performed on $u = v$.
\end{enumerate}
\end{lemma}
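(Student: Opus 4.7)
The plan is to handle all six items by the same contrapositive strategy: whenever the transformed equation $u' = v'$ admits a solution $\solution'$, we construct from it a solution $\solution$ of the original equation $u = v$. In each case the construction is given by an explicit recipe depending on the operation, and the equality $\sol u = \sol v$ is verified by a structural induction that relates subterms of $u,v$ to the corresponding subterms of $u',v'$.

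For items~(\ref{unsat 1})--(\ref{unsat 25}), which expand a (context) variable $z$ into an expression $e(z)$ built from $z$ and a (context) term $t$, I would push the extra material back into the variable: set $\sol z := \solution'(e(z))$ and let $\solution$ agree with $\solution'$ on every other (context) variable. Concretely, item~(\ref{unsat 1}) gives $\sol X := \solution'(t)\solution'(X)$ (and $\sol x := \solution'(t)\solution'(x)$ in the variable version); item~(\ref{unsat 2}) gives $\sol X := \solution'(X)\solution'(t)$; item~(\ref{unsat 25}) gives $\sol x := t$. A short induction on a subterm $r$ of either side of $u = v$ shows $\sol r = \solution'(r')$, where $r'$ is the result of the replacement applied to $r$; the only non-trivial inductive step is at an occurrence of the expanded (context) variable, and there it follows from the definition of $\sol z$ and associativity of ground-context composition. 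Specialising to $r = u$ and $r = v$ and using $\solution'(u') = \solution'(v')$ yields $\sol u = \sol v$.

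For items~(\ref{unsat 3})--(\ref{unsat 5}) I would go in the opposite direction. Define a decompression map $\phi$ on ground terms that inverts the compression pattern-wise: for $(f,i,a)$ leaf compression send $f'(s_1,\ldots,s_{m-1})$ to $f(s_1,\ldots,s_{i-1},a,s_i,\ldots,s_{m-1})$; for $a,b$ pair compression send the fresh unary $c$ to the $2$-chain $ab$; for $a$-maximal chain compression send each $a_\ell$ to the chain $a^\ell$. Every other letter is fixed, and $\phi$ is extended to terms over $\Sigma \cup \variables \cup \contextvar$ by fixing all (context) variables, so that $u = \phi(u')$ and $v = \phi(v')$. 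Set $\sol X := \phi(\solution'(X))$ and $\sol x := \phi(\solution'(x))$; a routine structural induction gives $\solution(\phi(r)) = \phi(\solution'(r))$ for every term $r$, where the context-variable step uses that $\phi$ commutes with composition of ground contexts. Applying this to $u'$ and $v'$ and invoking $\solution'(u') = \solution'(v')$ gives $\sol u = \sol v$.

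The step that warrants a second look is the $a$-maximal chain compression: the solution $\solution'$ is free to mix the surviving letter $a$ with the new letters $a_\ell$ arbitrarily inside $\solution'(X)$ and $\solution'(x)$, so after applying $\phi$ the resulting $\sol X$ and $\sol x$ need not be in any sort of ``chain-normalised'' form. This is harmless for soundness, which only asks for the existence of some solution $\solution$ of $u = v$: since $a$ and the letters $a_\ell$ are pairwise distinct letters of the signature, $\phi$ is a well-defined letter homomorphism on ground terms and the identity $\solution \circ \phi = \phi \circ \solution'$ established in the inductive argument still holds. No other item exhibits this subtlety, so the proof reduces in each case to writing out the definition of $\solution$ and executing the one-line inductive check.
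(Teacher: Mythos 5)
Your proposal is correct and follows essentially the same route as the paper: in each case a solution of the transformed equation is turned into a solution of the original one by reversing the operation ($\sol X := \solution'(t)\solution'(X)$, $\sol X := \solution'(X)\solution'(t)$, $\sol x := t$, and decompression of the fresh letters $f'$, $c$, $a_\ell$ inside the substitution). Your explicit decompression homomorphism $\phi$ with the identity $\solution \circ \phi = \phi \circ \solution'$ merely makes precise the verification that the paper leaves as ``it is easy to see''.
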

Note that a context term may include variables, letters of large arity etc.
However, we use Lemma~\ref{lem:preserving unsatisfiability} in a very restricted scenario,
in which we use only constants unary letters as context terms
(except case~\ref{second case}, in which we replace $X$ with $X(f(x_1,x_2,\ldots,x_{i-1},\Omega,x_{i+1},\ldots,x_m))$).

\begin{proof}
The proof follows a simple principle: if the obtained equation $u' = v'$ has a solution $\solution'$
then we can define a solution \solution{} of the original context equation by reversing the performed operation.

In~\ref{unsat 1}, if $\solution'$ is a solution of the new equation then $\sol X = \solution'(t) \solution'(X)$ is a solution
(the same holds for $x$).

Similarly, in~\ref{unsat 2}, if $\solution'$ is a solution of the new equation then $\sol X = \solution'(X)\solution'(t)$
is a solution of the original equation.

In~\ref{unsat 25} if $\solution'$ is a solution of the new equation, we define \solution{} i the same way,
but set $\sol x = t$.

In~\ref{unsat 3}, let $f'$ denote the letter that replaced $f$ with child $a$ at positions $i$
during the $(f,i,a)$ leaf compression.
Let $\solution'$ be a solution of the new equation,
we define a solution \solution: if $\solution'(X)$ contains the occurrences of a letter $f'$,
then we replace the whole subterm $f'(t_1,t_2,\ldots,t_{i-1},t_{i+1},\ldots,t_k)$ in $\solution'(X)$
with $f(t_1,\ldots,t_{i-1},a_1,t_{i+1},\ldots,t_k)$, the same is done for \sol x.

In case~\ref{unsat 4}, if the letter $c$ occurs in \sol X or \sol x then we replace it with a chain pattern $ab$.

Similarly, in the last case \solution{} is obtained from $\solution'$ by replacing each occurrence of a letter $a_\ell$ with a chain $a^\ell$
(for all $\ell$).

It is easy to see that all those operations define a valid solution of the original equation.
\qedhere
\end{proof}

\subsection{Non-crossing partitions and their compression}
We begin the considerations with the $\Gamma_1,\Gamma_2$ compression as it is the easiest to explain and the intuition behind is most apparent.

Consider a context equation $u = v$ and a solution \solution.
Suppose that we want to perform the $\Gamma_1,\Gamma_2$ compression on \sol u and \sol v,
i.e.\ we want to replace each occurrence of a chain subpattern $ab \in \Gamma_1\Gamma_2$ with a fresh unary letter $c$.
Such replacement is easy, when the occurrence of $ab$ subpattern comes from the letters in the equation
or from \sol X (or \sol x) for some context variable $X$ (or a variable $x$, respectively):
in the former case we modify the equation be replacing the subpattern $ab$ with $c$, in the latter the modification is done implicitly
(i.e.\ we replace the subpattern $ab$ in \sol X or \sol x with $c$).
The problematic part is with the $ab$ chain subpattern that is of neither of those forms,
as they `cross' between \sol X (or \sol x) and some letter outside this \sol X (or \sol x).
This is formalised in the below definition.

\begin{definition}
For an equation $u = v$ and a non-empty substitution \solution{} we say that an occurrence of a chain subpattern $ab$ in \sol u (or \sol v) is
\begin{description}
	\item[explicit with respect to \solution] the occurrences of both $a$ and $b$ come from explicit letters $a$ and $b$ in $u = v$;
	\item[implicit with respect to \solution] the occurrences of both $a$ and $b$ come from \sol x (or \sol X);
	\item[crossing with respect to \solution] otherwise.
\end{description}

We say that $ab$ is a \emph{crossing pair} with respect to \solution{} if they have at least one crossing occurrence
with respect to \solution.
Otherwise $ab$ is a \emph{non-crossing pair} (with respect to \solution).
\end{definition}

Unless explicitly written, we consider only crossing/noncrossing pairs $ab$ in which $a \neq b$.

The notions of a crossing chain subpattern can be defined in a more operational manner:
for a non-empty substitution \solution{} by \emph{first letter} of \sol X (\sol x) we denote the topmost-letter in \sol X (\sol x, respectively),
by the \emph{last letter} of \sol X we denote the function symbol that is the father of $\Omega$ in \sol X.
Then it is easy to see that $ab$ is crossing with respect to \solution{} if and only if one of the following conditions hold
for some context variables $X, Y$ (or variable $y$):
\begin{itemize}
	\item $aX$ (or $ax$) is a chain subpattern in $u = v$ and $b$ is the first letter of \sol X (or \sol x, respectively) \emph{or}
	\item $Xb$ is a chain subpattern in $u = v$ and $a$ is the last letter of \sol X \emph{or}
	\item $XY$ (or $Xy$) is a chain subpattern in $u = v$, $a$ is the last letter of \sol X and $b$ the first letter of \sol Y
	(\sol y, respectively).
\end{itemize}
These conditions prove to be useful afterwards.

Since we perform several pair compression in one go, we generalise a definition of a crossing pairs to partitions:
\begin{definition}
A partition $\Gamma_1$, $\Gamma_2$ of $\Gamma$ is \emph{non-crossing} (with respect to a solution \solution)
if there is no pair $ab$ with  $a \in \Gamma_1$ and $b \in \Gamma_2$ such that $ab$ is a crossing pair (with respect to \solution);
otherwise it is \emph{non-crossing} with respect to \solution.
\end{definition}

When a partition $\Gamma_1,\Gamma_2$ is non-crossing with respect to a solution \solution,
we can simulate the $\algtreepaircomp(\Gamma_1,\Gamma_2,\sol u)$ on $u = v$
simply be performing the $\Gamma_1$, $\Gamma_2$ compression on the explicit letters in the equation:
then occurrences of $ab$ that come from explicit letters are compressed, the ones that come from \sol X and \sol x
are compressed by changing the solution and there are no other possibilities.
To be more precise we treat the equation $u = v$ as a term over $\Sigma \cup \variables \cup \contextvar \cup \{=\}$
(imagine $u$ and $v$ as children of the root labelled with `$=$', which has arity $2$)
and apply the $\Gamma_1$, $\Gamma_2$ pair compression on this tree,
we refer tot this operation as $\algtreepaircomp(\Gamma_1,\Gamma_2,\rownaniep)$
(note that context variables are not in $\Gamma_1$, nor in $\Gamma_2$
while variables as well as `$=$' have arity other than $1$ so they cannot be compressed either).

\begin{algorithm}[H]
  \caption{$\algpaircompncr(\Gamma_1,\Gamma_2,\rownaniep)$ $\Gamma_1$, $\Gamma_2$ compression for a non-crossing  partition $\Gamma_1, \Gamma_2$ \label{alg:paircomp}}
  \begin{algorithmic}[1]
	\Require $\Gamma_1, \Gamma_2$ contain only unary letters and are a non-crossing partition
		\State run $\algtreepaircomp(\Gamma_1,\Gamma_2,`u=v')$ \Comment{Treat, $u = v$ as a tree}
		\par 
		\Comment{Variables and context variables are not compressed}
 \end{algorithmic}
\end{algorithm}

\begin{lemma}
\label{lem: algpaircomp noncrossing}

$\algpaircompncr(\Gamma_1,\Gamma_2,\rownaniep)$ is sound.

If $u = v$ has a solution \solution{} such that $\Gamma_1$, $\Gamma_2$ is a non-crossing partition with respect to \solution{} then
$\algpaircompncr(\Gamma_1,\Gamma_2,\rownaniep)$ is complete, to be more precise,
the returned equation $u' = v'$ has a solution $\solution'$ such that $\solution'(u') = \algtreepaircomp(\Gamma_1,\Gamma_2,\sol u)$.
\end{lemma}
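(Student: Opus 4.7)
The plan is to handle soundness by reducing to the individual compression rules, and completeness by constructing $\solution'$ through post-composition of $\solution$ with the tree-level compression. For soundness, $\algpaircompncr$ is by definition a finite sequence of individual $a,b$ pair compressions performed on the term `$u=v$' viewed as a tree. Each such step is sound by item~\ref{unsat 4} of Lemma~\ref{lem:preserving unsatisfiability}, and soundness is preserved under composition, so the whole procedure is sound.

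For completeness, given a solution $\solution$ of $u=v$ for which $\Gamma_1,\Gamma_2$ is non-crossing with respect to \solution, I define
\[
\solution'(X) := \algtreepaircomp(\Gamma_1,\Gamma_2,\sol X), \qquad \solution'(x) := \algtreepaircomp(\Gamma_1,\Gamma_2,\sol x)
\]
for every variable and context variable appearing in $u=v$. The whole lemma then reduces to proving the commutation identity
\[
\solution'(u') \;=\; \algtreepaircomp(\Gamma_1,\Gamma_2,\sol u),
\]
together with its analogue for $v$; granting it, the equality $\sol u = \sol v$ forces $\solution'(u') = \solution'(v')$, so $\solution'$ is a solution of $u' = v'$ with exactly the evaluated term required by the stronger statement.

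The commutation identity is the heart of the argument and the only place where the non-crossing assumption is actually used. My plan is to classify every occurrence of a chain subpattern $ab$ with $a \in \Gamma_1$ and $b \in \Gamma_2$ inside $\sol u$ according to the three-part operational characterization given just before the definition of a non-crossing partition: either both letters come from explicit letters of $u$, or both come from a single $\sol X$ or $\sol x$, or the occurrence is crossing. The non-crossing hypothesis eliminates the third class; occurrences of the first class are exactly those compressed when passing from $u$ to $u'$, and occurrences of the second class are exactly those compressed when passing from $\sol X$ (or $\sol x$) to $\solution'(X)$ (or $\solution'(x)$). The only delicate point---and the one I expect to require the most care---is to verify that these two batches of replacements do not interact: this relies on the fresh letters lying outside $\Gamma_1 \cup \Gamma_2$ (so no compressed image participates in a further compression), and on the partition structure ensuring that at each node every unary letter plays a fixed role (parent if in $\Gamma_1$, child if in $\Gamma_2$), so parallel pair compressions never compete for the same node. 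Once this bookkeeping is in hand, the commutation identity follows by a direct inspection of the resulting term, and the lemma is established.
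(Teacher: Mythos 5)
Your proposal matches the paper's own proof: soundness via repeated application of Lemma~\ref{lem:preserving unsatisfiability} (noting the parallel compressions can be serialized since no fresh letters land in $\Gamma_1\cup\Gamma_2$), and completeness by setting $\solution'(X)=\algtreepaircomp(\Gamma_1,\Gamma_2,\sol X)$ and classifying each occurrence of $ab\in\Gamma_1\Gamma_2$ in $\sol u$ as explicit, implicit, or crossing, with the last case excluded by hypothesis. This is essentially the same argument, so nothing further is needed.
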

\begin{proof}
Note that while $\algpaircompncr(\Gamma_1,\Gamma_2,\rownaniep)$ applies several $ab$ pair compression for $ab \in \Gamma_1\Gamma_2$
in parallel, since $\Gamma_1$ and $\Gamma_2$ are disjoint, we can in fact think that they are done sequentially.
Furthermore, when done `sequentially', the pairs in $\Gamma_1\Gamma_2$ do not become non-crossing, as we do not introduce any new letters from
$\Gamma_1$, nor $\Gamma_2$ to the equation, nor to the solution.
For each such compression we use Lemma~\ref{lem:preserving unsatisfiability} to show that it is sound
and so also $\algpaircompncr(\Gamma_1,\Gamma_2,\rownaniep)$ is.

No, concerning the completeness.
Suppose that $u = v$ has a solution \solution{} such that $\Gamma_1,\Gamma_2$ is a non-crossing partition with respect to \solution.
We define a substitution $\solution'$ for the obtained equation $u' = v'$
such that $\solution'(u') = \algtreepaircomp(\Gamma_1,\Gamma_2,\sol u)$ and symmetrically
$\solution'(v') = \algtreepaircomp(\Gamma_1,\Gamma_2,\sol v)$.
Since $\sol u = \sol v$ this shows that $\solution'$ is indeed a solution of $u' = v'$ and so the second claim of the lemma holds.

The definition is straightforward: $\solution'(X)$ is obtained by performing the $\Gamma_1,\Gamma_2$ compression
on \sol X (the $\solution'(x)$ is defined in the same way)
formally $\solution'(X) = \algtreepaircomp(\Gamma_1,\Gamma_2,\sol X)$ (note that $\Omega$ is not in $\Gamma_1$, nor in $\Gamma_2$ 
and so it is not replaced).

Consider $a \in \Gamma_1$ with child labelled with $b \in \Gamma_2$ in \sol u.
Consider where this chain subpattern $ab$ comes from:
\begin{description}
	\item[they both come from explicit letters]
	Then $\algpaircompncr(\Gamma_1,\Gamma_2,\rownaniep)$ will perform the $\Gamma_1,\Gamma_2$ compression on them,
	i.e.\ replace them with a letter $c$.
	\item[they both come from \sol X or \sol x] Then this occurrence of $ab$ is replaced by the definition of $\solution'$.
	\item[one of them comes from an explicit letter and one from \sol X or \sol x]
	But then $\Gamma_1,\Gamma_2$ is a crossing partition with respect to \solution, contradicting the assumption.
\end{description}
As the argument applies to every occurrence of chain subpattern $ab \in \Gamma_1\Gamma_2$,
this shows that $\solution'(u') = \algtreepaircomp(\Gamma_1,\Gamma_2,\sol u)$, which ends the proof of the lemma.
\qedhere
\end{proof}

\subsection{Non-crossing $a$-maximal chains and their compression}
Similarly, consider a context equation $u = v$ and a solution \solution.
Suppose that we want to perform the $a$-maximal chain compression on \sol u and \sol v.
Then all occurrences of $a$-maximal chains are to be replaced with new unary letters.
Such replacement is easy, when the chain is a chain subpattern of the equation or is a chain subpattern
of \sol X (or \sol x) for some context variable $X$ (or a variable $x$, respectively).
The problematic part is with the occurrences that are of neither of those forms,
as they `cross' between \sol X (or \sol x) and another subtree.
This is formalised in the below definition.

\begin{definition}
For an equation $u = v$ and a substitution \solution{} we say that an occurrence of an $a$-maximal chain subpattern $a^\ell$ in \sol u
(or \sol v) is
\begin{description}
	\item[explicit with respect to \solution] if this occurrence comes wholly from $u$ (or $v$), i.e.\ it is a chain subpattern of $u$ (or $v$);
	\item[implicit with respect to \solution] if this occurrence comes wholly from \sol X or \sol x, i.e.\ it is a chain subpattern of \sol X or \sol x;
	\item[crossing with respect to \solution] otherwise.
\end{description}
We say that $a$ has a \emph{crossing chain} if there is at least one occurrence of a crossing $a$-maximal chain subpattern.
Otherwise, $a$ \emph{has no crossing chain}.
\end{definition}

As in the case of $\Gamma_1$, $\Gamma_2$, it is easy to see that $a$ has a crossing chain with respect to a non-empty solution \solution{}
if and only if one of the following holds
for some context variables $X, Y$ (or variable $y$):
\begin{itemize}
	\item $ax$ (or $aX$) is a chain subpattern in $u = v$ and the first letter of \sol x (\sol X, respectively) is $a$;
	\item $Xa$ is a chain subpattern in $u = v$ and $a$ is the last letter of $\sol X$;
	\item $XY$ (or $Xy$) is a chain subpattern in $u = v$ and $a$ is the last letter of $\sol X$ as well as the first letter of \sol Y (or \sol y).
\end{itemize}

When no unary letter (from $\Gamma)$ has a crossing chain to simulate the chain compression on the context equation
we perform the \algtreechaincomp{} on the explicit letters, treating the context equation as a tree,
similarly as in the case of the $\Gamma_1$, $\Gamma_2$ compression.

\begin{algorithm}[H]
	\caption{$\algchaincompncr(\Gamma,\rownaniep)$: Compressing chains when there is no crossing chain}
	\label{alg:chaincomp}
	\begin{algorithmic}[1]
	\Require $\Gamma$ contains only unary letters, there are no crossing chains for letters in $\Gamma$
		\State run $\algtreechaincomp(\Gamma,\rownaniep)$ \Comment{Treat, $u = v$ as a tree}
		\par \Comment{Context variables are not compressed}
	\end{algorithmic}
\end{algorithm}

\begin{lemma}
\label{lem: algchaincomp noncrossing}
$\algchaincompncr$ is sound.

If $u = v$ has a solution \solution{} such that no letter in $\Gamma$ has a crossing chain
then it is complete, to be more precise,
the returned equation $u' = v'$ has a solution $\solution'$ such that $\solution'(u') = \algtreechaincomp(\Gamma,\sol u)$.
\end{lemma}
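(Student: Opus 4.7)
The plan is to mirror the structure of the proof of Lemma~\ref{lem: algpaircomp noncrossing}, replacing pair compression by $a$-maximal chain compression throughout.

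For soundness, I would first observe that $\algtreechaincomp(\Gamma, \rownaniep)$ can be viewed as the sequential composition, over $a \in \Gamma$, of individual $a$-maximal chain compressions applied to the equation treated as a tree (with context variables as unary labels not in $\Gamma$ and variables as $0$-ary labels). These steps do not interfere, since the fresh letters $a_\ell$ introduced when compressing $a$-chains lie outside $\Gamma$, so $b$-maximal chains in the equation for $b \in \Gamma \setminus \{a\}$ are unaffected. Soundness of each individual $a$-maximal chain compression on the equation is then given by item~(\ref{unsat 5}) of Lemma~\ref{lem:preserving unsatisfiability}, and composition of sound procedures is sound.

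For completeness, I would assume $u = v$ has a solution \solution{} such that no letter in $\Gamma$ has a crossing chain, and define
\[
\solution'(X) := \algtreechaincomp(\Gamma, \sol X), \qquad \solution'(x) := \algtreechaincomp(\Gamma, \sol x)
\]
for every context variable $X$ and variable $x$ (note $\Omega \notin \Gamma$, so it is preserved). The goal is to prove $\solution'(u') = \algtreechaincomp(\Gamma, \sol u)$ and symmetrically for $v'$; together with $\sol u = \sol v$ this shows $\solution'$ is a solution of $u' = v'$ and proves the `more precise' statement. To establish the equality, I would take any $a \in \Gamma$ and any occurrence of an $a$-maximal chain subpattern $a^\ell$ in \sol u, and classify it as explicit, implicit, or crossing with respect to \solution. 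The crossing case is excluded by hypothesis. An implicit occurrence is entirely inside some \sol X or \sol x, and is therefore compressed by the definition of $\solution'$. An explicit occurrence corresponds to an $a$-maximal chain subpattern of the tree $u = v$ and is hence compressed by $\algchaincompncr$ itself.

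The main obstacle, as in the pair-compression case, is that one must verify that $a$-maximality in \sol u{} and $a$-maximality in the tree $u = v$ (respectively in \sol X, \sol x) match up on the respective occurrences. For instance, an explicit chain $a^\ell$ that is $a$-maximal in $u = v$ could a priori be extended in \sol u{} by an $a$ contributed by the first/last letter of a neighbouring substitution \sol X, \sol Y, or \sol y; but each such possibility is precisely a crossing chain and is ruled out by the non-crossing hypothesis. The analogous check for implicit chains is symmetric. Once this identification is made, the three cases of the classification align exactly with the three ways in which a chain of \sol u{} may be compressed (by $\algchaincompncr$, by the definition of $\solution'$, or not at all), and the lemma follows.
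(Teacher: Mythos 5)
Your proposal is correct and follows exactly the route the paper intends: the paper omits this proof as ``essentially the same'' as the non-crossing pair-compression argument (the citation to Lemma~\ref{lem: uncrossing partition} there is evidently a slip for Lemma~\ref{lem: algpaircomp noncrossing}), and your adaptation --- soundness via item~(\ref{unsat 5}) of Lemma~\ref{lem:preserving unsatisfiability}, completeness via defining $\solution'$ by compressing the substitutions and classifying each $a$-maximal occurrence as explicit, implicit, or crossing --- is precisely that argument. Your explicit check that $a$-maximality in $\sol u$ and in the equation (or in $\sol X$, $\sol x$) coincide under the non-crossing hypothesis is the right point to verify and is handled correctly.
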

The proof is essentially the same as in Lemma~\ref{lem: uncrossing partition} and so it is omitted.

\subsection{Non-crossing father-leaf pairs and their compression}
Suppose now that given a context equation $u = v$ with a solution \solution{} we would like to perform
leaf compression on \sol u and \sol v.
To this end we need to identify each $f \in \Gamma_{\geq 1}$ and its children in $\Gamma_0$
and replace them accordingly. Again, this is easy if each occurrence of such a subpattern comes either from explicit letters
in $u = v$ or wholly from \sol X (or \sol x).
In such a case we proceed similarly as in the case of $\Gamma_1,\Gamma_2$-compression and chain compression
and treat the $u = v$ as a tree and perform the leaf compression on it.
We are left to identify the cases in which this indeed properly simulates the leaf compression,
which are similar to those in $\Gamma_1$, $\Gamma_2$ compression.

\begin{definition}
Let $\ar(f) \geq 1$ and $\ar(a) = 0$ and consider a subpattern consisting of $f$ with a child $a$
(on some position $i \leq \ar(f)$).
For an equation $u = v$ and a substitution \solution{} we say
we say that an occurrence of such a subpattern is
\begin{description}
	\item[explicit with respect to \solution] if both the occurrence of $f$ and $a$ come from explicit letters in $u$ (or $v$);
	\item[implicit with respect to \solution] if both the occurrence of $f$ and $a$ come from some \sol X or \sol x;
	\item[crossing with respect to \solution] otherwise.
\end{description} 

Then $(f,a)$ is a \emph{crossing parent-leaf pair} in $u = v$ with respect to \solution{}
if it has at least one crossing occurrence in $u = v$ with respect to \solution.
Otherwise it is \emph{noncrossing} with respect to \solution.
\end{definition}

It is easy to observe that
there is such a crossing father-leaf pair $f,a$ (with respect to a non-empty \solution) if and only if
one of the following holds for some context variables $X$ and $y$
\begin{itemize}
	\item $f$ with a son $y$ is a subpattern in $u = v$ and $\sol y = a$ \emph{or}
	\item $Xa$ is a subpattern in $u = v$ and the last letter of \sol X is $f$ \emph{or}
	\item $Xy$ is a subpattern in $u = v$, $\sol y = a$ and $f$ is the last letter of \sol X.
\end{itemize}

When there is no crossing father-leaf pair $(f,a)$ for $f \in \Gamma_{\geq 1}$ and $a \in \Gamma_0$ then to simulate leaf compression
on \sol u and \sol v it is enough to perform it on the equation, treating it as a tree.

\begin{algorithm}[H]
	\caption{$\algchildcompncr(\Gamma_{\geq 1},\Gamma_0,\rownaniep)$: Leaf compression when there is no crossing father-leaf pair}
	\label{alg:leafcomp}
	\begin{algorithmic}[1]
	\Require $\Gamma_{\geq 1}$ contains no constant, $\Gamma_0$ contains only constants,\par
	there is no crossing father-leaf pair $(f,a)$ with $f \in \Gamma_{\geq 1}$ and $a \in \Gamma_0$
		\State run $\algtreechildcomp(\Gamma_{\geq 1},\Gamma_0,\rownaniep)$ \Comment{Treat, $u = v$ as a tree}
		\par \Comment{Context variables and variables are not compressed}
	\end{algorithmic}
\end{algorithm}

\begin{lemma}
\label{lem: algchildcomp noncrossing}
$\algchildcompncr$ is sound.

If $u = v$ has a solution \solution{} such that there is no crossing father-leaf pair $(f,a)$ with $f \in \Gamma_{\geq 1}$ and $a \in \Gamma_0$
in $u = v$ with respect to \solution{} then it is complete, more precisely,
the returned equation $u' = v'$ has a solution $\solution'$ such that $\solution'(u') = \algtreechildcomp(\Gamma_{\geq 1},\Gamma_0,\sol u)$.
\end{lemma}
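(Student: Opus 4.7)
The plan is to mirror the structure of the proof of Lemma~\ref{lem: algpaircomp noncrossing}, adapting the case analysis from pair compression to leaf compression. Two things need to be shown: soundness (always), and completeness under the non-crossing assumption, in the strong form that specifies the resulting substitution.

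For soundness, I would first observe that although $\algchildcompncr$ performs several $(f,i_1,a_1,\ldots,i_\ell,a_\ell)$ leaf compressions in parallel, the fresh letters $f'$ are taken outside $\Gamma_{\geq 1} \cup \Gamma_0$, so different compressions do not interfere and the result is the same as applying them in any sequential order. Each $(f,i_1,a_1,\ldots,i_\ell,a_\ell)$ leaf compression can moreover be decomposed into $\ell$ consecutive $(g, i, a)$ leaf compressions (peeling off one designated constant child at a time, using fresh intermediate letters), and each single step is sound by item~\ref{unsat 3} of Lemma~\ref{lem:preserving unsatisfiability}. Since a composition of sound procedures is sound, soundness of $\algchildcompncr$ follows.

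For completeness, assume $u = v$ has a solution \solution{} such that no father-leaf pair $(f,a)$ with $f \in \Gamma_{\geq 1}$ and $a \in \Gamma_0$ is crossing. I would define $\solution'$ on the output equation $u' = v'$ by
$\solution'(X) := \algtreechildcomp(\Gamma_{\geq 1},\Gamma_0,\sol X)$ and $\solution'(x) := \algtreechildcomp(\Gamma_{\geq 1},\Gamma_0,\sol x)$, noting that $\Omega \notin \Gamma_0$ so holes survive the compression and $\solution'(X)$ remains a ground context. The goal is then to verify $\solution'(u') = \algtreechildcomp(\Gamma_{\geq 1},\Gamma_0,\sol u)$ (and symmetrically for $v$), which immediately yields $\solution'(u') = \solution'(v')$ from $\sol u = \sol v$. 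The verification proceeds by a case analysis on each subpattern of $\sol u$ matched by the parallel leaf compression, namely a node labelled $f$ whose children at positions $i_1 < \cdots < i_\ell$ are constants $a_1, \ldots, a_\ell$ and whose other children carry non-constant labels: tracking where the symbol $f$ and each $a_j$ come from, if all come from explicit letters of $u$ the replacement is performed by $\algtreechildcomp$ acting on $u$ itself; if $f$ and all the $a_j$ come from a single $\sol X$ or $\sol x$ the replacement is performed implicitly by the definition of $\solution'$; the remaining situation is exactly a crossing occurrence of some $(f,a_j)$, which is ruled out by hypothesis.

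The main subtlety I anticipate is that a single explicit occurrence of $f$ in $u = v$ may have several constant children, some explicit and some supplied by a variable $y$ with $\sol y \in \Gamma_0$; if even one of its constant children is supplied by a variable, then $(f, \sol y)$ would be a crossing father-leaf pair, which the hypothesis forbids, so the ``all explicit or all implicit'' dichotomy actually holds. The analogous verification for the ``last letter of $\sol X$ equals $f$'' and ``$\sol y$ is a constant child'' sub-cases uses the explicit characterisation of crossing father-leaf pairs given just before the algorithm. Everything else is routine bookkeeping, essentially identical to the pair-compression argument, so I would omit it with a reference to the proof of Lemma~\ref{lem: algpaircomp noncrossing}.
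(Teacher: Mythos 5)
Your proposal is correct and follows essentially the approach the paper intends: the paper omits this proof as being the same as the earlier compression lemmas, and your argument is precisely the leaf-compression analogue of the proof of Lemma~\ref{lem: algpaircomp noncrossing} (soundness via Lemma~\ref{lem:preserving unsatisfiability}, completeness by defining $\solution'(X) = \algtreechildcomp(\Gamma_{\geq 1},\Gamma_0,\sol X)$ and a three-way case analysis on the origin of each $f$/constant-child occurrence). Your observation that the non-crossing hypothesis forces each $f$-node's constant children to come entirely from the same source as $f$ (so the explicit/implicit dichotomy is exhaustive) is exactly the point that makes the simulation work.
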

The proof is essentially the same as in Lemma~\ref{lem: uncrossing partition} and so it is omitted.

\section{Uncrossing}
\label{sec:uncrossing}
In general, one cannot assume that an arbitrary partition $\Gamma_1$, $\Gamma_2$ is noncrossing,
similarly we cannot assume that there are no crossing chains nor crossing father-leaf pairs.
However, for a fixed partition $\Gamma_1$, $\Gamma_2$ and a solution \solution{}
we can modify the instance so that this fixed partition becomes non-crossing with respect to a solution $\solution'$
(that corresponds to \solution{} of the original equation);
similarly, given an equation $u = v$ we can turn it into an equation that has no letters with a crossing chain with respect to a solution
$\solution'$ of the new equation;
lastly, for $\Gamma_{\geq 1}$ and $\Gamma_0$ we can modify the instance so that no father-leaf pair $(f,a)$ with $f \in \Gamma_{\geq 1}$
and $\Gamma_0$ is crossing with respect to $\solution'$.
Those modifications are the cornerstone of our main algorithm, as they allow compression to be performed directly on the equation,
regardless of how the solution actually look like.

\subsection{Uncrossing partitions}
We begin with showing how to turn a partition into a non-crossing one.
Recall that $\Gamma_1,\Gamma_2$ is a crossing partition (with respect to a non-empty \solution) if and only if
for some $ab \in \Gamma_1\Gamma_2$ one of the following holds for some context variables $X, Y$ (or variable $y$),
we assume here that \solution{} is non-empty
\begin{enumerate}[({CP}1)]
	\item \label{cr 1} $aX$ (or $ax$) is a chain subpattern in $u = v$ and $b$ is the first letter of \sol X (or \sol x, respectively) \emph{or}
	\item \label{cr 2} $Xb$ is a chain subpattern in $u = v$ and $a$ is the last letter of \sol X \emph{or}
	\item \label{cr 3} $XY$ (or $Xy$) is a chain subpattern in $u = v$, $a$ is the last letter of \sol X and $b$ the first letter of \sol Y
	(\sol y, respectively).
\end{enumerate}
In each of those cases it is easy to modify the instance so that $ab$ is no longer a crossing pair:
\begin{itemize}
	\item In \CPref{1} we \emph{pop up} the letter $b$: we replace $X$ ($x$) with $bX$ ($bx$, respectively).
	In this way we also modify the solution \sol X (\sol x) from $\sol X = bt$ ($\sol x = bt$, respectively)
	to $\solution'(X) = t$ ($\solution'(x) = t$, respectively).
	If $\solution'(X)$ is empty, we remove $X$ from the equation.
	\item In \CPref{2} we \emph{pop down} the letter $a$:
	we replace each occurrence of $X$ with $Xa$.
	In this way we implicitly modify $\sol X = sa\Omega$ to $\solution'(X) = s$.
	If $\solution'(X)$ is empty, we remove $X$ from the equation.
	\item The case \CPref{3} is a combination of the two cases above, in which we need to pop-down from $X$ and pop-up from $Y$ (or $y$).
\end{itemize}
It is easy to observe that this procedure can be performed on all $ab \in \Gamma_1\Gamma_2$ in parallel,
as presented in the algorithm below.

\begin{algorithm}[H]
  \caption{$\algpop(\Gamma_1,\Gamma_2,\rownaniep)$ \label{alg:leftpop}}
  \begin{algorithmic}[1]
	\For{$X \in \contextvar$}
		\State let $a$ be the last letter of \sol X \label{guess last letter}\Comment{Guess}
		\If{$a \in \Gamma_1$}
			\State replace each occurrence of $X$ in $u = v$ by  $Xa$ \label{rightpop}\\
			\Comment{Implicitly change $\sol X = sa\Omega$ to $\sol X = s$}
			\If{$\sol X$ is empty} \Comment{Guess}
				\State remove $X$ from $u = v$: replace each $X(s)$ in by  $s$
			\EndIf
		\EndIf
	\EndFor	
	\For{$X \in \contextvar$ or $x \in \variables$}
		\State let $b$ be the first letter of \sol X (or \sol x) \label{guess first letter}\Comment{Guess}
		\If{$b \in \Gamma_2$} 
			\State replace each occurrence of $X$ in $u = v$ by $bX$ (or $x$ with $bx$)\label{leftpop}\\
			\Comment{Implicitly change $\sol X = bs$ to $\sol X = s$ or $\sol x = bt$ to $\sol x = t$}
			\If{$\sol X$ is empty} \Comment{Guess}
				\State remove $X$ from $u = v$: replace each $X(s)$ in by  $s$
			\EndIf
		\EndIf
	\EndFor
  \end{algorithmic}
\end{algorithm}

We show that if $u = v$ has a solution \solution{} then for appropriate non-deterministic choices $\algpop(\Gamma_1,\Gamma_2,\rownaniep)$
returns an equation $u' = v'$ that has a solution $\solution'$ such that $\Gamma_1,\Gamma_2$ is non-crossing with respect to $\solution'$,
furthermore $\solution'$ somehow corresponds to $\solution$.

\begin{lemma}
\label{lem: uncrossing partition}
Suppose that $\Gamma_1$, $\Gamma_2$ are disjoint. Then $\algpop(\Gamma_1,\Gamma_2,\rownaniep)$ is sound and complete.
To be more precise,
if $u = v$ has a non-empty solution \solution{} then for appropriate non-deterministic choices
the returned equation $u' = v'$ has a non-empty solution $\solution'$ such that $\solution'(u') = \sol u$
and $\Gamma_1,\Gamma_2$ is a non-crossing partition with respect to $\solution'$.
\end{lemma}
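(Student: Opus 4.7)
The plan splits the argument into soundness and completeness. Soundness is immediate from Lemma~\ref{lem:preserving unsatisfiability}: every right-pop at line~\ref{rightpop} is an instance of item~\ref{unsat 2} with $t=a$, every left-pop at line~\ref{leftpop} is an instance of item~\ref{unsat 1} with $t=b$, and removing a context variable $X$ whose substitution has been consumed corresponds to the trivially unsatisfiability-preserving step of fixing $\sol X = \Omega$ and collapsing every $X(s)$ to $s$. Since $\algpop(\Gamma_1,\Gamma_2,\rownaniep)$ is a composition of such sound transformations, it is sound.

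For completeness, I fix a non-empty solution $\solution$ of $u=v$ and resolve the nondeterminism accordingly: at line~\ref{guess last letter} guess $a$ to be the actual last letter of $\sol X$ (the parent of $\Omega$); at line~\ref{guess first letter} guess $b$ to be the first letter of the current substitution of $X$ (or $x$); declare $\sol X$ empty exactly when the pops have consumed it entirely. In parallel I define $\solution'$ by removing the popped letters from $\sol X$ and $\sol x$, assigning no value to context variables that have been removed. The identity $\solution'(u')=\sol u$ is maintained at each step: for a right-pop, using the paper's juxtaposition for context composition, $\solution'(Xat)=\solution'(X)\,a\,\solution'(t)=\sol X \sol t =\sol{Xt}$, and analogous identities handle the left-pop and the removal. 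Hence $\solution'(u')=\sol u=\sol v=\solution'(v')$, so $\solution'$ is a non-empty solution of $u'=v'$.

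The main technical step is to verify that $\Gamma_1,\Gamma_2$ is non-crossing with respect to $\solution'$; I rule out each of the three possibilities \CPref{1}--\CPref{3} for a putative crossing pair $cd$ with $c\in\Gamma_1$ and $d\in\Gamma_2$. The key bookkeeping observation is that every context variable $X$ occurring in $u'$ originates from an occurrence of $X$ in $u$, and that in $u'$ the parent of $X$ is $b\in\Gamma_2$ precisely when $X$ was left-popped, while its child is $a\in\Gamma_1$ precisely when $X$ was right-popped. In case \CPref{1}, the presence of a chain $cX$ (or $cx$) in $u'$ forces that no left-pop was performed on $X$ (resp.\ $x$), so the first letter of $\solution'(X)$ equals that of $\sol X$, which by the branching choice is not in $\Gamma_2$, contradicting $d\in\Gamma_2$. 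Case \CPref{2} is symmetric: a chain $Xd$ in $u'$ forbids a right-pop on $X$, so the last letter of $\solution'(X)$ is not in $\Gamma_1$. In case \CPref{3}, a chain $XY$ (or $Xy$) surviving in $u'$ requires that neither a right-pop on $X$ nor a left-pop on $Y$ (resp.\ $y$) was applied, for otherwise the freshly inserted $a$ or $b$ would separate them in $u'$; then the relevant last and first letters coincide with those of the unmodified $\sol X$ and $\sol Y$ (or $\sol y$), and these lie outside $\Gamma_1$ and $\Gamma_2$ respectively by construction. The main obstacle is precisely this case analysis; once the interaction between the two popping passes and the three crossing patterns is tabulated, the rest is a direct trace of each transformation.
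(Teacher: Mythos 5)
Your proposal is correct and follows essentially the same route as the paper: soundness via iterated application of Lemma~\ref{lem:preserving unsatisfiability}, and completeness by resolving the nondeterminism according to \solution, defining $\solution'$ by stripping the popped letters, and refuting each of \CPref{1}--\CPref{3} through the observation that a surviving chain $aX$, $Xb$ or $XY$ in $u'=v'$ forces the corresponding pop not to have occurred (using disjointness of $\Gamma_1,\Gamma_2$), so the relevant first/last letter of $\solution'(X)$ is inherited from \sol X and lies outside the offending class. The only notable difference is that you spell out case \CPref{3}, which the paper leaves as ``similar''.
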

\begin{proof}
By iterative application of Lemma~\ref{lem:preserving unsatisfiability} we obtain that $\algpop(\Gamma_1,\Gamma_2,\rownaniep)$ is sound.

Concerning the second part of the lemma, for simplicity of presentation we deal only
with the first part of \algpop, i.e.\ the one in which the letters are popped-down,
the second part is dealt with similarly.

Suppose that $\algpop(\Gamma_1,\Gamma_2,\rownaniep)$ always makes the non-deterministic choices according
to $\solution$ (i.e.\ whenever we make a guess about \sol X or \sol x we guess correctly).
Let us a define a new substitution $\solution'$,
the value of $\solution'(X)$ depends on actions performed on $X$ by \algpop:
\begin{itemize}
	\item if $X$ popped up $b$ and $\sol X = bt$
	(which holds, as $\algpop(\Gamma_1,\Gamma_2,\rownaniep)$
	chose according to \solution{} and so the first letter of \sol X is $b$) then $\solution'(X) = t$;
	\item if $X$ did not pop any letter up then $\solution'(X) = \sol X$.
\end{itemize}
Note that $X$ is removed from the equation if and only if $\solution'(X) = \Omega$.

It is easy to verify that indeed in each case the defined $\solution'$ is a solution of the obtained equation $u' = v'$
and $\solution'(u') = \sol u$, as claimed:
when $X$ is not modified, its substitution is the same, if the pops up $b$,
then its solution loose this $b$.
Furthermore, $\solution'$ is non-empty (as if it is empty then we remove the empty context variable).

So suppose that the partition $\Gamma_1,\Gamma_2$ is crossing with respect to $\solution'$,
i.e.\ there exists $a \in \Gamma_1$ and $b \in \Gamma_2$ such that one of \CPref{1}--\CPref{3} holds.
We consider only the case \CPref{1}, in which
$aX$ is a chain subpattern in $u = v$ and $b$ is the first letter of \sol X,
other cases are shown in a similar way.

Consider, whether $X$ popped up a letter.
\begin{description}
	\item[$X$ popped a letter up] In such a case the father of $X$ is labelled with $b' \in \Gamma_2$, a contradiction,
	as the case assumption is that the father is labelled with $a \in \Gamma_1$
	and $\Gamma_1 \cap \Gamma_2 = \emptyset$.
	\item[$X$ did not pop a letter up] Since we consider the non-deterministic choices made according to \solution,
	we know that the first letter of \sol X is outside $\Gamma_2$. And by definition of $\solution'$ we know that $\solution'(X)$
	has the same first letter as \sol X, i.e.\ outside $\Gamma_2$. A contradiction with the case assumption.
\end{description}

Analysis of cases~\CPref{2}--\CPref{3} leads to a contradiction in a similar way, so the argument is skipped,
which ends the proof.
\qedhere
\end{proof}

\subsection{Uncrossing chains}
Suppose that some unary letter $a$ has a crossing chain with respect to a non-empty solution \solution.
Recall that $a$ has a crossing chain if and only if one of the following holds
for some context variables $X, Y$ (or variable $y$)
\begin{enumerate}[(CC1)]
	\item \label{cc 1} $ax$ (or $aX$) is a chain subpattern in $u = v$ and the first letter of \sol x (\sol X, respectively) is $a$;
	\item \label{cc 2} $Xa$ is a chain subpattern in $u = v$ and $a$ is the last letter of $\sol X$;
	\item \label{cc 3} $XY$ (or $Xy$) is a chain subpattern in $u = v$ and $a$ is the last letter of $\sol X$ as well as the first letter of \sol Y (or \sol y).
\end{enumerate}
The first two cases are symmetric while the third is a composition of the first two.
So suppose that the second case holds.
Then we can replace $X$ with $Xa$ throughout the equation $u = v$ (implicitly changing the solution $\sol X = ta\Omega$ to $\sol X = t$)
but it can still happen that $a$ is the last letter of $\sol X$.
So we keep popping down $a$ until the last letter of \sol X is not $a$, in other words we replace $X$ with $Xa^r$,
where $\sol X = ta^r\Omega$ and the last letter of $t$ is not $a$.
Then $a$ and $X$ can no longer satisfy condition \CCref{2}, as $\solution'(X)$ ends with a letter different than $a$.
A symmetric action and analysis apply to \CCref{1},
and \CCref{3} follows by applying the popping down for $X$ and popping up for $Y$ (or $y$).
To simplify the arguments, for a ground term or context $t$ we say that $a^\ell$ is the $a$\emph{-prefix} of $t$ if $t = a^\ell t'$
and the first letter of $t'$ is not $a$ ($t'$ may be empty).
Similarly, for a ground context $t$ we say that $b^r$ is a $b$-suffix of $t$ if $t = t' b^r \Omega$ and the last letter of $t'$ is not $b$
(in particular, $t'$ may be empty).

\begin{algorithm}[H]
  \caption{\algprefsuff$(\Gamma_1,\rownaniep)${} Uncrossing all chains \label{alg:prefix}}
  \begin{algorithmic}[1]
  \For{$X \in \contextvar$ or $x \in \variables$}
		\State let $a$ be the first of \sol X (or \sol x)
		\If{$a \in \Gamma_1$}
			\State guess $\ell \geq 1$ \label{guess ell} \Comment{$a^\ell$ is the $a$-prefix of \sol X or \sol x}
			\State replace each $X$ (or $x$) in $u = v$ by $a^{\ell} X$ (or $a^{\ell} x$)
				\Comment{$\ell$ is stored using $\Ocomp(\ell)$ bits}
			\par \Comment{implicitly change $\sol X = a^{\ell} t$ to $\sol X = t$ (or $\sol x = a^\ell t$ to $\sol x = t$)}
			\If{$\sol X$ is empty} \Comment{Guess}
				\State remove $X$ from $u = v$: replace each $X(t)$ by $t$
			\EndIf
		\EndIf
  \EndFor
  \For{$X \in \contextvar$}
		\State let $b$ be the last letter of \sol X
		\If{$b \in \Gamma_1$}
			\State guess $r \geq 1$ \label{guess r}
				\Comment{$b^r$ is the $b$-suffix of \sol X}
			\State replace each $X$ in $u = v$ by $Xb^r$
				\Comment{$b^r$ is stored in a compressed form}
			\par \Comment{implicitly change $\sol X = tb^r\Omega$ to $\sol X = t$}
			\If{$\sol X$ is empty} \Comment{Guess}
				\State remove $X$ from $u = v$: replace each $X(t)$ by $t$
			\EndIf
		\EndIf
  \EndFor	
  \end{algorithmic}
\end{algorithm}

\begin{lemma}
\label{lem: algpresuff}
$\algprefsuff(\Gamma_1,\rownaniep)$ is sound and complete;
to be more precise, if $u = v$ has a non-empty solution \solution{}
then for appropriate non-deterministic choices the returned equation $u' = v'$ has a solution $\solution'$ such that\
$\solution'(u') = \sol u$ and there are no crossing chains with respect to $\solution'$.
\end{lemma}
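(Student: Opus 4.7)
The plan is to mirror the proof of Lemma~\ref{lem: uncrossing partition}. Soundness follows by iterated application of Lemma~\ref{lem:preserving unsatisfiability}: every replacement of $X$ by $a^\ell X$, $x$ by $a^\ell x$, or $X$ by $Xb^r$ falls under cases~\ref{unsat 1} or~\ref{unsat 2} of that lemma (with the context term being a chain of unary letters), and the removal of a vanished context variable (replacing each $X(s)$ by $s$) is sound on its own, since a solution of the resulting equation can always be extended to one of the previous equation by setting $\sol X = \Omega$.

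For completeness, assume \solution{} is a non-empty solution of $u = v$ and resolve every non-deterministic choice according to \solution: in line~\ref{guess ell} guess the true length $\ell$ of the $a$-prefix of $\sol X$ (or $\sol x$); in line~\ref{guess r} guess the true length $r$ of the $b$-suffix of $\sol X$; and correctly guess whether the shortened context variable has become $\Omega$. For each surviving context variable $X$, write $\sol X = a^\ell t b^r \Omega$, where $a^\ell$ is the popped-up prefix (take $\ell = 0$ if no pop-up occurred) and $b^r$ is the popped-down suffix (take $r = 0$ if no pop-down occurred), and set $\solution'(X) = t$; define $\solution'(x)$ analogously by stripping only the popped prefix. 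Because each popped letter re-appears in $u' = v'$ in precisely the position where it used to sit inside the substitution, a straightforward structural induction on the terms $u, v$ yields $\solution'(u') = \sol u$ and $\solution'(v') = \sol v$, so $\solution'$ solves $u' = v'$; non-emptiness of $\solution'$ is ensured by explicitly removing any context variable whose substitution becomes $\Omega$.

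It remains to verify that no $a \in \Gamma_1$ has a crossing chain with respect to $\solution'$, by ruling out the operational conditions \CCref{1}--\CCref{3}. For \CCref{2}, assume $Xa$ is a chain subpattern of $u' = v'$ with $a \in \Gamma_1$ equal to the last letter of $\solution'(X)$. If the pop-down loop left $X$ untouched then the original last letter of $\sol X$ was outside $\Gamma_1$, and $\solution'(X)$ inherits that same last letter, contradicting $a \in \Gamma_1$. Otherwise $X$ popped down some $b$-suffix with $b \in \Gamma_1$; then the child of $X$ in $u' = v'$ is the topmost $b$ of that suffix, forcing $a = b$, yet by maximality of the $b$-suffix the last letter of $\solution'(X)$ differs from $b = a$, again a contradiction. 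Case \CCref{1} is symmetric. For \CCref{3}, the subpattern $XY$ (or $Xy$) can survive into $u' = v'$ only when no pop-down of $X$ and no pop-up of $Y$ (resp.\ $y$) took place, which forces the last letter of $\sol X$ to lie outside $\Gamma_1$; the same then holds of $\solution'(X)$, contradicting the required crossing condition. The main technical subtlety to keep in mind is that stripping the $a$-prefix may still leave $\solution'(X)$ starting with a different letter from $\Gamma_1$, but this is harmless because the parent of $X$ in $u' = v'$ is then the popped $a$, not that other letter, so no crossing chain is created.
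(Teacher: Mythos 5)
Your proof is correct and follows exactly the route the paper intends: the paper omits this proof, stating it is essentially the same as that of Lemma~\ref{lem: uncrossing partition}, and your argument is precisely that adaptation (soundness via Lemma~\ref{lem:preserving unsatisfiability}, completeness by guessing prefixes/suffixes according to \solution{} and then ruling out \CCref{1}--\CCref{3}, with the maximality of the popped $a$-prefix/$b$-suffix doing the extra work specific to chains).
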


The proof is essentially the same as the proof of Lemma~\ref{lem: uncrossing partition} and so it is omitted.

\subsection{Uncrossing father-leaf pairs}
Now it is left to show how to ensure that there is no crossing father-leaf pair $(f,a)$ with $f \in \Gamma_{\geq 1}$ and $a \in \Gamma_0$.
Recall that there is such a pair $(f,a)$ (with respect to a non-empty \solution) if and only if
one of the following holds for some context variable $X$ and variable $y$:
\begin{enumerate}[(CFL 1)]
	\item \label{cfl 1} $f$ with a son $x$ is a subpattern in $u = v$ and $\sol x = a$ \emph{or}
	\item \label{cfl 2} $Xa$ is a subpattern in $u = v$ and the last letter of \sol X is $f$ \emph{or}
	\item \label{cfl 3} $Xy$ is a subpattern in $u = v$, $\sol y = a$ and $f$ is the last letter of \sol X.
\end{enumerate}

The modifications needed to uncross the father-leaf pair are in fact the only new uncrossing operations,
when compared with the recompression technique for strings,
however, they are similar to the one in the case of uncrossing partition $\Gamma_1,\Gamma_2$.
Note that in some sense we even have a partition: $\Gamma_0$ and $\Gamma_{\geq 1}$ and we want to pop-up from $\Gamma_0$ and pop-down from $\Gamma_{\geq 1}$.
The former operation is trivial, but the details of the latter are not, let us present the intuition.
\begin{itemize}
	\item In~\CFLref{1} we \emph{pop up} the letter $a$ from $x$, which in this case means that we replace each $x$ with $a = \sol x$.
	Since $x$ is no longer in the context equation, we can restrict the solution so that it does not assign any value to $x$.
	\item In~\CFLref{2} we \emph{pop down} the letter $f$:
	let $\sol X = sf(t_1,\ldots,t_{i-1},\Omega,t_{i+1},\ldots,t_m)$, where $s$ is a ground context and each $t_i$ is a ground term and $\ar(f) = m$.
	Then we replace each $X$ with $Xf(x_1,x_2,\ldots,x_{i-1},\Omega,x_{i+1},\ldots,x_m)$,
	where $x_1,\ldots,x_{i-1},x_{i+1},\ldots,x_m$ are fresh variables.
	In this way we implicitly modify the solution $\sol X = sf(t_1,t_2,\ldots,t_{i-1},\Omega,t_{i+1},\ldots,t_m)$ to 
	$\solution'(X) = s$ and add $\solution'(x_j) = t_j$ for $j = 1\ldots, i-1,i+1,\ldots,m$.
	If $\solution'(X)$ is empty, we remove $X$ from the equation.
	\item The third case~\CFLref{3} is a combination of~\CFLref{1}--\CFLref{2}, in which we need to down pop from $X$ and pop up from $y$.
\end{itemize}
It is easy to observe that this procedure can be performed on all $f \in \Gamma_{\geq 1}$ and $a \in \Gamma_0$ in parallel,
as presented in the algorithm below; this uncrosses all father-leaf pair $(f,a)$ for $f \in \Gamma_{\geq 1}$ and $a \in \Gamma_0$.

\begin{algorithm}[H]
  \caption{$\alggenpop(\Gamma_{\geq 1},\Gamma_0,\rownaniep)$ \label{alg:genpop}}
  \begin{algorithmic}[1]
	\For{$x \in \variables$}
		\If{$\sol x \in \Gamma_0$} \Comment{Guess} 
			\State replace each $x$ in $u = v$ by \sol x \label{popup} \Comment{\solution{} is no longer defined on $x$}
		\EndIf
	\EndFor
	\For{$X \in \contextvar$}
		\State let $f$ be the last letter of \sol X \Comment{Guess}
		\If{$f \in \Gamma_{\geq 1}$ and for some $a\in \Gamma_0$ the $Xa$ is a subpattern in $u = v$} \label{applied on constant}
			\State let $m = \ar(f)$
			\State let $i$ be such that $\Omega$ labels the $i$-th child of its father in \sol X \Comment{Guess}
			\State replace each $X$ in $u = v$ by $Xf(x_1,x_2,\ldots, x_{i-1},\Omega,x_{i+1},\ldots,x_m)$ \label{popdown}
			\par
			\Comment{Implicitly change $\sol X = s f(t_1,t_2,\ldots, t_{i-1},\Omega,t_{i+1},\ldots,t_m)$ to $\sol X = s$}
			\par \Comment{Add new variables $x_1,\ldots, x_m$ to \variables{} with $\sol {x_j} = t_j$}
			\If{$\sol X$ is empty} \Comment{Guess}
				\State remove $X$ from the equation: replace each $X(u)$ by $u$
			\EndIf
		\EndIf
	\EndFor
	\For{new variables $x \in \variables$}
		\If{$\sol x \in \Gamma_0$} \Comment{Guess} 
			\State replace each $x$ in $u = v$ by \sol x \label{popup2} \Comment{\solution{} is no longer defined on $x$}
		\EndIf
	\EndFor
  \end{algorithmic}
\end{algorithm}

There is a subtle difference between uncrossing a partition $\Gamma_1,\Gamma_2$ and uncrossing father-leaf pairs:
for a partition popping down letters from $\Gamma_1$ is unconditional
while the corresponding popping down the last letters $f \in \Gamma_{\geq 1}$ from $X$
is done only when it is really needed:
i.e.\ we want to make some $(f,i,a)$ leaf compression,
$f$ is the last letter of $\sol X$, its $i$-th child is $\Omega$ and some occurrence of $X$ is applied on $a$.
This assumption turns out to be crucial to bound the number of introduced variables, see Lemma~\ref{lem: owned variables}.

\begin{lemma}
\label{lem: uncrossing father leaf}
Let $\Gamma_{\geq 1}$ be a set of some letters of arity at least $1$ and $\Gamma_0$ set of some constants,
then $\alggenpop(\Gamma_{\geq 1},\Gamma_0,\rownaniep)$ is sound.

It is complete, to be more precise,
if $u = v$ has a non-empty solution \solution{} then for appropriate non-deterministic choices
the returned equation $u' = v'$ has a non-empty solution $\solution'$ such that $\solution'(u') = \sol u$
and there is no crossing father-leaf pair $(f,a)$ with $f \in \Gamma_{\geq 1}$ and $a \in \Gamma_0$ with respect to $\solution'$.
\end{lemma}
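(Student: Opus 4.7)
The plan is to mirror the proof of Lemma~\ref{lem: uncrossing partition}, adapting it to the new pop-up and pop-down operations specific to father-leaf pairs. The argument splits into soundness and completeness.

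For soundness, I would observe that each elementary modification performed by \alggenpop{} falls into one of the cases of Lemma~\ref{lem:preserving unsatisfiability}: the pop-up step (lines~\ref{popup} and~\ref{popup2}) replaces a variable $x$ by the ground term \sol x, which is case~\ref{unsat 25}; the pop-down step (line~\ref{popdown}) replaces a context variable $X$ by $X\cdot f(x_1,\ldots,x_{i-1},\Omega,x_{i+1},\ldots,x_m)$, which is an instance of case~\ref{unsat 2} (a context term on the right); the removal of an empty $X$ is the degenerate case of the same. Since soundness is closed under composition, \alggenpop{} is sound.

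For completeness I would assume that $u=v$ has a non-empty solution \solution{} and that every guess in \alggenpop{} is made according to \solution. The new substitution $\solution'$ would be defined explicitly: if $x$ was popped up as $a\in\Gamma_0$ (so $\sol x = a$) then $\solution'$ is removed from $x$; if $X$ was popped down, so that $\sol X = s\cdot f(t_1,\ldots,t_{i-1},\Omega,t_{i+1},\ldots,t_m)$, set $\solution'(X):=s$ and $\solution'(x_j):=t_j$ for $j\neq i$; otherwise $\solution'$ coincides with \solution; any context variable $X$ with $\solution'(X)=\Omega$ is removed. A direct structural check (identical in spirit to the one in Lemma~\ref{lem: uncrossing partition}) yields $\solution'(u')=\sol u$ and that $\solution'$ is non-empty.

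Finally I would rule out each of \CFLref{1}--\CFLref{3} for $\solution'$ by contradiction. For \CFLref{1}, any variable $x$ remaining in $u'=v'$ with $\solution'(x)\in\Gamma_0$ would have been popped up, either in the first pop-up loop (for the original variables) or in the second one (for the variables introduced during pop-down)---contradiction. For \CFLref{2}, an explicit subpattern $Xa$ with $a\in\Gamma_0$ triggers the guard on line~\ref{applied on constant}, so $X$ is popped down and, by definition of $\solution'$, the last letter of $\solution'(X)$ is not $f\in\Gamma_{\geq 1}$---contradiction. Case~\CFLref{3} is handled by combining these two arguments. The main obstacle, and the only place where this proof genuinely differs from Lemma~\ref{lem: uncrossing partition}, is the restricted trigger on line~\ref{applied on constant}: unlike the partition case, $f$ is popped down from $X$ only when some occurrence of $X$ in the equation is applied to a constant. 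One must therefore verify that this conditional pop-down is still fine-grained enough to kill all three crossing conditions, and check that the freshly introduced variables $x_1,\ldots,x_{i-1},x_{i+1},\ldots,x_m$ do not resurrect a \CFLref{1}-crossing; this last point is precisely why the pop-up loop is run a second time after the pop-down loop.
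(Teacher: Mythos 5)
Your overall strategy coincides with the paper's: soundness via Lemma~\ref{lem:preserving unsatisfiability}, the same explicit definition of $\solution'$, and elimination of \CFLref{1} and \CFLref{3} by observing that any variable whose value lies in $\Gamma_0$ is substituted away in one of the two pop-up loops. The one place where your argument breaks is the case \CFLref{2}. You claim that once $X$ pops down $f(x_1,\ldots,x_{i-1},\Omega,x_{i+1},\ldots,x_m)$, ``by definition of $\solution'$, the last letter of $\solution'(X)$ is not $f\in\Gamma_{\geq 1}$''. That is false: writing $\sol X = s\,f(t_1,\ldots,t_{i-1},\Omega,t_{i+1},\ldots,t_m)$, only this single bottom letter is removed, and the new last letter of $\solution'(X)=s$ (the father of $\Omega$ in $s$) can perfectly well lie in $\Gamma_{\geq 1}$ again --- indeed $\Gamma_{\geq 1}$ contains essentially every non-constant in sight, and $s$ may even end with the very same letter $f$ (take $\sol X=f(f(\Omega,b),b)$ with $\Omega$ in the first position). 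So the ``last letter'' conjunct of \CFLref{2} typically still holds for $\solution'$, and your contradiction evaporates.

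What actually kills \CFLref{2} --- and what the paper argues --- is the \emph{other} conjunct: after the pop-down, every occurrence of $X$ in the equation is followed by the explicit letter $f$, so no subpattern $Xa$ with $a\in\Gamma_0$ survives into $u'=v'$ (and, as you correctly note elsewhere, the freshly introduced variables become children of the popped $f$, not of $X$, so the second pop-up loop cannot recreate such a subpattern). In the complementary case where $X$ does not pop down, the guard on line~\ref{applied on constant} was evaluated according to \solution{} after the first pop-up loop, so either no occurrence of $X$ is applied to a constant of $\Gamma_0$, or the last letter of $\sol X=\solution'(X)$ is not in $\Gamma_{\geq 1}$; either way \CFLref{2} fails. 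With this repair --- replacing your ``last letter'' claim by the structural argument about the equation --- your proof matches the paper's.
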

\begin{proof}
The proof is similar as in the case of Lemma~\ref{lem: uncrossing partition}, however, some details are different so it is supplied.

By iterative application of Lemma~\ref{lem:preserving unsatisfiability}
we obtain that $\alggenpop(\Gamma_{\geq 1},\Gamma_0,\rownaniep)$ is sound.

Concerning the second part of the lemma, we proceed as in Lemma~\ref{lem:preserving unsatisfiability}:
let $\alggenpop(\Gamma_{\geq 1},\Gamma_0,\rownaniep)$ always make the non-deterministic choices according
to the $\solution$: we replace $x$ with $a$ when $\sol x = a \in \Gamma_0$
and when we pop down $f(x_1,\ldots,x_{i-1},\Omega,x_{i+1},\ldots,x_m)$ from $X$ then indeed $f$ is the last letter
of \sol X and $\Omega$ labels the $i$-th child of $f$.
We define a new substitution $\solution'$:
\begin{itemize}
	\item The values on old variables do not change, i.e.\ $\solution'(x) = \sol x$ for each variable $x$ present in the context equation
both before and after \alggenpop.
	\item For a context variable $X$ from which we did not pop a letter we set $\solution'(X) = \sol X$.
	\item For $X$ from which \alggenpop{} popped down $f(x_1,\ldots,x_{i-1},\Omega,x_{i+1},\ldots,x_m)$
let $\sol X = s f(t_1,\ldots,t_{i-1},\Omega,t_{i+1},\ldots,t_m)$
(such a representation is possible as \alggenpop{} guesses according to \solution).
Then we define $\solution'(X) = s$ and $\solution'(x_j) = t_j$ for $j = 1,\ldots, i-1,i+1,\ldots, m$.
Note that when $s = \Omega$ then $X$ is removed from the equation.
	\item For $x$ that popped-up a constant we do not need to define \sol x as it is no longer in the context equation.
\end{itemize}

It is easy to verify that indeed in each case the defined $\solution'$ is a solution of the obtained equation $u' = v'$
and $\solution'(u') = \sol u$, as claimed.

So suppose that there is a crossing father-leaf pair $(f,a)$ with $f \in \Gamma_{\geq 1}$ and $a \in \Gamma_0$ with respect to $\solution'$,
i.e.\ one of the \CFLref{1}--\CFLref{3} holds.
Note that in \CFLref{1} and \CFLref{3} there is a variable $y$ such that $\solution'(y) \in \Gamma_0$,
however, by our assumption that \alggenpop{} always makes the choice according to the \solution{}
each such variable $y$ was replaced with \sol y in the context equation in line~\ref{popup} or line~\ref{popup2}.
So it remains to consider the \CFLref{2}.

So let $X$ be as in \CFLref{2}, i.e.\ the last letter of $\solution'(X)$ is $f \in \Gamma_{\geq 1}$ and $Xa$ is a subpattern in $u = v$
for some $a \in \Gamma_0$.
Consider, whether $X$ popped down a letter:
\begin{description}
	\item[$X$ popped a letter down] Then for each occurrence of subpattern $Xt$ in the context equation,
	the first letter of $t$ is always some $g \in \Gamma_{\geq 1}$ (as there was no way to change this),
	since $\Gamma_0$ and $\Gamma_{\geq 1}$ are disjoint, this is a contradiction with the assumption that $Xa$ is a subpattern in the equation for some $a \in \Gamma_0$.
	\item[$X$ did not pop a letter down]
	Consider the occurrence of a subpattern $Xa$.
	This $a$ letter was there when we decided not to pop down a letter from $X$ in line~\ref{applied on constant}.
	Then $\algpop(\Gamma_{\geq 1},\Gamma_0,\rownaniep)$ should have popped the last letter of $f$ from $X$,
	as in line~\ref{applied on constant} we were supposed to guess according to \solution, contradiction. \qedhere
\end{description}
\end{proof}

\section{Main algorithm}
\label{sec: main algorithm}
Now we are ready to describe the whole algorithm for testing the satisfiability of context equations.
It works in \emph{phases}, each of which is divided into two subphases.
In each subphase we first perform the chain compression, the $\Gamma_1$, $\Gamma_2$ compression
for appropriate partition $\Gamma_1$, $\Gamma_2$ and lastly the leaf compression.
In order to make the chain compression we first uncross all chains, similarly in order to perform the $\Gamma_1$, $\Gamma_2$ compression
we ensure that $\Gamma_1$, $\Gamma_2$ is a non-crossing partition
and in order to make the leaf compression we make sure that there is no crossing father-leaf pair.

The reason to have two subphases is quite simple:
(for appropriate guess of partition) the first subphase ensures that the size of the (size-minimal) solution
decreases by a constant factor (cf.~Theorem~\ref{thm: tree size drop}),
the second phase is used to make sure that the size of the equation is bounded
(in some sense the second phase decreases the size of the equation,
but as the equation grows in the first subphase, in total we can only guarantee that the equation is of more or less the same size).

\begin{algorithm}[H]
	\caption{$\algsolveeq(\rownaniep,\Sigma)$ Checking the satisfiability of a context equation $u = v$ over signature $\Sigma$}
	\label{alg:main}
	\begin{algorithmic}[1]
	\State let $k \gets$ maximal arity of functions from $\Sigma$
	\While{$|u| > 1$ or $|v| > 1$} \label{alg:mainloop}
		\For{$i\gets 1 \twodots 2$} \label{two iterations} \Comment{One iteration to shorten the solution, one to shorten the equation}
			\State $\Gamma_1 \gets$ unary letters in $\rownanie$ \Comment{By Lemma~\ref{lem: almost over a cut}}
			\State $\algprefsuff(\Gamma_1,\rownaniep)$ \Comment{No letter has a crossing block}
			\State $\algchaincompncr(\Gamma_1,\rownaniep)$	\Comment{Chain compression}
			\State $\Gamma \gets $ the set of unary in $u = v$ \Comment{By Lemma~\ref{lem: almost over a cut}}
			\State guess partition of $\Gamma$ into $\Gamma_1$ and $\Gamma_2$
			\State $\algpop(\Gamma_1,\Gamma_2,`u=v')$ \Comment{$\Gamma_1$, $\Gamma_2$ is a non-crossing partition} 
			\State $\algpaircompncr(\Gamma_1,\Gamma_2,\rownaniep)$ \Comment{$\Gamma_1$, $\Gamma_2$ compression}
			\State $\Gamma_{\geq 1} \gets $ non-constants in `$u = v$' plus one fresh letter $f_i$ of arity $i$ for each $1 < i \leq k$ \par
			\Comment{By Lemma~\ref{lem: almost over a cut}}
			\State $\Gamma_0 \gets $ constants in `$u = v$' plus one fresh constant $c$ \Comment{By Lemma~\ref{lem: almost over a cut}}
			\State $\alggenpop(\Gamma_{\geq 1},\Gamma_0,\rownaniep)$ \Comment{No crossing father-leaf pairs}
			\State $\algchildcompncr(\Gamma_{\geq 1},\Gamma_0,\rownaniep)$
		\EndFor
	\EndWhile
	\State Solve the problem naively \label{naive solve}
		\Comment{With sides of size $1$, the problem is trivial}
 \end{algorithmic}
\end{algorithm}

The properties of \algsolveeq{} are summarised in the following lemma
\begin{theorem}
\label{thm: main}
\algsolveeq{} stores equation of length $\Ocomp(nk)$ and uses additional $\Ocomp(n^2k^2)$ memory,
where $n$ is the size of the input equation while $k$ is the maximal arity of symbols from $\Sigma$.
It non-deterministically solves context equation, in the sense that:
\begin{itemize}
	\item if the input equation is not-satisfiable then it returns `NO';
	\item if the input equation is satisfiable then for some nondeterministic choices in $\Ocomp(\log N)$ phases it returns `YES',
	where $N$ is the size of size-minimal solution.
\end{itemize}
\end{theorem}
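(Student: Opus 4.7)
The proof splits into four parts: soundness, completeness, the bound on the equation length, and the bound on auxiliary memory.

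Soundness is almost immediate. Each subroutine called in the main loop (\algprefsuff, \algchaincompncr, \algpop, \algpaircompncr, \alggenpop, \algchildcompncr) was shown to be sound in its respective lemma, and the composition of sound nondeterministic procedures is sound. Hence if the input equation is unsatisfiable, every branch of \algsolveeq{} produces only unsatisfiable equations, and on reaching line~\ref{naive solve} the naive solver returns \textsf{NO}.

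For completeness, fix a non-empty size-minimal solution \solution{} of size $N$ and run the algorithm making all nondeterministic choices consistently with \solution. The completeness clauses of Lemmas on $\algprefsuff$, $\algpop$ and $\alggenpop$ transform \solution{} into a solution $\solution'$ of the modified equation for which the relevant partition is non-crossing, no letter has a crossing chain, and no relevant father--leaf pair crosses. The non-crossing compression lemmas then ensure that the subsequent $\algchaincompncr$, $\algpaircompncr$ and $\algchildcompncr$ applied to the equation simulate, on both \sol u and \sol v, exactly the compressions performed inside \algtreecomp. Therefore one iteration of the inner \textbf{for} loop corresponds to one run of \algtreecomp{} on $\sol u$; by Theorem~\ref{thm: tree size drop}, for an appropriate partition guess the solution shrinks by a factor of $3/4$. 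After $\Ocomp(\log N)$ phases the solution has size $1$, forcing both $|u|$ and $|v|$ to be $1$; the algorithm exits the main loop and the naive solve returns \textsf{YES}.

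The technical heart of the proof is the bound $|u|+|v|=\Ocomp(nk)$. Variables and context variables appearing in the current equation come in two kinds: originals (at most $n$) and fresh variables introduced by $\alggenpop$ when a letter $f$ of arity $\leq k$ is popped down from some context variable $X$. The crucial observation, which must be proved, is that each fresh variable can be ``owned'' by some occurrence of a context variable in the equation at the moment of its creation, and owners are not re-used in later phases; this yields an invariant of the form (number of variable occurrences)~$=\Ocomp(n)$. Each pop contributes at most $\Ocomp(k)$ explicit letters per occurrence, so the uncrossing phase grows the equation from length $L$ to $L+\Ocomp(nk)$, while the subsequent compression shortens the resulting tree by a factor of $3/4$ (Theorem~\ref{thm: tree size drop}, applied to `$u=v$' viewed as a tree). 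The recurrence $L'\leq \tfrac{3}{4}(L+\Ocomp(nk))$ has fixed point $L=\Ocomp(nk)$; running two subphases per phase (as in the algorithm) absorbs the increase caused by the first subphase using the compression of the second, and the resulting invariant is maintained throughout. The hard part here is the ownership/counting lemma for fresh variables; everything else is arithmetic.

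For the $\Ocomp(n^2k^2)$ additional memory, the only nontrivial storage outside the equation consists of the exponents $\ell,r$ guessed in \algprefsuff{} and the partition bits guessed for $\Gamma_1,\Gamma_2$. By Lemma~\ref{lem: exponent of periodicity} the relevant exponents for a size-minimal solution are at most $2^{\Ocomp(n)}$, hence storable in $\Ocomp(n)$ bits each; with $\Ocomp(n)$ variables, this gives $\Ocomp(n^2)$ bits. Combined with the $\Ocomp(nk)$ equation and a factor $k$ for recording arities and per-variable bookkeeping, the total additional space is $\Ocomp(n^2k^2)$. The number of phases being $\Ocomp(\log N)$ on accepting branches was already argued above, completing the theorem.
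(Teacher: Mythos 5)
Your overall decomposition matches the paper's: soundness by composing the soundness clauses of the subroutines, completeness and the $\Ocomp(\log N)$ phase bound by simulating \algtreecomp{} on a size-minimal solution (Theorem~\ref{thm: tree size drop}), the equation-length bound via a recurrence of the form $L' \leq \frac{3}{4}\bigl(L + \Ocomp(nk)\bigr)$, and the memory bound from the exponent-of-periodicity lemma. However, the one step you explicitly leave unproved --- the ownership/counting lemma for the fresh variables introduced by \alggenpop{} --- is precisely the crux of the whole space analysis, and your stated form of it is also off: the correct invariant (Lemma~\ref{lem: owned variables}) is that there are at most $kn$ occurrences of variables, not $\Ocomp(n)$; each occurrence of a context variable owns at most $k-1$ of them. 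More importantly, the reason the bound holds is not merely that ``owners are not re-used in later phases'': a context variable $X$ could in principle pop down a letter of arity $k$ in every phase, accumulating $\Theta(k)$ fresh variables per phase, which would ruin the recurrence. What prevents this is the conditional nature of the pop-down in \alggenpop{} (line~\ref{applied on constant}): $X$ pops down only when some occurrence of $X$ is applied directly to a constant. The paper proves two invariants --- every occurrence of $X$ owns the same multiset of variables, and each occurrence dominates (has inside its argument subterm) the variable occurrences it owns --- and concludes that an occurrence of $X$ sitting directly above a constant can own nothing, hence $X$ owns no variables at the moment it pops down. Without this argument your claimed invariant, and with it the $\Ocomp(nk)$ bound on the equation, is unsupported.

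A second, smaller omission: you treat one subphase as exactly one run of \algtreecomp{} on $\sol u$, but the alphabets $\Gamma_1,\Gamma,\Gamma_{\geq 1},\Gamma_0$ the algorithm uses are read off the \emph{equation}, while the solution may use letters absent from it (the paper's example $X(a)=Y(b)$ shows this is unavoidable). The paper repeatedly invokes Lemma~\ref{lem: almost over a cut} to replace the current solution by a ``simpler equivalent'' using only letters of the equation plus one per arity before each compression step; as it notes, one cannot guarantee that the same solution survives a phase, only that the size of the size-minimal solution drops by the factor $3/4$. Your proof should make these substitutions explicit, since otherwise the non-crossing compression lemmas do not account for all letters occurring in $\sol u$. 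Relatedly, the exponents stored by \algprefsuff{} are bounded by $2^{\Ocomp(|u|+|v|)}=2^{\Ocomp(nk)}$ for the current equation, so each needs $\Ocomp(nk)$ bits and there are $\Ocomp(nk)$ of them, which is how the $\Ocomp(n^2k^2)$ arises; your ``$\Ocomp(n)$ bits each times $\Ocomp(n)$ variables plus a factor $k$'' accounting reaches the right total only by accident.
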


As a corollary we get an upper bound on the computational complexity of context unification.
\begin{corollary}
Context unification is in \PSPACE.
\end{corollary}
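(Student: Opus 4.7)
The plan is to read the \PSPACE{} bound off Theorem~\ref{thm: main} via the standard Savitch-style reachability argument.

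By Theorem~\ref{thm: main} every configuration of \algsolveeq{} is described by $\Ocomp(n^2k^2)$ bits, which is polynomial in the input size (since the maximal arity $k$ of the input signature is at most $n$). Hence the configuration graph of \algsolveeq{} on an input of size $n$ has at most $2^{\poly(n,k)}$ vertices, and the adjacency relation between them is decidable in polynomial space (it corresponds to a single elementary step of \algsolveeq{}). Soundness of \algsolveeq{} ensures that an accepting configuration is reachable from the initial one only when the input equation is satisfiable, while completeness (acceptance within $\Ocomp(\log N)$ phases for some non-deterministic choices whenever the input is satisfiable) provides the converse. Thus satisfiability of the context equation is equivalent to reachability of an accepting configuration in this configuration graph.

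Reachability in a graph of size $2^{s}$ whose adjacency is decidable in space $s$ is decidable in space $\Ocomp(s^2)$ by Savitch's theorem. Applied with $s = \poly(n,k)$ this yields a deterministic polynomial-space procedure deciding context unification, establishing the corollary. The only nontrivial input to this argument is Theorem~\ref{thm: main} itself, whose proof is the bulk of the paper; the conversion from an \textsf{NPSPACE} algorithm with polynomial-size configurations to \PSPACE{} via Savitch is routine, so I expect no further obstacles once the configuration-size bound of Theorem~\ref{thm: main} is in hand.
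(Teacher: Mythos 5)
Your proof is correct and takes essentially the same route as the paper: both read the $\Ocomp(n^2k^2)$ space bound off Theorem~\ref{thm: main} and then apply Savitch's theorem to determinise the nondeterministic polynomial-space algorithm. You merely spell out the configuration-graph reachability argument that the paper leaves implicit.
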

\begin{proof}
By Theorem~\ref{thm: main} the (non-deterministic) algorithm \algsolveeq{} works in space $\Ocomp(n^2k^2)$,
which is polynomial in the input size.
By Savitch Theorem the non-deterministic polynomial space algorithm can be determinised, using at most quadratically more space.
\qedhere
\end{proof}

\subsection{Analysis}
The actual statement needed to show Theorem~\ref{thm: main} is given in the below technical lemma.

\begin{lemma}
	\label{lem: technical}
\algsolveeq{} is sound.

It is complete, to be more precise for some nondeterministic choices the following conditions are satisfied:
\begin{enumerate}
	\item the stored context equation has size $\Ocomp(nk)$, with at most $n$ context variables and $kn$ variables;
	\item if $N$ is the size of the size-minimal solution at the beginning of the phase then at the end of the phase the
	equation has a solution of size at most $\frac{3N}{4}$;
	\item the additional memory usage is at most $\Ocomp(k^2n^2)$ (counted in bits);
	\item the maximal arity of symbols in $\Sigma$ does not increase during \algsolveeq.
\end{enumerate}
\end{lemma}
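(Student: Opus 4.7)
The plan is to establish the four items in turn, leaning on the soundness/completeness lemmas for the subroutines invoked inside \algsolveeq{} (Lemmas~\ref{lem: algpaircomp noncrossing}, \ref{lem: algchaincomp noncrossing}, \ref{lem: algchildcomp noncrossing}, \ref{lem: uncrossing partition}, \ref{lem: algpresuff}, \ref{lem: uncrossing father leaf}) together with the size-drop result of Theorem~\ref{thm: tree size drop}. Soundness of \algsolveeq{} is immediate: each subroutine is sound and a composition of sound procedures is sound.

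For completeness together with item~2, I would fix a size-minimal solution \solution{} at the start of a phase and march through the two subphases, at each step letting the current subroutine make the non-deterministic guesses that agree with \solution. The completeness statements then yield, after every step, a new equation together with a corresponding solution $\solution'$ whose evaluation equals the evaluation of the previous solution with the corresponding compression operation applied. After the first subphase the obtained equation has a solution whose evaluation equals $\algtreecomp$ applied to \sol u under the guessed partition; for the partition promised by Theorem~\ref{thm: tree size drop} this shrinks the evaluation by a factor of $3/4$, yielding item~2. The second subphase is correctness-neutral but is needed for the length bound in item~1.

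Item~1 is the main obstacle and the place where an amortized argument is required. The number of context-variable occurrences can only shrink (context variables are never created, only removed when they become empty), so it stays bounded by its initial value $n$. New first-order variables are born only in \alggenpop, and only when some $Xa$ with $a \in \Gamma_0$ is a subpattern of the equation; a charging argument assigns each new variable to a distinct such explicit occurrence and yields the bound $\Ocomp(nk)$ on the number of variables. For the total length, the uncrossing steps pop at most $\Ocomp(k)$ fresh explicit letters per occurrence of a (context) variable, but the subsequent chain, pair, and leaf compressions collectively reduce the number of explicit letters by a constant factor (Theorem~\ref{thm: tree size drop} applied to the equation viewed as a tree, where `$=$', variables and multi-ary symbols all behave as non-unary nodes). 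A standard potential argument balancing the linear-in-(number of variable occurrences) growth against the constant-factor shrinkage keeps the equation of size $\Ocomp(nk)$.

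Items~3 and~4 are short. Beyond the equation itself the only data stored are the exponents $\ell$ and $r$ popped out by \algprefsuff; by Lemma~\ref{lem: exponent of periodicity} each is at most $2^{\Ocomp(|u|+|v|)} = 2^{\Ocomp(nk)}$, so each needs $\Ocomp(nk)$ bits, and with $\Ocomp(nk)$ such values per phase this gives $\Ocomp(n^2 k^2)$ bits of workspace, dominating the bit-complexity of storing the $\Ocomp(nk)$-length equation with its $\Ocomp(\log N)$-bit symbol names. Item~4 follows from Lemma~\ref{lem: keeping arity low}: chain and pair compression produce only unary letters, leaf compression replaces a symbol of arity $m$ by one of arity $m - \ell < m$, and the auxiliary fresh symbols $f_i$ added in \algsolveeq{} have arity at most $k$, so the maximal arity in $\Sigma$ is preserved throughout the run.
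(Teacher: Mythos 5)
Your overall architecture matches the paper's: soundness by composition, completeness and the $3N/4$ drop by tracking a size-minimal solution through the first subphase with correct guesses (the paper's Lemma~\ref{lem: number of phases}), a potential argument balancing popped letters against compression for the equation length (the paper's Lemma~\ref{lem: equation size}), and the same treatment of items~3 and~4. One detail you gloss over in item~2: the partition promised by Theorem~\ref{thm: tree size drop} is over the unary letters of $\sol u$, while the algorithm can only guess a partition of letters occurring in the equation; the paper bridges this with Lemma~\ref{lem: almost over a cut}, passing to a solution that uses only unary letters present in the equation before invoking the size-drop bound. That is repairable, but should be said.

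The genuine gap is the bound of $kn$ on the number of variable occurrences, which you dispatch with ``a charging argument assigns each new variable to a distinct such explicit occurrence.'' As stated this does not work: \alggenpop{} can make the \emph{same} context variable $X$ pop down a letter in every phase, and each such pop introduces up to $k-1$ fresh variables at \emph{every} occurrence of $X$. Charging the new variables to occurrences of subpatterns $Xa$ present at the moment of the pop gives a per-phase bound, not a bound that holds at all times; summed over the $\Ocomp(\log N)$ phases it would yield $\Ocomp(nk\log N)$ variable occurrences, which ruins both the length bound and the space bound. What is actually needed is an invariant showing that by the time $X$ pops down again, the variables it introduced earlier have already disappeared from the equation. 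The paper proves exactly this in Lemma~\ref{lem: owned variables} via two inductive subclaims: every occurrence of $X$ owns the same multiset of variables, and each occurrence of $X$ dominates (lies above) the occurrences it owns. Hence if some occurrence of $X$ is applied directly to a constant --- the precondition for popping down --- there is no room below it for owned variables, so $X$ owns none at that moment, and each context variable owns at most $k-1$ variables at any time. Without this ownership/domination argument (or an equivalent invariant), item~1 is not established, and items~1--3 all depend on it.
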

The rest of this subsection is devoted to the proof of Lemma~\ref{lem: technical}.

\subsubsection{Number of phases}
We show that the number of phases is logarithmic in $N$:
we show that one subphase of \algsolveeq{} in some sense can simulate an action of \algtreecomp{}
on a size-minimal solution of an equation.
Thus, by Theorem~\ref{thm: tree size drop} the size of the length-minimal solution drops by a constant in a phase.
Due to the presence of letters that are not in the equation, we cannot guarantee that this solution prevails the compression
steps, however, the size of the length-minimal solution does drop by a constant factor.

\begin{lemma}
\label{lem: number of phases}
Let the size-minimal solution of $u = v$ has size $N$.
Then for appropriate non-deterministic choices after first subphase of \algsolveeq{} the obtained equation $u' = v'$
has a solution $\solution'$ of size $N' \leq \frac{3N}{4}$.
\end{lemma}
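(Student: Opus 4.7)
The plan is to show that, for the right non-deterministic choices, one subphase of \algsolveeq{} mirrors exactly one execution of \algtreecomp{} on the size-minimal solution, so that Theorem~\ref{thm: tree size drop} directly supplies the $3N/4$ bound.

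First, I would normalise the solution. By Lemma~\ref{lem: almost over a cut} I may assume without loss of generality that the size-minimal solution $\solution$ uses only unary letters that appear explicitly in $u = v$, and at most one additional letter of each arity $\geq 2$ (plus possibly one additional constant); these extras are precisely the fresh letters $f_i$ and $c$ already adjoined to $\Gamma_{\geq 1}$ and $\Gamma_0$ by \algsolveeq. I may also assume $\solution$ is non-empty by guessing in advance which context variables take $\Omega$ and removing them.

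Next, I would run every step of the first subphase taking each non-deterministic choice consistent with $\solution$ (first letters, last letters, $a$-prefixes/suffixes, positions of $\Omega$, emptiness of remaining context variables). Chaining the completeness clauses of the uncrossing/compression lemmas, I obtain:
\begin{itemize}
\item from Lemmas~\ref{lem: algpresuff} and~\ref{lem: algchaincomp noncrossing}, a solution $\solution_1$ of the current equation $u_1 = v_1$ with $\solution_1(u_1) = \algtreechaincomp(\Gamma_1, \sol u)$;
\item then, after guessing the partition $\Gamma_1, \Gamma_2$ promised by Theorem~\ref{thm: tree size drop} applied to $\solution_1(u_1)$, from Lemmas~\ref{lem: uncrossing partition} and~\ref{lem: algpaircomp noncrossing}, a solution $\solution_2$ with $\solution_2(u_2) = \algtreepaircomp(\Gamma_1, \Gamma_2, \solution_1(u_1))$;
\item finally, from Lemmas~\ref{lem: uncrossing father leaf} and~\ref{lem: algchildcomp noncrossing}, a solution $\solution'$ of the output equation $u' = v'$ with $\solution'(u') = \algtreechildcomp(\Gamma_{\geq 1}, \Gamma_0, \solution_2(u_2))$.
\end{itemize}
Composing these identities, $\solution'(u') = \algtreecomp(\sol u)$ for the execution of \algtreecomp{} that uses the partition guessed above. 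By Theorem~\ref{thm: tree size drop} this gives $|\solution'(u')| < \frac{3|\sol u|}{4} = \frac{3N}{4}$. Since the size-minimal solution of $u' = v'$ is no larger than this particular $\solution'$, the bound $N' \leq \frac{3N}{4}$ follows.

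The main obstacle is the bookkeeping: verifying that each successive intermediate solution remains non-empty (so the next completeness lemma applies), that the partition $\Gamma_1, \Gamma_2$ I guess at the equation level genuinely coincides (via Lemma~\ref{lem: almost over a cut}) with the partition of unary letters of the current tree to which Theorem~\ref{thm: tree size drop} is applied, and that the sets $\Gamma_{\geq 1}, \Gamma_0$ prescribed by \algsolveeq{} really contain every letter used by $\solution_2$ so that no crossing father-leaf pair is missed. All three are straightforward once the reduction to letters appearing in the equation plus one spare per arity is in place.
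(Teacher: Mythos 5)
Your proposal takes essentially the same route as the paper: make every non-deterministic choice according to a (normalised) size-minimal solution, chain the completeness clauses of Lemmas~\ref{lem: algpresuff}, \ref{lem: algchaincomp noncrossing}, \ref{lem: uncrossing partition}, \ref{lem: algpaircomp noncrossing}, \ref{lem: uncrossing father leaf} and~\ref{lem: algchildcomp noncrossing}, and invoke the tree size-drop bound for a well-chosen partition. The one point you defer as ``straightforward bookkeeping'' is, however, exactly where the paper's proof does its real work: a single up-front application of Lemma~\ref{lem: almost over a cut} does not suffice, because the compression steps themselves introduce fresh letters into the \emph{solution} (e.g.\ the letters $a_\ell$ produced by chain compression inside $\sol X$) that need not occur in the current equation. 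Hence the solution has to be re-normalised via Lemma~\ref{lem: almost over a cut} again \emph{after} the chain compression --- so that the partition is taken over all unary letters of the current tree and Lemma~\ref{lem: tree size drop technical} (rather than Theorem~\ref{thm: tree size drop} directly) applies --- and once more before the leaf compression, to control the non-constant letters of higher arity. A consequence is that the identity $\solution'(u')=\algtreecomp(\sol u)$ you compose is not literally true; what survives, and what the paper proves, is that $|\solution'(u')|$ is bounded by the size of the corresponding run of \algtreecomp, since each re-normalisation is size-non-increasing. With that amendment your argument coincides with the paper's.
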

\begin{proof}
Consider some size-minimal solution $\solution_{\min}$ of size $N$
and let $\Gamma_1$ be the set of unary letters in \rownanie.
By Lemma~\ref{lem: almost over a cut} there is another solution \solution{} of the same size $N$
that uses only unary letters from $\Gamma_1$: by the first part of Lemma~\ref{lem: almost over a cut}
we can find a solution of the same size with at most one unary letter not used in $u = v$
and then the second part of the Lemma guarantees that we can make the solution even smaller by deleting all occurrences of this letter
(which contradicts the size-minimality of the solution).

By Lemma~\ref{lem: algpresuff} for appropriate non-deterministic choices after the $\algprefsuff(\Gamma_1,\rownaniep)$
new equation $u_1 = v_1$ has a solution $\solution_1$ such that $\solution_1(u_1) = \sol u$
and there are no crossing chains for $a \in \Gamma_1$ with respect to $\solution_1$.
Then by Lemma~\ref{lem: algchaincomp noncrossing} after the $\algchaincompncr(\Gamma_1,\rownaniep[1])$
the obtained equation $u_2 = u_2$ has a solution $\solution_2$
such that $\solution_2(u_2) = \algtreechaincomp(\Gamma_1,\solution_1(u_1))$.
Note that clearly $|\solution_2(u_2)| \leq N$.

Consider $\solution_2$.
In a similar way as for $\solution_{\text{min}}$ we can show using
Lemma~\ref{lem: almost over a cut} that there is a solution $\solution_2'$ such that $\solution_2'(u_2) \leq \solution_2(u_2)$
such that $\solution_2'$ uses only unary letters that are used in \rownanie[2].
By Lemma~\ref{lem: tree size drop technical} there is some partition of unary letters used in `$u_2 = v_2$' into $\Gamma_1$ and $\Gamma_2$ such that
the $\Gamma_1$, $\Gamma_2$ compression followed by the leaf-compression results in a tree of size at most $\frac{3}{4} |\solution_2'(u_2)|$,
which is at most $\frac{3}{4} N$; fix this partition $\Gamma_1$, $\Gamma_2$ for the remainder of the proof.

We perform $\algpop(\Gamma_1,\Gamma_2,\rownaniep[2])$, by Lemma~\ref{lem: uncrossing partition} for appropriate non-deterministic choices the returned equation
$u_3 = v_3$ has a solution $\solution_3$ such that $\solution_3(u_3) = \solution_2'(u_2)$ and $\Gamma_1,\Gamma_2$ is a non-crossing partition
with respect to $\solution_3$.

We apply $\algpaircompncr(\Gamma_1,\Gamma_2,$`$u_3 = v_3$'$)$, since the partition $\Gamma_1$, $\Gamma_2$ is non-crossing for $u_3 = v_3$
with respect to $\solution_3$, by Lemma~\ref{lem: algpaircomp noncrossing} the obtained equation $u_4 = v_4$
has a solution $\solution_4$ such that
$\solution_4(u_4) = \algtreepaircomp(\Gamma_1,\Gamma_2,\solution_3(u_3))$.

Finally, consider the solution $\solution_4$ of $u_4  = v_4$.
By Lemma~\ref{lem: almost over a cut} there is a solution $\solution_4'$ that is a simpler equivalent
(in particular, $\solution_4'(u_4)$ has the same number of constants as $\solution_4(u_4)$)
and uses only one letter per arity that is not used by `$u_4 = v_4$'.
Let $\Gamma_{\geq 1}'$ denote the set of letters of arity greater than $1$ used in $\solution_4'(u_4)$ and $\Gamma_{\geq 1}$ in $\solution_4(u_4)$
while $\Gamma_0'$ be the set of constants used in $\solution_4'(u_4)$ and $\Gamma_0$ in $\solution_4(u_4)$.
Observe that $\algtreechildcomp(\Gamma_{\geq 1},\Gamma_0,\solution_4(u_4))$ and $\algtreechildcomp(\Gamma_{\geq 1}',\Gamma_0',\solution_4'(u_4))$
have the same size, as $\solution_4'$ is a simpler equivalent of $\solution_4$ implies that
$\solution_4'(u_4)$ has the same number of constants as $\solution_4(u_4)$
and \algtreechildcomp{} on both of them simply compresses all leaves to their respective fathers.
Hence, by Lemma~\ref{lem: tree size drop technical},
$\algtreechildcomp(\Gamma_{\geq 1}',\Gamma_0',\solution_4'(u_4))$ has size at most
$\frac{3}{4}|\solution_2'(u_2)| \leq \frac{3}{4}|\solution(u)| = \frac{3}{4} N$

So it is left to show that we can simulate $\algtreechildcomp(\Gamma_{\geq 1}',\Gamma_0',\solution_4'(u_4))$ on the equation.
So consider $\solution_4'$, $\Gamma_{\geq 1}'$ and $\Gamma_0'$.
By Lemma~\ref{lem: uncrossing father leaf} for appropriate non-deterministic choices
after $\alggenpop(\Gamma_{\geq 1}',\Gamma_0',\rownaniep[4])$ the obtained equation \rownanie[5] has a solution $\solution_5$ such that
$\solution_5(u_5) = \solution_4'(u_4)$ and there is no crossing father-leaf pair $(f,a)$ with $f \in \Gamma_1'$ and $a \in \Gamma_0'$
with respect to $\solution_5$.
We now apply $\algchildcompncr(\Gamma_{\geq 1}',\Gamma_0',\rownaniep[5])$.
By Lemma~\ref{lem: algchildcomp noncrossing} for appropriate non-deterministic choices the returned equation $\rownanie[6]$
has a solution $\solution_6$ such that $\solution_6(u_6) = \algtreechildcomp(\Gamma_{\geq 1}',\Gamma_0',\solution_5(u_5)) = \algtreechildcomp(\Gamma_{\geq 1}',\Gamma_0',\solution_4'(u_4))$.
In particular, this solution is as small as promised in the lemma.
\qedhere
\end{proof}

\subsubsection{Space consumption}
Observe that, in contrast to the recompression-based algorithm for word equations,
\algsolveeq{} introduces new variables and their occurrences to the equation (when \alggenpop{} pops down a letter of arity greater than $1$).
At first it seems like a large problem, as the number of letters introduced to the equation in one phase depends on the number of variables,
however, we are able of bounding the number of such variables at any given time of the \algsolveeq{}
by $kn$.
To this end, we need some definitions: we say that a variable $x_i$ is \emph{owned} by a context variable $X$
if $x_i$ occurred in the equation when $X$ popped a letter down.
A particular occurrence of $x_i$ in the equation is \emph{owned} by the occurrence of the context variable that introduced it.
When a context variable $X$ is removed from the equation the variables its owns get \emph{disowned} (and particular occurrences
of this variable are also disowned).

We show that each context variable owns at most $k-1$ variables.

\begin{lemma}
\label{lem: owned variables}
Every context variable $X$ present in $u = v$ owns at most $k-1$ variables.
In particular, there are at most $kn$ occurrences of variables in $u = v$.
\end{lemma}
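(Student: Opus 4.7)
The plan is to maintain the invariant throughout a run of \algsolveeq{} that every context variable $X$ currently present in the equation owns only the variables introduced by its most recent pop-down in \alggenpop{} (if any such pop has occurred). Once this invariant is established the bound is immediate: a single pop of a letter $f$ of arity $m \leq k$ from $X$ introduces exactly $m - 1 \leq k - 1$ fresh variables owned by $X$, and no other operation in a subphase creates owned variables.

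The heart of the argument is to show that if $X$ popped a letter $g$ of arity $m' \geq 2$ at an earlier invocation of \alggenpop{} (introducing fresh owned variables $y_1, \ldots, y_{m' - 1}$ as the children of $g$ at the non-$\Omega$ positions), then by the time $X$ is popped again every such $y_\ell$ has been removed from the equation. For the later pop to trigger, the guard of \alggenpop{} demands that $Xa$ is a subpattern for some $a \in \Gamma_0$; that is, at some occurrence the letter directly below $X$ must be a constant. That letter arises from $g$ by a sequence of chain, pair, and leaf compressions. Chain and pair compressions act on unary letters and return unary letters, so neither can drop the arity to $0$; only leaf compression can reduce arity, and only by absorbing children that are constants (variables are not elements of $\Sigma$ and cannot occupy the $a_1, \ldots, a_\ell$ slots of \algtreechildcomp{}). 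Hence every $y_\ell$ that occurs as a child of $g$ at the considered occurrence must first be replaced by a constant, and the only rule doing this is the popup step of \alggenpop{}, which rewrites \emph{every} occurrence of a given variable simultaneously. Thus all the $y_\ell$'s have vanished from the whole equation before $X$ pops again. (If the earlier pop was of a unary letter no $y_\ell$'s were introduced and the statement is vacuous.)

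Combining these observations: between two consecutive pops of $X$ all previously owned variables must disappear, and between pops no new owned variables are created, so at any moment $X$ owns at most the $\leq k - 1$ variables from its most recent pop. For the ``in particular'' claim, note that context-variable occurrences in the equation are monotonically nonincreasing along \algsolveeq{} (compressions never duplicate them, and neither do \algpop{} or \alggenpop{}; they can only disappear when $\sol X = \Omega$ triggers removal), hence their total is bounded by the input count $n$. Each currently owned variable of $X$ was inserted at exactly one position per occurrence of $X$ at the moment of the pop, so the total number of owned-variable occurrences is at most $(k-1)n$; together with the at most $n$ occurrences of original variables this yields the $\leq kn$ bound.

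The main obstacle is the arity argument: one must carefully justify that chain and pair compressions preserve arity, that leaf compression can only reduce the arity of a letter by absorbing constants, and hence that $g$ can only reach arity $0$ after every $y_\ell$-child has been popped up globally. Once this chain of reductions is spelled out, ownership simply cannot accumulate across pops.
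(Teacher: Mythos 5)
Your overall strategy is the paper's: the guard of \alggenpop{} (a subpattern $Xa$ with $a\in\Gamma_0$) is exactly what forces all previously owned variables to have vanished before $X$ can pop down again, so ownership cannot accumulate. But two steps need repair. First, the claim that the constant $a$ sitting directly below $X$ ``arises from $g$ by a sequence of chain, pair, and leaf compressions'' is false as stated: between the two pops, \algpop{} or \algprefsuff{} may replace $X$ by $Xb$ or $Xb^r$ for unary $b$, so the letter immediately below $X$ need not be a descendant of $g$ at all; likewise, letters popped up from the $y_\ell$ (replacing $y_\ell$ by $a^{\ell'}y_\ell$) interpose between $g$ and $y_\ell$, so the $y_\ell$ need not occupy the child slots of the image of $g$ when you try to argue about its arity. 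The paper sidesteps this bookkeeping with an invariant that is robust under every operation: all occurrences of $X$ own the same multiset of variables, and each occurrence of $X$ \emph{dominates} its owned variable occurrences, i.e.\ they lie inside the subterm to which that occurrence of $X$ is applied. Since a subterm consisting of a single constant contains no variable, and a variable disappears only via the global popup substitution, the conclusion is immediate. Your arity-tracing argument should be replaced by (or upgraded to) this domination invariant; as written it breaks on any run in which \algprefsuff{} or \algpop{} pops a suffix down from $X$ between its two \alggenpop{} pops.

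Second, your count of variable occurrences omits \emph{disowned} variables: when $X$ is removed from the equation (because $\sol X=\Omega$) right after a pop, the up to $k-1$ variables it just introduced remain in the equation with no owner, and your bound of $(k-1)n$ on ``owned-variable occurrences'' plus $n$ originals does not cover them. The paper's accounting charges $m(k-1)$ occurrences to the $m$ surviving context-variable occurrences and $(n-m)(k-1)$ to the removed ones, and only this combined budget yields $n(k-1)+n=kn$. Both gaps are fixable, but as written the proof is incomplete.
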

Note that the upper bound on the number of variables \emph{does not} depend on the non-deterministic choices of \algsolveeq.
\begin{proof}
Given an occurrence of a subterm $Xt$ we say that this occurrence of $X$ dominates the occurrences of variables in $t$.

We show by induction two technical claims:
\begin{enumerate}
		\item For every occurrence of a variable $X$ the multiset of variables, whose occurrences it owns, is the same.
		\label{same owned variables}
		\item Each appearance of $X$ dominates its owned occurrences of variables. \label{chains to owned variables}
\end{enumerate}

The subclaim~\ref{same owned variables} is trivial:
at the beginning, there are no owned variables.
When we introduce new $X$-owned variables,
we replace each $X$ with the same $Xf(x_1,\ldots,x_{i-1},\Omega,x_{i+1},\ldots,x_m)$,
in particular the set of $X$-owned variables for each occurrence of $X$ is increased by $\{x_1,\ldots,x_{i-1},x_{i+1},\ldots,x_m\}$.
When we remove occurrences of $x$, we remove them all at the same time.
Which ends the induction.

Concerning the subclaim~\ref{chains to owned variables},
this vacuously holds for the input instance, which yields the induction base.
For the induction step, consider now the operation performed by \algsolveeq{} on the context equation.
Any compression is performed only on letters, so it cannot affect the domination.
When we pop the letters from a variable $x$, we replace $x$ with $ax$ (or remove $x$ altogether),
so this also does not affect the domination.
Similarly, when we pop letters from context variables, we either replace $X$ with $aX$ or $X$ with
$Xf(x_1,\ldots,x_{i-1},\Omega,x_{i+1},\ldots,x_m)$,
in both cases the domination of the old variables is not affected and in the last case the new variables
$x_1,\ldots, x_m$ owned by this particular occurrence of $X$ are indeed dominated by this occurrence of $X$.

Using those two subclaims we now show that if during \alggenpop{} $X$ pops down a letter, then $X$ does not own any variables.
Suppose that $X$ pops down a letter.
Then in \rownanie{} there is a subtree $Xc$ for $c \in \Gamma_0$.
Suppose that $X$ owned a variable $x$ before popping down the letter.
Then by subclaim~\ref{same owned variables} the occurrence which is applied on $c$
also owns occurrence of $x$ and by~\ref{chains to owned variables} this occurrence is dominated by its owning occurrence of $X$,
which is not possible, as this owning occurrence of $X$ is part of the term $Xc$.
As a consequence, each occurrence of a context variable owns at most $k-1$ occurrences of variables.

Now, concerning the number of variables: initially there are at most (not owned nor disowned) $n$ variables occurrences and $n$ context variables occurrences.
Suppose that at some point there are $m \leq n$ context variables occurrences.
Since we never introduce context variables, there are at most $m(k-1)$ owned variables' occurrences,
and at most $(n - m)(k - 1)$ disowned ones and $n$ that are neither owned, nor disowned
(those are the occurrences of variables that were present in the input equation),
so $nk$ occurrences of variables in total, as claimed.
\end{proof}

We move to the crucial part of the proof: the space consumption of \algsolveeq.
The intuition should be clear: in the second subphase we treat the equation as a term
and try to ensure that its size drops by one fourth, just as in the case of Theorem~\ref{thm: tree size drop}.
However, in the meantime we also increased the size of the equation, as we pop the letters into the context equation (in both subphases).
The number of those letters depends linearly on the number of occurrences of variables and context variables in \rownanie,
which is known to be $\Ocomp(kn)$, see Lemma~\ref{lem: owned variables}.
Hence those two effects (increasing the size and reducing the size)
cancel each out and it can be shown that the size of the equation is $\Ocomp(kn)$.

\begin{lemma}
\label{lem: equation size}
For appropriate non-deterministic choices in second subphase
the context equation at the end of a phase of \algsolveeq{} has size $\Ocomp(nk)$.
Furthermore, for those choices \algsolveeq{} is complete, to be more precise:
If $u = v$ after the first subphase had a solution \solution{} of size $N$ 
then after the second subphase the obtained equation $u' = v'$ has a solution of size at most $N$.
\end{lemma}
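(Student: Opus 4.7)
The plan is to maintain the invariant that the equation size is $\Ocomp(nk)$ throughout the algorithm, balancing two opposing effects: popping operations add $\Ocomp(nk)$ letters per subphase, while the compression in the second subphase shrinks the equation by a constant factor of its compressible part. At the fixed point of the resulting linear recurrence these two effects cancel out.

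I would first quantify the growth. By Lemma~\ref{lem: owned variables} the total number of variable and context variable occurrences in the equation at any moment is $\Ocomp(nk)$. Each call to \algprefsuff{} adds at most one chain per such occurrence, which is immediately compressed to a single letter by the subsequent \algchaincompncr; each call to \algpop{} adds at most one letter per occurrence; and each call to \alggenpop{} adds at most $\ar(f) + 1 \leq k+1$ symbols per context variable occurrence. Since context variable occurrences number at most $n$, these contributions total $\Ocomp(nk)$ per subphase, and hence $\Ocomp(nk)$ per phase.

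Next I argue the shrinkage. After all the popping in the second subphase, the equation satisfies the noncrossing hypotheses of Lemmas~\ref{lem: algchaincomp noncrossing}, \ref{lem: algpaircomp noncrossing}, and \ref{lem: algchildcomp noncrossing}, and so the three compression steps applied to the equation are exactly the operations of \algtreecomp{} run on the tree representation of the equation, with the variable and context variable nodes left untouched. Guessing the partition as in Claim~\ref{clm: random partition} applied to the subforest of explicit letters, an adaptation of Theorem~\ref{thm: tree size drop} shows that the compressible part of the equation shrinks by factor $3/4$, so if $S$ denotes the size before second-subphase compression and $m = \Ocomp(nk)$ the number of immovable nodes, the size after compression is at most $\tfrac{3}{4}(S - m) + m = \tfrac{3}{4} S + \tfrac{1}{4} m$.

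Combining the bounds, if $S_t$ is the equation size at the end of phase $t$, then
\[
S_{t+1} \leq \tfrac{3}{4}\bigl(S_t + \Ocomp(nk)\bigr) + \Ocomp(nk) = \tfrac{3}{4} S_t + \Ocomp(nk),
\]
whose fixed point is $S = \Ocomp(nk)$; this establishes the claimed size bound by induction on $t$. The completeness claim (a solution of size $\leq N$ exists at the end of the second subphase) then follows by stringing together the completeness parts of Lemmas~\ref{lem: algpresuff}, \ref{lem: uncrossing partition}, \ref{lem: uncrossing father leaf}, \ref{lem: algchaincomp noncrossing}, \ref{lem: algpaircomp noncrossing}, and \ref{lem: algchildcomp noncrossing}: each uncrossing step preserves the evaluation of the equation (letters merely move between substitution and equation) and each compression step yields a new solution whose evaluation is a compression of the old, hence of size at most $N$.

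The main obstacle is justifying the factor-$3/4$ shrinkage in the presence of immovable nodes. Theorem~\ref{thm: tree size drop} is stated for trees in which every node is compressible, and one has to verify that the counting argument of Lemma~\ref{lem: tree size drop technical} still provides constant-factor shrinkage of the compressible portion plus an additive $\Ocomp(nk)$ term for the immovable nodes. The point is that Lemma~\ref{lem: owned variables} bounds the immovable nodes independently of $S$, so the recurrence truly contracts rather than merely preserving the equation size.
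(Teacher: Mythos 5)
Your overall strategy is the paper's: bound the additive growth from popping by $\Ocomp(nk)$ using Lemma~\ref{lem: owned variables}, show the second-subphase compression contracts the equation by a factor $3/4$ up to an additive $\Ocomp(nk)$ term, and close the linear recurrence $S_{t+1}\leq \frac34 S_t+\Ocomp(nk)$; the completeness chain through Lemmas~\ref{lem: algpresuff}, \ref{lem: uncrossing partition}, \ref{lem: uncrossing father leaf} and the three noncrossing-compression lemmas is also how the paper argues. However, the one step you yourself flag as ``the main obstacle'' is exactly where the paper does real work, and your proposed shortcut for it does not go through as stated. You claim the post-compression size is $\frac34(S-m)+m$ because ``the compressible part of the equation shrinks by factor $3/4$'' by applying the counting of Lemma~\ref{lem: tree size drop technical} to the subforest of explicit letters. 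But that counting argument relies on structural inequalities of a tree --- chiefly $n_{\geq 2}\leq n_0-1$ and the bound on the number of maximal chains in terms of $n_0$ and $n_{\geq 2}$ --- and these fail for the explicit-letter subforest in isolation: a binary letter both of whose children are variables contributes to $n_{\geq 2}$ but to no compressible leaf, and every occurrence of a variable or context variable terminates a maximal chain. So the explicit part alone need not shrink by $3/4$ at all. The paper's fix is to redo the counting with the immovable occurrences folded in, replacing $n_{\geq 2}\leq n_0-1$ by $n_{\geq 2}\leq n_0+kn-1$ and bounding the number of maximal chains by $\frac{n_0}{2}+n_{\geq 2}+n+\frac{kn}{2}+\frac12$, which yields $S'\leq\frac34 S+f(n,k)$ with $f$ linear in $n$ and $kn$ --- the same final recurrence, but not via your clean $\frac34(S-m)+m$ decomposition. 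Since Lemma~\ref{lem: owned variables} bounds the correction terms independently of $S$, the recurrence still contracts, so your conclusion is right; the derivation of the contraction is what is missing.

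A second, smaller omission: to invoke the completeness of \alggenpop{} and \algchildcompncr{} you need the relevant solution to use only letters from the $\Gamma_{\geq 1}$ and $\Gamma_0$ the algorithm actually guesses, and a solution may use letters of high arity that do not occur in the equation. The paper handles this by first passing to a \emph{simpler equivalent} solution via Lemma~\ref{lem: almost over a cut} (one extra letter per arity, matching the fresh letters $f_i$ and $c$ that \algsolveeq{} adds to $\Gamma_{\geq 1}$ and $\Gamma_0$), and notes this preserves the number of constants, hence the size after leaf compression. Your completeness chain should include this step, and similarly the size-$N$ bound should be threaded through each intermediate solution $\solution_1,\ldots,\solution_6$ rather than asserted only at the end.
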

\begin{proof}
We show some nondeterministic choices for which the run of \algsolveeq{} in the second subphase satisfies the claim of the lemma.

In the following we consider only the number of letters in \rownanie: since no new context variables are introduced,
there are at most $n$ such occurrences
and by Lemma~\ref{lem: owned variables} there are at most $kn$ occurrences of variables in \rownanie.

Consider, how many new letters were introduced during the first subphase into the equation.

\begin{itemize}
	\item The chains introduced by \algprefsuff{} are immediately replaced with a single letter,
therefore we can think that \algprefsuff{} introduces $2$ letters per context variable and $1$ per variable,
so at most $2n + kn$ in total.
(Note that popping a letter down introduces also variables, but those are counted separately).
	\item 
Similarly, each \algpop{} introduces at most $2$ letters per context variables and $1$ per variable,
so also at most $2n + kn$ in total.
	\item 
Lastly, the popping down in \alggenpop{} introduces $1$ letter per context variable, so $n$ letters
while the popping up introduces at most $1$ letter per variable,
but all those letter are compressed into their parents immediately afterwards (in \algchildcompncr:
all letters popped up are from $\Gamma_0$ and by choice of $\Gamma_{\geq 1}$ their fathers in \sol U or \sol v
are from $\Gamma_1$, so they are compressed.
Thus we do not need to count them.
\end{itemize}
Hence, in total, during the first subphase the size of the equation increases by at most $5n +2kn$ letters.

Concerning the second subphase,
the following analysis is similar to the one in Lemma~\ref{lem: tree size drop technical} but it takes into the account also the letters introduced
due to popping.
Furthermore, we need to also guarantee that the equation stays satisfiable and the the size of the size-minimal solution does not increase.

As in Lemma~\ref{lem: tree size drop technical},
let $n_0$, $n_1$ and $n_{\geq 2}$ denote the number of letters of arity $0$, $1$ and at least $2$, respectively
in the equation \rownanie{} between the first and second subphase;
$n_0'$, $n_1'$ and $n_{\geq 2}'$ the number of letters of arity $0$, $1$ and at least $2$
in \rownanie{} after the chain compression, $n_0''$, $n_1''$ and $n_{\geq 2}''$ after the (appropriate) $\Gamma_1$, $\Gamma_2$ compression
and finally $n_0'''$, $n_1'''$ and $n_{\geq 2}'''$ after the leaf compression.
We shall show that
\begin{equation}
\label{eq: end of a phase}
n_0''' + n_1''' + n_{\geq 2}''' \leq \frac{3}{4}\Big(n_0 + n_1 + n_{\geq 2} \Big) + f(n,k) \enspace,
\end{equation}
where $f$ is some function linear in $n$ and $k$.
Taking into the account that during the first subphase the size of the equation increased by at most $5n + 2kn$, we obtain that
the equation at the end of the phase is of size at most 
(let $m$ be the size of the equation at the beginning of the phase)
$$
\frac{3}{4} m + \frac{15}{4}n + \frac{6}{4}kn + f(n,k) \enspace.
$$
From which by an easy induction it follows that
$$
m \leq 15 n + 6kn + 4f(n,k) \enspace .
$$
So it is left to show that estimation~\eqref{eq: end of a phase} indeed holds.

\begin{subequations}
Let $u = v$ be a satisfiable equation after the first subphase,
let $\solution$ be one of its size-minimal equation
(note that in this proof in general we do not need to worry 
about the letters that are not present in the equation, as we focus on compressing letters in the equation).
Let $\Gamma_1$ be the set of letters present in the equation $u = v$.
After the $\algprefsuff(\Gamma_1,\rownaniep)$ the obtained equation $u_1 = v_1$
has a solution $\solution_1$ such that $\solution_1(u_1) = \solution(u)$.
In particular
the size-minimal solution of $u_1 = v_1$ is not larger than the one of $u = v$.

Note that each chain popped from a context variable or variable by $\algprefsuff(\Gamma_1,\rownaniep)$ is immediately replaced with 
a single letter during the $\algchaincompncr(\Gamma_1,\rownaniep[1])$,
there are at most $2n + kn$ new unary) letters introduced to \rownanie, hence
\begin{equation}
\label{eq: after blocks}
n_0' = n_0 \quad n_1' \leq n_1 + 2n + kn \quad n_{\geq 2}' = n_{\geq 2} \enspace .
\end{equation}
Furthermore, by Lemma~\ref{lem: algchaincomp noncrossing} the obtained equation $u_2 = v_2$ has a solution $\solution_2$
such that $\solution_2(u_2) = \algtreechaincomp(\Gamma_1,\solution_1'(u_1))$,
in particular $|\solution_2(u_2)| \leq |\algtreechaincomp(\Gamma_1,\solution_1'(u_1))| \leq N$.

Now consider the maximal chains in \rownanie{} that are formed only by letters from $\Sigma$,
i.e.\ a context variable denotes the beginning or the end of such a chain, let $c$ denotes the number of such chains.
Then
\begin{equation}
	\label{eq: number of chains}
c \leq\frac{n_0}{2} + n_{\geq 2} + n + \frac{kn}{2} + \frac{1}{2} \enspace .
\end{equation}
Indeed: consider any such chain and the bottom unary letter in it.
Then its child is labelled with either a letter, a context variable or a variable.
Similarly, consider the topmost letter in any such chain, then the father is labelled with either a letter of arity at least $2$ or by
a context variable or nothing at all, when this node is a root.
(Note that without loss of generality we may assume that in \rownanie{} at most one root is labelled with a unary letter: if both
$u= a u'$ and $v= av'$ then we can simply replace \rownanie{} with `$u' = v'$', if their first letters are different then the equation
is trivially not satisfiable.)
Summing up those two estimations we get that
\begin{align*}
2c
	&\leq
\underbrace{n_0' + n_{\geq 2}' + kn + n}_{\text{nodes below}} + \underbrace{n_{\geq 2}' + n}_{\text{nodes above}} + \underbrace{1}_{\text{possible root}}\\
	&\leq
n_0 + 2n_{\geq 2} + 2n + kn + 1 \enspace ,
\end{align*}
which yields~\eqref{eq: number of chains}.

Let $\Gamma$ denote the set of unary letters in \rownanie.
Then there is a partition of $\Gamma$ into $\Gamma_1$ and $\Gamma_2$ such that at least $\frac{n_1' - c}{4}$ pairs in $u = v$
are covered by a partition $\Gamma_1$, $\Gamma_2$:
this follows by a randomised argument similar to the one in Claim~\ref{clm: random partition}.
Fix this partition for the remainder of the proof.

We first perform the $\algpop(\Gamma_1,\Gamma_2,\rownaniep[2])$ (obtaining $u_3 = v_3$)
and then $\algpaircompncr(\Gamma_1,\Gamma_2,\rownaniep[3])$.
The former operation introduces at most $2n + kn$ unary letters to the equation,
while the latter compresses at least $\frac{n_1' - c}{4}$ unary letters.
Hence
\begin{equation}
	\label{eq: second compression}
n_0'' = n_1' = n_0 \quad n_1'' \leq \frac{3}{4}n_1' + \frac{c}{4} + 2n + kn \quad n_{\geq 2}'' = n_{\geq 2}' = n_{\geq 2} \enspace .
\end{equation}
Let us elaborate on the estimation for $n_1''$:
\begin{align}
\notag
n_1''
	&\leq
\frac{3}{4}n_1' + \frac{c}{4} + 2n + kn &\text{from~\eqref{eq: second compression}}\\
	\notag
	&\leq
\underbrace{\frac{3n_1}{4} + \frac{3n}{2} + \frac{3kn}{4}}_{3/4 n_1'}
	+
\underbrace{\frac{n_0}{8} + \frac{n_{\geq 2}}{4} + \frac{n}{4} + \frac{kn}{8} + \frac{1}{8}}_{c/4} + 2n + kn
&\text{from~\eqref{eq: after blocks} and~\eqref{eq: number of chains}}\\
\label{eq: n1''}
	&=
\frac{n_0}{8} + \frac{3n_1}{4} + \frac{n_{\geq 2}}{4} + \frac{15n}{4} + \frac{15kn}{8} + \frac{1}{8} &\text{simplification}\enspace .
\end{align}
Observe that after the $\algpop(\Gamma_1,\Gamma_2,\rownaniep[2])$,
by Lemma~\ref{lem: uncrossing partition} the obtained equation $u_3 = v_3$
has a solution $\solution_3$ such that $\solution_3(u_3) = \solution_2(u_2)$ (so also $|\solution_3(u_3)| \leq N$)
and $\Gamma_1,\Gamma_2$ is a non-crossing partition with respect to $\solution_3$.
Then by Lemma~\ref{lem: algpaircomp noncrossing} the following $\algpaircompncr(\Gamma_1,\Gamma_2,\rownaniep[3])$ returns an equation $u_4 = v_4$
which has a solution $\solution_4$ such that $\solution_4(u_4) = \algtreepaircomp(\Gamma_1,\Gamma_2,\solution_3(u_3))$,
in particular $|\solution_4(u_4)| \leq N$.
By Lemma~\ref{lem: almost over a cut}, there is also a solution $\solution_4'$ of $u_4 = v_4$ that uses at most
one letter of each arity that is not in $u_4 = v_4$.

Now, lastly, during the leaf compression we first pop up letters
(i.e.\ we replace some variables by constants) then we pop letters down,
introducing one letter per context variables, so at most $n$ letters,
that are of arity at least $1$ and then again pop up.
As $\solution_4'$ uses at most one letter of each arity that is not in $u_4 = v_4$
we may assume that \algsolveeq{} guesses those letters into $\Gamma_{\geq 1}$ and $\Gamma_0$.
Hence the letters that are popped-up (i.e.\ they replace some variables) are from $\Gamma_0$
and are immediately compressed to their fathers (who are from $\Gamma_1$) during the leaf compression,
so we may ignore the letters that are popped up for the purpose of our estimation.
On the other hand, each leaf labelled with a letter is also compressed, i.e.\ $n_{\geq 2}''$ letters are compressed.
Hence
\begin{align*}
n_0''' &+ n_1''' + n_{\geq 2}'''\\
	\quad &\leq
n_0'' + n_1'' + n_{\geq 2}'' + n - n_0'' &\text{popped up and absorbed letters}\\
	&=
n_1'' + n_{\geq 2}'' + n &\text{simplification}\\
	&\leq
\underbrace{\frac{n_0}{8} + \frac{3n_1}{4} + \frac{n_{\geq 2}}{4} + \frac{15n}{4} + \frac{15kn}{8} + \frac{1}{8}}_{n_1''} + n_{\geq 2} + n &\text{from~\eqref{eq: n1''}}\\
	&=
\frac{n_0}{8} + \frac{3n_1}{4} + \frac{5n_{\geq 2}}{4} + \frac{19n}{4} + \frac{15kn}{8} + \frac{1}{8} &\text{simplification}\\
	&\leq
\frac{n_0}{8} + \frac{3n_1}{4} + \underbrace{\frac{3n_{\geq 2}}{4} + \frac{1}{2}\underbrace{(n_0 + kn-1)}_{\geq n_{\geq 2}}}_{\geq \frac{5}{4}n_{\geq 2}} + \frac{19n}{4} + \frac{15kn}{8} + \frac{1}{8} &\text{from~\eqref{eq: after blocks}}\\
	&<
\frac{3}{4}\Big(n_0 + n_1 + \geq 2\Big) + \frac{19n}{4} + \frac{19kn}{8} &\text{simplification} \enspace.
\end{align*}
Which shows~\eqref{eq: end of a phase} for $f(n,k) = \frac{19n}{4} + \frac{19kn}{8}$ and so ends the proof.
Concerning the satisfiability,
by Lemma~\ref{lem: uncrossing father leaf} after the $\alggenpop(\Gamma_{\geq 1},\Gamma_0,\rownaniep[4])$
the obtained equation $u_5 = v_5$ has a solution $\solution_5$ such that
$\solution_5(u_5) = \solution_4'(u_4)$, hence also $|\solution_5(u_5)| \leq N$, and there is no crossing father-leaf pair $(f,a)$
with $f \in \Gamma_{\geq 1}$ and $a \in \Gamma_0$.
Then, by Lemma~\ref{lem: algchildcomp noncrossing},
the following $\algchildcompncr(\Gamma_{\geq 1},\Gamma_0,\rownaniep[5])$ returns an equation $u_6 = v_6$ with a solution $\solution_6$
such that $\solution_6(u_6) = \algtreechildcomp(\Gamma_{\geq 1},\Gamma_0,\solution_5(u_5))$,
hence $u_6 = v_6$ is satisfiable and it has a solution of size at most $N$.
\end{subequations}
\qedhere
\end{proof}

Now showing Lemma~\ref{lem: technical} follows naturally.

\begin{proof}[proof of Lemma~\ref{lem: technical}]

\begin{itemize}
	\item The bound on number of occurrences of variables follows from Lemma~\ref{lem: owned variables}.
	No context variables are introduced, so there are at most $n$ occurrences of context-variables.
	The bound on the size of the equation follows from Lemma~\ref{lem: equation size}.
	\item The bound on the size of the size-minimal solution after one phase follows from Lemma~\ref{lem: number of phases}
	and Lemma~\ref{lem: equation size}: by the former it is reduced by a factor of $1/4$ during the first subphase
	and by the latter the size of the size-minimal solution does not increase during the second subphase.
	
	Concerning the additional memory usage: storing an equation of length $\Ocomp(nk)$ uses $\Ocomp(nk \log(nk))$ bits.
	Additionally, we need to store the lengths of the popped chains of letters (we store $a^\ell$ as a pair $(a,\ell)$).
	Without loss of generality we can focus on size-minimal solutions,
	for whose those lengths are of size $2^{c(|u| + |v|)}$ for some constant $c$, by Lemma~\ref{lem: exponent of periodicity},
	so each can be encoded using $\Ocomp(|u| + |v|) = \Ocomp(kn)$ bits; there are at most $n$ context variables and $kn$ variables
	(by Lemma~\ref{lem: owned variables}),
	so there are $\Ocomp(kn)$ such prefixes and suffixes, so in total we need $\Ocomp(k^2n^2)$ bits to denote them.
	All other operations increase the space usage by a constant factor only.
	\item The bound on the arity of letters is an easy observation, similar to the one in Lemma~\ref{lem: keeping arity low}:
	no operation introduces letters of arity greater than the letters already in the context equation. \qedhere
\end{itemize}
\end{proof}

\subsubsection*{Acknowledgements}
I would like to thank Jan Otop and Manfred Schmidt-Schau\ss{} for introducing me to the topic
and for the question whether recompression generalises to the context unification.


\begin{thebibliography}{10}

\bibitem{Comon98}
Hubert Comon.
\newblock Completion of rewrite systems with membership constraints. {P}art
  {I}: Deduction rules.
\newblock {\em J. Symb. Comput.}, 25(4):397--419, 1998.

\bibitem{Comon98a}
Hubert Comon.
\newblock Completion of rewrite systems with membership constraints. {P}art
  {II}: Constraint solving.
\newblock {\em J. Symb. Comput.}, 25(4):421--453, 1998.

\bibitem{Diekertfreegroups}
Volker Diekert, Claudio Guti{\'e}rrez, and Christian Hagenah.
\newblock The existential theory of equations with rational constraints in free
  groups is pspace-complete.
\newblock {\em Inf. Comput.}, 202(2):105--140, 2005.

\bibitem{Farmer91}
William~M. Farmer.
\newblock Simple second-order languages for which unification is undecidable.
\newblock {\em Theor. Comput. Sci.}, 87(1):25--41, 1991.

\bibitem{onecontextvariable}
Adria Gasc{\'o}n, Guillem Godoy, Manfred Schmidt-Schau{\ss}, and Ashish Tiwari.
\newblock Context unification with one context variable.
\newblock {\em J. Symb. Comput.}, 45(2):173--193, 2010.

\bibitem{Goldfarb81}
Warren~D. Goldfarb.
\newblock The undecidability of the second-order unification problem.
\newblock {\em Theor. Comput. Sci.}, 13:225--230, 1981.

\bibitem{FCPM}
Artur Je\.z.
\newblock Faster fully compressed pattern matching by recompression.
\newblock In Artur Czumaj, Kurt Mehlhorn, Andrew Pitts, and Roger Wattenhofer,
  editors, {\em ICALP (1)}, volume 7391 of {\em LNCS}, pages 533--544.
  Springer, 2012.
\newblock full version available at
  \href{http://arxiv.org/abs/1111.3244}{http://arxiv.org/abs/1111.3244}.

\bibitem{grammar}
Artur Je\.z.
\newblock Approximation of grammar-based compression via recompression.
\newblock In Johannes Fischer and Peter Sanders, editors, {\em CPM}, volume
  7922 of {\em LNCS}, pages 165--176. Springer, 2013.
\newblock full version available at http://arxiv.org/abs/1301.5842.

\bibitem{fullyNFA}
Artur Je\.z.
\newblock The complexity of compressed membership problems for finite automata.
\newblock {\em Theory of Computing Systems}, 2013.
\newblock accepted and available online
  \href{http://dx.doi.org/10.1007/s00224-013-9443-6}{http://dx.doi.org/10.1007/s00224-013-9443-6}.

\bibitem{onevarlinear}
Artur Je\.z.
\newblock One-variable word equations in linear time.
\newblock In Fedor~V. Fomin, Rusins Freivalds, Marta Kwiatkowska, and David
  Peleg, editors, {\em ICALP (2)}, volume 7966, pages 324--335, 2013.
\newblock full version at http://arxiv.org/abs/1302.3481.

\bibitem{wordequations}
Artur Je\.z.
\newblock {Recompression: a simple and powerful technique for word equations}.
\newblock In Natacha Portier and Thomas Wilke, editors, {\em STACS}, volume~20
  of {\em LIPIcs}, pages 233--244, Dagstuhl, Germany, 2013. Schloss
  Dagstuhl--Leibniz-Zentrum fuer Informatik.
\newblock full varsion available at
  \href{http://arxiv.org/abs/1203.3705}{http://arxiv.org/abs/1203.3705}.

\bibitem{treegrammar}
Artur Je\.z and Markus Lohrey.
\newblock Approximation of smallest linear tree grammar.
\newblock {\em CoRR}, 1309.4958, 2013.
\newblock submitted.

\bibitem{Koscielski}
Antoni Ko\'scielski and Leszek Pacholski.
\newblock Complexity of {Makanin}'s algorithm.
\newblock {\em J. ACM}, 43(4):670--684, 1996.

\bibitem{Levy96}
Jordi Levy.
\newblock Linear second-order unification.
\newblock In Harald Ganzinger, editor, {\em RTA}, volume 1103 of {\em LNCS},
  pages 332--346. Springer, 1996.

\bibitem{birewrite}
Jordi Levy and Jaume Agust\'{\i}-Cullell.
\newblock Bi-rewrite systems.
\newblock {\em J. Symb. Comput.}, 22(3):279--314, 1996.

\bibitem{stratifiedcontextcomplete}
Jordi Levy, Manfred Schmidt-Schau{\ss}, and Mateu Villaret.
\newblock On the complexity of bounded second-order unification and stratified
  context unification.
\newblock {\em Logic Journal of the IGPL}, 19(6):763--789, 2011.

\bibitem{LevyV00}
Jordi Levy and Margus Veanes.
\newblock On the undecidability of second-order unification.
\newblock {\em Inf. Comput.}, 159(1--2):125--150, 2000.

\bibitem{LevyVill00}
Jordi Levy and Mateu Villaret.
\newblock Linear second-order unification and context unification with
  tree-regular constraints.
\newblock In Leo Bachmair, editor, {\em RTA}, volume 1833 of {\em LNCS}, pages
  156--171. Springer, 2000.

\bibitem{reduction}
Jordi Levy and Mateu Villaret.
\newblock Currying second-order unification problems.
\newblock In Sophie Tison, editor, {\em RTA}, volume 2378 of {\em LNCS}, pages
  326--339. Springer, 2002.

\bibitem{Makanin}
G.~S. Makanin.
\newblock The problem of solvability of equations in a free semigroup.
\newblock {\em Matematicheskii Sbornik}, 2(103):147--236, 1977.
\newblock (in Russian).

\bibitem{Marcinkowski97}
Jerzy Marcinkowski.
\newblock Undecidability of the first order theory of one-step right ground
  rewriting.
\newblock In Hubert Comon, editor, {\em RTA}, volume 1232 of {\em LNCS}, pages
  241--253. Springer, 1997.

\bibitem{NiehrenPR97Cade}
Joachim Niehren, Manfred Pinkal, and Peter Ruhrberg.
\newblock On equality up-to constraints over finite trees, context unification,
  and one-step rewriting.
\newblock In William McCune, editor, {\em CADE}, volume 1249 of {\em LNCS},
  pages 34--48. Springer, 1997.

\bibitem{NiehrenPR97}
Joachim Niehren, Manfred Pinkal, and Peter Ruhrberg.
\newblock A uniform approach to underspecification and parallelism.
\newblock In Philip~R. Cohen and Wolfgang Wahlster, editors, {\em ACL}, pages
  410--417. Morgan Kaufmann Publishers / ACL, 1997.

\bibitem{PlandowskiSTOC}
Wojciech Plandowski.
\newblock Satisfiability of word equations with constants is in {NEXPTIME}.
\newblock In {\em STOC}, pages 721--725, 1999.

\bibitem{PlandowskiICALP}
Wojciech Plandowski and Wojciech Rytter.
\newblock Application of {Lempel}-{Ziv} encodings to the solution of word
  equations.
\newblock In Kim~Guldstrand Larsen, Sven Skyum, and Glynn Winskel, editors,
  {\em ICALP}, volume 1443 of {\em LNCS}, pages 731--742. Springer, 1998.

\bibitem{RTAproblem90}
RTA problem list.
\newblock Problem~90.
\newblock
  \href{http://rtaloop.mancoosi.univ-paris-diderot.fr/problems/90.html}{http://rtaloop.mancoosi.univ-paris-diderot.fr/problems/90.html}.

\bibitem{definition2}
Manfred Schmidt-Schau{\ss}.
\newblock Unification of stratified second-order terms.
\newblock Internal Report 12/94, Johann-Wolfgang-Goethe-Universität, 1994.

\bibitem{Schmidt-Schauss98}
Manfred Schmidt-Schau{\ss}.
\newblock A decision algorithm for distributive unification.
\newblock {\em Theor. Comput. Sci.}, 208(1--2):111--148, 1998.

\bibitem{stratifiedcontext}
Manfred Schmidt-Schau{\ss}.
\newblock A decision algorithm for stratified context unification.
\newblock {\em J. Log. Comput.}, 12(6):929--953, 2002.

\bibitem{boundedsecondorder}
Manfred Schmidt-Schau{\ss}.
\newblock Decidability of bounded second order unification.
\newblock {\em Inf. Comput.}, 188(2):143--178, 2004.

\bibitem{contextexponent}
Manfred Schmidt-Schau{\ss} and Klaus~U. Schulz.
\newblock On the exponent of periodicity of minimal solutions of context
  equation.
\newblock In {\em RTA}, volume 1379 of {\em LNCS}, pages 61--75. Springer,
  1998.

\bibitem{contexttwovar}
Manfred Schmidt-Schau{\ss} and Klaus~U. Schulz.
\newblock Solvability of context equations with two context variables is
  decidable.
\newblock {\em J. Symb. Comput.}, 33(1):77--122, 2002.

\bibitem{Schmidt-SchaussS05}
Manfred Schmidt-Schau{\ss} and Klaus~U. Schulz.
\newblock Decidability of bounded higher-order unification.
\newblock {\em J. Symb. Comput.}, 40(2):905--954, 2005.

\bibitem{Schulz90}
Klaus~U. Schulz.
\newblock Makanin's algorithm for word equations---two improvements and a
  generalization.
\newblock In Klaus~U. Schulz, editor, {\em IWWERT}, volume 572 of {\em LNCS},
  pages 85--150. Springer, 1990.

\bibitem{Reinen98}
Ralf Treinen.
\newblock The first-order theory of linear one-step rewriting is undecidable.
\newblock {\em Theor. Comput. Sci.}, 208(1--2):179--190, 1998.

\bibitem{Vorobyov97}
Sergei~G. Vorobyov.
\newblock The first-order theory of one step rewriting in linear {N}oetherian
  systems is undecidable.
\newblock In Hubert Comon, editor, {\em RTA}, volume 1232 of {\em LNCS}, pages
  254--268. Springer, 1997.

\bibitem{Vorobyov98}
Sergei~G. Vorobyov.
\newblock $\forall\exists^*$-equational theory of context unification is
  ${\Pi}_{\mbox{1}}^{\mbox{0}}$-hard.
\newblock In Lubos Brim, Jozef Gruska, and Jir\'{\i} Zlatuska, editors, {\em
  MFCS}, volume 1450 of {\em LNCS}, pages 597--606. Springer, 1998.

\end{thebibliography}
\end{document}